\newcommand{\eps}{\varepsilon}
\newcommand{\slc}{\ensuremath{\textsc{Smooth-}2k\textsc{-Label Cover}}}
\newcommand{\Ex}{\E}
\newcommand{\Langle}{\left\langle}
\newcommand{\Rangle}{\right\rangle}
\newcommand{\bX}{{\bf X}}
\newcommand{\bY}{{\bf Y}}
\newcommand{\bc}{{\bf c}}
\newcommand{\mc}[1]{\ensuremath{\mathcal{#1}}\xspace}
\newcommand{\mb}[1]{\ensuremath{\mathbf{#1}}\xspace}
\newcommand{\tn}[1]{\ensuremath{\textnormal{#1}}\xspace}
\newcommand{\ol}[1]{\ensuremath{\overline{#1}}\xspace}
\newcommand{\wh}[1]{\ensuremath{\widehat{#1}}\xspace}
\title{Hardness of Learning DNFs using Halfspaces}
\author{Suprovat Ghoshal\thanks{Indian Institute of Science. E-mail: \texttt{suprovat@iisc.ac.in}} \and Rishi Saket\thanks{IBM Research, Bangalore. E-mail: \texttt{rissaket@in.ibm.com}}}
\begin{document}
\date{}
\maketitle
\thispagestyle{empty}

\begin{abstract}
	The problem of learning $t$-term DNF formulas (for $t = O(1)$) has been studied extensively in the PAC model since its introduction by Valiant (STOC 1984). A $t$-term DNF can be efficiently learnt using a $t$-term DNF only if $t = 1$ i.e., when it is an {\sf AND}, while even \emph{weakly} learning a $2$-term DNF using a constant term DNF was shown to be \NP-hard by Khot and Saket (FOCS 2008). On the other hand, Feldman et al. (FOCS 2009) showed the hardness of weakly learning a noisy {\sf AND} using a \emph{halfspace} -- the latter being a generalization of an {\sf AND}, while Khot and Saket (STOC 2008) showed that an intersection of two halfspaces is hard to weakly learn using any function of constantly many halfspaces. The question of whether a $2$-term DNF is efficiently learnable using $2$ or constantly many halfspaces remained open.
	In this work we answer this question in the negative by showing the hardness of weakly learning a $2$-term DNF as well as a noisy {\sf AND} using any function of a constant number of halfspaces. In particular we prove the following. 

\medskip
\noindent
	For any constants $\nu, \zeta > 0$ and $\ell \in \mathbbm{N}$, given a distribution over point-value pairs $\{0,1\}^n \times \{0,1\}$, it is \NP-hard to decide whether,

	\begin{itemize}
		\item
	\tn{YES Case.} There is a $2$-term DNF that classifies all the points of the distribution, and an {\sf AND} that classifies at least $1-\zeta$ fraction of the points correctly.
		
\item
	\tn{NO Case.} Any boolean function depending on at most $\ell$ halfspaces classifies at most $1/2 + \nu$ fraction of the points of the distribution correctly.
	\end{itemize}

	\medskip
	\noindent
	Our result generalizes and strengthens the previous best results mentioned above on the hardness of learning a $2$-term DNF, learning an intersection of two halfspaces, and learning a noisy {\sf AND}.

\end{abstract}
\thispagestyle{empty}

\newpage

\setcounter{page}{1}

\section{Introduction}

A boolean function $f: \{0,1\}^n \mapsto \{0,1\}$ can always be represented in a \emph{disjunctive normal form} (DNF) consisting of an {\sf OR} over {\sf AND}s of boolean literals. DNFs are widely studied in several fields in computer science such as complexity theory~\cite{SBI04}~\cite{Alekh05}\cite{AMO15}, learning theory~\cite{Kliv10}~\cite{Lee10}, pseudorandomness~\cite{Baz09}~\cite{GRR13} and combinatorics~\cite{XLZ18}~\cite{LZ19} to name a few. In many scenarios, such as circuit design or neural network training, the \emph{size} - given by the number of terms -- of a DNF is an important consideration, and much work has studied the properties of small DNFs. Contributing to this line of research, our work investigates the learnability of bounded term DNFs in the \emph{probably approximate correct} (PAC) model~\cite{Val84}.

Formally, a \emph{concept class} $\mc{C}$ of boolean functions is efficiently learnable if, given access to random samples from a distribution over point-value pairs realizable by an (unknown) member of $\mc{C}$, there exists a polynomial time randomized algorithm which outputs with high probability a hypothesis (from a \emph{hypothesis class} \mc{H})  which is consistent with the distribution's samples with probability (accuracy) arbitrarily close to $1$. A few useful variants are \emph{proper learning} when the hypothesis is restricted to be from the concept class ($\mc{H} = \mc{C}$), and  
 \emph{weak learning} wherein the accuracy of learning is allowed to be any constant greater than the trivially achievable random threshold of half.

The question of efficient learnability of DNFs was raised in Valiant's celebrated work~\cite{Val84} which introduced PAC learning. While a $1$-term DNF (i.e., an {\sf AND}), is properly learnable, it is $\NP$-hard to properly learn a $t$-term DNF~\cite{VP88}, for any $t \geq 2$. This is not true when allowed more general hypotheses (improper learning): Valiant~\cite{Val84} showed that a $t$-term DNF can be efficiently learnt using a \emph{conjunctive normal form} (CNF) which is an {\sf AND} of {\sf OR}s (clauses), where each clause has at most $t$ literals. 

For learning unbounded term DNFs, Bshouty~\cite{bshouty96} first gave a $2^{O(\sqrt{n \log t})}$-time algorithm which was later improved by Klivans and Servedio~\cite{KS04}. They showed that any DNF can be expressed as a polynomial threshold of degree $O(n^{1/3}\log t)$, and therefore can be learnt using linear programming in time $n^{\tilde{O}(n^{1/3})}$ which is state of the art for learning unrestricted DNF.

On the complexity side, assuming $\NP \not\subseteq \ZPP$, Nock, Jappy and Salantin~\cite{NJS98} showed that any $n^{c}$-sized DNFs cannot efficiently be learnt using $n^{\gamma c + \eta}$-term DNFs, for certain ranges $\gamma,\eta$, which was later strengthened by Alekhnovich et al.~\cite{Alekh08} to any constant value of $\gamma$, assuming $\NP \neq \RP$ thus ruling out proper learning of arbitrary DNFs in polynomial time. Subsequently, Feldman~\cite{Fel09} showed this to hold even with access to membership queries. While these results do not rule out properly learning restricted DNFs, Alekhnovich et al.~\cite{Alekh08} showed that it is \NP-hard to learn a $2$-term DNFs using a $t$-term DNF for any constant $t$. Subsequently, this was strengthened by Khot and Saket~\cite{KS08b} who showed the \NP-hardness of even weakly  
learning $2$-term DNFs using constant term DNFs. In related work, Feldman~\cite{Fel06} and Feldman et al.~\cite{FGKP09} showed the hardness of weak agnostic learning \emph{noisy} AND with an AND i.e., in the presence of a small fraction of adversarially perturbed sample labels, whereas Khot and Saket~\cite{KS08b} showed the same when using a {\sf CNF} with bounded clause width as hypothesis\footnote{This also rules out as hypothesis {\sf AND} of $O(1)$-arity functions, as the latter can be represented as $O(1)$-width {\sf CNF}s.}, complementing Valiant's~\cite{Val84} algorithmic work mentioned above in the non-noise setting.

Linear threshold functions (LTFs), a.k.a. \emph{halfspaces}, given by $\{{\bf x} : \langle {\bf c}, {\bf x}\rangle + \theta > 0\}$, are a natural and well studied class of hypothesis used in machine learning, especially as the output of linear kernel models such as Perceptron and SVMs. They generalize DNFs: a $t$-term DNF is can be represented as an {\sf OR} of $t$ halfspaces. While Valiant's work~\cite{Val84} showed that a $t$-term DNF can be learnt using an intersection of \emph{unbounded} number of halfspaces, using linear programming a $1$-term DNF can be efficiently learnt using one halfspace. A natural question is whether the latter holds for $t = 2$ as well:
\begin{equation}
	\tn{Can a $2$-term DNF be efficiently learnt using two halfspaces?} \label{eqn:ques} 
\end{equation}
In this context Khot and Saket~\cite{KS08} showed that an intersection of two halfspaces -- which captures (the negation of) a $2$-term DNF -- cannot be efficiently weakly learnt using any function of constant number of halfspaces, unless $\NP \neq \RP$. Feldman et al.~\cite{FGRW12} showed the hardness of weakly learning a noisy {\sf AND} using a single halfspace, generalizing previous works of Guruswami and Raghavendra~\cite{GR09} and Feldman et al.~\cite{FGKP09}, which proved the same hardness for learning a noisy halfspace, and results of \cite{Fel06, FGKP09} on the hardness of learning a noisy {\sf AND} with an {\sf AND}.

In our work we answer \eqref{eqn:ques} in the negative, proving the \NP-hardness of learning $2$-term DNF (as well as noisy {\sf AND}) using functions of any constant number of halfspaces. In particular we prove:

\begin{theorem} \label{thm:mainyesno}
	For any constants $\nu, \zeta > 0$ and $\ell \in \mathbbm{N}$, the following holds. Given a distribution\footnote{The distribution in our reduction is explicitly described by a polynomial sized set of point-label pairs.} $\mc{D}$ over $\{0,1\}^n \times \{0,1\}$, it is \NP-hard to distinguish between the two following cases:
	\begin{itemize}
		\item \tn{YES Case:} There exist two {\sf AND}s $\mc{A}_1$ and $\mc{A}_2$ such that
\medskip

			\noindent
			\begin{minipage}{.5\linewidth}
			\begin{equation}
			\Pr_{(\mb{x}, a) \gets \mc{D}} \left[\left(\mc{A}_1(\mb{x}) \vee \mc{A}_2(\mb{x})\right) = a\right] = 1, \label{eqn:YES1} 
			\end{equation}	
			\end{minipage}
			\begin{minipage}{.5\linewidth}
			\begin{equation}
			\Pr_{(\mb{x}, a) \gets \mc{D}} \left[\mc{A}_1(\mb{x}) = a\right] \geq 1 - \zeta.\label{eqn:YES2}
			\end{equation}
			\end{minipage}
			\medskip

		\item \tn{NO Case:} Any boolean function $f$ depending on at most $\ell$ halfspaces satisfies, 
			$$\Pr_{(\mb{x}, a) \gets \mc{D}} \left[f(\mb{x}) = a\right] \leq \frac{1}{2} + \nu.$$
		
	\end{itemize}

\end{theorem}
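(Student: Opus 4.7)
The plan is to establish Theorem~\ref{thm:mainyesno} via a dictatorship-style gadget reduction from \slc, whose NP-hardness is by now a standard starting point for such learning-theoretic lower bounds. For each vertex $v$ of the \slc-instance with label set $[R]$ we attach a block of $R$ Boolean variables, so the overall point space is $\{0,1\}^{n}$ with $n = R\cdot|V|$. The distribution $\mc{D}$ will be the balanced mixture $\mc{D} = \tfrac{1}{2}\dzero + \tfrac{1}{2}\done$, where each of $\dzero$ and $\done$ is itself a mixture indexed by the hyperedges of the \slc-instance: a sample fixes a hyperedge $e=(v_1,\ldots,v_{2k})$ and draws the bits in the associated blocks from a correlated product distribution engineered so that, for an (unknown) planted labeling $\sigma$, each ``intended'' literal $x_{v_j,\sigma(v_j)}$ takes a prescribed value with controlled probability.

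For the YES case, given a labeling $\sigma$ satisfying every \slc-constraint, we produce two ANDs $\mc{A}_1,\mc{A}_2$, each a conjunction of the intended literals $x_{v_j,\sigma(v_j)}$ over a hyperedge, possibly negated according to two distinct sign patterns. The gadget is built so that (i) every $\dzero$-point contains a falsified literal in each sign pattern, so both ANDs evaluate to $0$, and (ii) every $\done$-point matches at least one of the two sign patterns, so $\mc{A}_1\vee\mc{A}_2$ evaluates to $1$. By biasing $\done$ so that the sign pattern used by $\mc{A}_1$ matches on all but a $\zeta$-fraction of the $\done$-support, we obtain the stronger one-sided bound~\eqref{eqn:YES2}.

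For the NO case, suppose $f=g(H_1,\ldots,H_\ell)$ where each $H_i$ is a halfspace, and assume $f$ classifies more than $\tfrac{1}{2}+\nu$ of $\mc{D}$ correctly. Recast each block as $\{-1,1\}$-valued and write each $H_i$ as $\tn{sign}(\Langle \bc_i, \bX\Rangle - \theta_i)$. A standard influence-based cleanup isolates, for each $H_i$, a constant-sized set $S_i$ of coordinates whose removal leaves the restricted linear form \emph{regular} in the sense of Mossel's invariance principle; by the smoothness of \slc, any labeling assignment read off from $\bigcup_i S_i$ would satisfy a non-trivial fraction of the \slc-constraints, so the \slc-hardness forces these influential coordinates to make no meaningful contribution. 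With them peeled off, an $\ell$-dimensional invariance principle replaces the Bernoulli marginals of $(H_1,\ldots,H_\ell)$ by Gaussians with matching second moments, and the gadget is constructed so that the resulting $\ell$-dimensional Gaussian vector has \emph{identical} distribution under $\dzero$ and $\done$. Consequently, no post-processing $g$ of these halfspaces can discriminate the two classes with non-vanishing advantage, contradicting the $\nu$-advantage assumption.

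The principal technical hurdle is the joint design of $\dzero$ and $\done$: the construction must inject enough asymmetry at the Boolean level to witness the strong one-sided completeness~\eqref{eqn:YES2} via a specific dictator-AND, yet the pairwise covariance profile of the block coordinates must match on $\dzero$ and $\done$ so that the $\ell$-dimensional invariance-plus-Gaussian-matching argument collapses every Boolean combination of $\ell$ halfspaces to a constant function in the limit. Resolving this tension, and lifting the single-halfspace Gaussian-matching technique of~\cite{FGRW12} to arbitrary Boolean functions of $\ell$ halfspaces via a multidimensional invariance principle, is where the bulk of the argument lies.
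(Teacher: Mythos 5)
Your high-level framing (dictatorship gadget from \slc, blocks of $R$ variables per vertex, influential-coordinate extraction feeding back into a labeling) matches the paper's setup. The completeness sketch is also in the same spirit. However, the core of your soundness argument relies on a mechanism the paper explicitly rules out. You propose to design the gadget so that, after peeling off influential coordinates, the first two moments of the remaining Boolean coordinates \emph{match} under $\dzero$ and $\done$, so that an $\ell$-dimensional invariance principle replaces $(H_1,\ldots,H_\ell)$ by a Gaussian vector whose law is \emph{identical} under the two halves, collapsing any $g(H_1,\ldots,H_\ell)$ to a useless classifier. This is essentially the FGRW12 route, and the paper points out that it is impossible here: the YES case demands that some $2$-clause CNF $\mc{C}_1 \wedge \mc{C}_2$ classify $\mc{D}$ \emph{perfectly}, which forces the product of the two corresponding linear forms to be identically zero on $\dzero$ and strictly positive on $\done$; no construction can have this property while simultaneously making the relevant moments of $\dzero$ and $\done$ match to the order an invariance principle would need. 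FGRW12 got away with moment matching precisely because their concept class is \emph{noisy} {\sf AND}, which tolerates the error introduced by the moment-balancing noise; a $2$-term DNF tolerates none. Moreover, even setting aside that obstruction, FGRW12's argument only controls the deviation of the \emph{bias} of a single halfspace between $\dzero$ and $\done$, which suffices for $\ell=1$ but not for Boolean combinations of $\ell$ halfspaces — you would need control of the joint law, which the bias bound does not give you, and which is exactly why your proposal reaches for a multidimensional invariance principle it cannot actually invoke.

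What the paper does instead is qualitatively different and is the step your proposal is missing. Rather than trying to make the two restricted distributions look alike, the paper builds an explicit coupling $\wh{\mc{D}}$ of $\mc{D}^e_0$ and $\mc{D}^e_1$ (Figure \ref{fig:conddist}) that agrees on a large ``noise'' set of coordinates, and then bounds the \emph{expected pointwise disagreement} $\E_{\wh{\mc{D}}}\bigl[\,|{\rm pos}(h(\bX^1,\bY^1)) - {\rm pos}(h(\bX^0,\bY^0))|\,\bigr]$ for a single halfspace $h$ extracted by averaging over the $\ell$ threshold functions on which $f$ depends. The pointwise bound, unlike a bias bound, survives arbitrary post-processing $g$, which is how $\ell > 1$ is handled. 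The engine is not an invariance principle but critical-index truncation (Lemma \ref{lem:trunc}), a structural dichotomy (Lemma \ref{lem:dec-struct} / Lemma \ref{lem:main-struct}) saying that either the truncated coefficients yield a good \slc-labeling or the mass on the noise set dominates, and then anti-concentration from the Littlewood--Offord--Erd\H os lemma combined with Berry--Esseen and a Chebyshev bound on the variance of $h(\bX^1,\bY^1) - h(\bX^0,\bY^0)$. To repair your argument you would need to replace the ``match the Gaussian laws'' step with something in this vein — a coupling plus a pointwise deviation bound — because the Gaussian laws cannot be made to match here.
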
 

Our result strengthens and generalizes the hitherto best results on the hardness of learning (i) constant term DNFs using constant term DNFs,  (ii) noisy {\sf AND}s using halfspaces, and (iii) intersections of two halfspaces using constantly many halfspaces, given in the works of \cite{KS08b}, \cite{FGRW12} and \cite{KS08}. An interesting feature of our result is that the hardness results for $2$-term DNF as well as noisy {\sf AND} and \emph{simultaneously} shown through the same hard instance satisfying \eqref{eqn:YES1}  and \eqref{eqn:YES2} in Theorem \ref{thm:mainyesno}. We note that our result regarding $2$-term DNF is tight in terms of the degree of the hypothesis threshold function: any $2$-term DNF can be efficiently learnt using a single quadratic threshold. More generally, by representing the negation of a $t$-term DNF as the sign of a product of $t$ linear forms, a $t$-term {\sf DNF} is learnable using a degree-$t$ \emph{polynomial threshold function} (PTF) in time $n^{O(t)}$.

\subsection{Other Related Work}

The problem of learning DNFs has also been studied in distributional settings. In particular, Verbeugt~\cite{Ver90} gave a quasi-polynomial time algorithm for learning DNFs under the uniform distribution on the hypercube. Blum et al.~\cite{blum94} gave a polynomial time algorithm for weakly learning DNFs under the uniform distribution using membership queries. This was subsequently improved on by Jackson~\cite{jackson97} who gave a polynomial time algorithm for strongly learning DNFs in the same setting. There have been several other works which have studied the problem of learning DNFs under a wide range of alternative restricted settings \cite{BMDS03}\cite{Serv04}\cite{KST09}\cite{Feld12}.

There has been extensive work on algorithms which learn concepts using halfspaces as the output hypothesis class such as Perceptron~\cite{Rosen58}, SVMs~\cite{Vap98} etc., as well as on algorithmic results for learning more general concepts. Klivans et al.~\cite{KDS04} showed that one can learn arbitrary functions of constant number of halfspaces in quasi-polynomial time under the uniform distribution. Diakonikolas et al.~\cite{DHKRST10} gave an algorithm that learns  degree-$d$ PTFs in time $2^{O_\epsilon(d^2)}$ under the uniform distribution over the hypercube. Recently Gottlieb et at.~\cite{GKKN18} gave an algorithm for PAC learning intersections of $t$-halfspaces with intersections of $t\log t$ halfspaces with margin $\gamma$ in time $n^{O(1/\gamma^2)}{\rm poly}(t)$. On the hardness side, Diakonikolas et al.~\cite{DOSW11} improved on the works of \cite{GR09,FGKP09} showing the intractability of weakly learning a noisy halfspace using degree-$2$ PTFs, which was generalized to rule out all constant degree PTFs as hypotheses by Bhattacharyya et al.~\cite{BGS18}. 

There are also hypothesis independent hardness of learning results for learning (relatively) small DNFs. In \cite{daniely16}, Daniely showed that unless there exist efficient algorithm for strongly refuting $K$-XOR formulas for certain clause densities, there does not exist any polynomial time algorithm for learning $t$-term DNFs (with $t = \omega(\log n)$), using any polynomial time evaluatable hypothesis class. Earlier, Klivans and Sherstov~\cite{KS09} had shown that there is no efficient algorithm for learning intersection of $n^\epsilon$-halfspaces using unrestricted hypothesis classes, under different cryptographic hardness assumptions. Such a result is not possible for $O(1)$-term DNF as it is efficiently learnable using a {\sf CNF} (by \cite{Val84}) and by an $O(1)$-degree PTF as observed earlier in this section. 

On this note, we remark that the result of Applebaum, Barak and Xiao~\cite{ABX08} shows that hypothesis independent hardness of learning results assuming $\P \neq \NP$ are unlikely without making significant breakthroughs in complexity theory. Therefore, any complexity theoretic hardness of learning result for noisy {\sf AND} will probably require some restriction of the hypothesis. While our result rules out functions of constantly many halfspaces as hypotheses, it remains an important open problem to show the same for (functions of) polynomial thresholds.

\subsection{Overview of the Reduction}

Our reduction is from a variant of Label Cover and uses the standard template of a suitably constructed \emph{dictatorship test} defined over a set of coordinates. This test gives a distribution over point-value pairs when applied to the coordinates corresponding to each edge (or a local collections of edges) of the Label Cover, and the union of these distributions constitutes the instance of the learning problem. Roughly speaking, the distribution for an edge should satisfy (i) (completeness) any matching labeling should yield a good classifier from the concept class, and (ii) (soundness) any good enough hypothesis should yield a matching labeling to the vertices of the edge.  For ease of exposition in this section, we shall not elaborate on our hardness for learning noisy {\sf AND}, and focus on the $2$-term DNF concept class. Furthermore, instead of $2$-term DNF we shall consider $2$-clause CNF (these are essentially equivalent: negating one yields the other).

We begin by describing in Figure \ref{fig:basic-test-0} an elementary dictatorship test -- a generalized version of which was used by \cite{KS08b} to rule out weakly learning a $2$-clause CNF using constant clause CNF (equivalently a $2$-term DNF with constant term DNF). Let us denote by $\mc{I}$ the generated distribution, and by $\mc{I}_0$ and $\mc{I}_1$ its restrictions to the $0$ and $1$ valued points respectively. 

\begin{figure}[h!]
	\begin{mdframed}
		{\bf Coordinate Space}: Consists of variables $\bX = \{X_i\}^{M}_{i = 1}$ and $\bY_{r} = \{Y_{r, i}\}^{M}_{i =1}$, for each $r \in [k]$. 
		
		{\bf Test}: Sample the  $a \overset{\rm u.a.r.}{\sim} \{0,1\}$, and sample an $a$-labeled point as follows. \\
		Sample $s \overset{\rm u.a.r.}{\sim} [k]$. 
		For every $i \in [M]$,.
		\begin{enumerate}
			\item If $a = 1$, set $X_i = Y_{s, i} = 1$. 
			\item If $a = 0$, then sample $r \overset{\rm u.a.r.}{\sim} [k]$. W.p. $1/2$ set $X_i = 1,Y_{r,i} = 0$ and w.p. $1/2$ set $X_i = 0,Y_{r,i} = 1$.
			\item Set the rest of the variables to $0$.
		\end{enumerate}
		Output $(\bX, \bY)$ with label $a$.
	\end{mdframed}
	\caption{Distribution $\mc{I}$}
	\label{fig:basic-test-0}
\end{figure}

For the completeness, observe that for every $i \in [M]$, the CNF:\  $X_i \wedge (\vee_{r\in [k]}Y_{k, i})$ classifies all the points of $\mc{D}$. On the other hand, it can be shown that any $t$-clause CNF that classifies $1/2 + \Omega(1)$ fraction of the points correctly yields two different $\bY_{r}$ and $\bY_{r'}$  that can be \emph{decoded} into a common element of $[M]$. The analysis in \cite{KS08b} used fairly simple structural and probabilistic arguments and we refer the reader to it for the detailed proof. 

Unfortunately, it is easy to see that the above distribution fails with halfspaces as hypotheses because of the following simple observation: the first moment of each of the variables $\{X_i\}^{M}_{i = 1}$ under $\mc{I}_1$ is $1$, while it is $1/2$ under $\mc{I}_0$. Since  $\{X_i\}^{M}_{i = 1}$ are independent Bernoulli variables in both cases, by straightforward concentration this difference in first moments can be leveraged by a non-dictatorial halfspace (whose linear form simply sums up the $X_i$s) to classify the points with high probability.

Another issue that impairs $\mc{I}$ from working for halfspaces is the lack of \emph{noise}. Consider a linear form whose coefficient for each $X_i$ is $2^i$ and for each $\{Y_{r, i}\}^{k}_{r =1}$ is $(-2^i)$. Such a linear form always evaluates to $0$ on $\mc{I}_1$ while it is always non-zero on a $\mc{I}_0$, and therefore this instance admits a good classifier using an intersection of two halfspces. This pathological example also illustrates the complications arising out of coefficient vectors that are \emph{not} \emph{regular} i.e., they contain large sequences of geometrically decreasing coefficients.  

Our test -- represented  by the distribution $\mc{D}$ and its restrictions $\mc{D}_0$ and $\mc{D}_1$ -- is designed to fix the above shortcomings. Firstly, we ensure by having $k$ blocks of the $\bX$ variables as well, that the individual marginal distribution of any variable is the same for both, $\mc{D}_0$ and $\mc{D}_1$. In order to introduce noise, a large common randomly chosen set of variables are sampled i.i.d Bernoulli, under both $\mc{D}_0$ and $\mc{D}_1$. Restricted to each $i \in [M]$, this set resides either in $\bX$ or $\bY$ depending on which side holds non-zero variables under $\mc{D}_0$. Furthermore, the test also incorporates Label Cover projections $[M] \to [m]$, so that all the variables corresponding to those labels projecting to some $j \in [m]$ are sampled together, and independently of those  projecting to $j' \neq j$. In order to bound the variance of the difference of samples from $\mc{D}_0$ and $\mc{D}_1$, after fixing the common noise set, we replace each variable $Z$ with a collection of $Q$ variables $\{Z_q\}_{q=1}^Q$, sampling them u.a.r. from $\{0,1\}^Q$ if the $Z$ was part of the noise set, u.a.r. from\footnote{Here ${\bf e}_q$ is the $Q$-length indicator vector of the $q$th coordinate, $q\in [Q]$.} $\{{\bf e}_1, \dots, {\bf e}_Q\}$  if $Z$ contributed to the difference between $\mc{D}_0$ and $\mc{D}_1$, and are each set to $0$ if $Z$ was set to $0$. The parameter $Q$ is taken to be much larger than the pre-image size of the Label Cover projections.

The analysis of a coefficient vector utilizes the notion of  its \emph{critical index} which was introduced by Servedio~\cite{Serv06}, and also used in the above mentioned work of \cite{FGRW12}. As part of their analysis, \cite{FGRW12} proved the invariance of halfspaces with regular coefficients under distributions with matching first to third moments. They apply this to $\mc{D}_0$ and $\mc{D}_1$ ensuring that their appropriate moments are matched. In our case however, this is impossible for points classified by a $2$-clause CNF, since the product of the two linear forms representing the CNF is always zero under $\mc{D}_0$, and positive under $\mc{D}_1$. The invariance shown by \cite{FGRW12} only bounded the deviation of the biases of halfspaces under $\mc{D}_0$ and $\mc{D}_1$, and thus their analysis only rules out single halfspaces as hypothesis. In contrast, our work bounds the expected point-wise deviation allowing us to extend our result to  functions of constantly many halfspaces. For technical considerations, as in \cite{FGRW12} we combine our test with \emph{Smooth} Label Cover whose (smoothness) property guarantees that the projections for most its edges behave as bijections when restricted to fixed small sets of labels for its vertices.

At a high level our analysis proceeds as follows: we first truncate the coefficient vector to a small irregular part. Subsequently, our analysis bounds the variance of the difference between $\mc{D}_0$ and $\mc{D}_1$ given a noise set which we show picks up enough mass with high probability. Finally, we directly apply anti-concentration on the noise variables to bound the deviation between $\mc{D}_0$ and $\mc{D}_1$. In the rest of this section we present a simplified version of the our test distribution, and informally describe its analysis. A slightly modified version (with differently biased random choices) yields the desired hardness result for noisy {\sf AND} as well.

\subsection{Simplified Distribution and Sketch of Analysis}
Consider the distribution $\mc{D}$ given in Figure \ref{fig:basic-test-2}.
The completeness case is straightforward. Indeed, fix any $i^* \in [M]$ and the following $2$-clause CNF: $f^*({\bf X}, {\bf Y}) := \left(\bigvee_{r \in [k]}\bigvee_{q \in [Q]} X_{r,i^*,q} \right) \wedge \left(\bigvee_{r \in [k]}\bigvee_{q \in [Q]} Y_{r,i^*,q} \right)$. Let $j^*$ be such that $i \in B_{j^*}$. If $a = 0$, then  $b_{j^*} = 0$ which implies ${\bf Y}^{(j^*)} = {\bf 0}$, and $b_{j^*} = 1$ which implies ${\bf X}^{(j^*)} = {\bf 0}$. Therefore, either $\bigvee_{r \in [k]}\bigvee_{q \in [Q]} X_{r,i^*,q}  = 0$, or $\bigvee_{r \in [k]}\bigvee_{q \in [Q]} Y_{r,i^*,q} = 0$, which means that $f^*  = 0$. On the other hand if $a = 1$, then there are $r, r'$ s.t. ${\bf X}_{r, i^*}, {\bf Y}_{r',i^*} \in   \{{\bf e}_1,{\bf e}_2,\ldots,{\bf e}_Q\}$, implying that $f^* = 1$.
\begin{figure}[h!]
	\begin{mdframed}
		{\bf Coordinate Space}: Given a partition of $[M]$ into blocks $B_1,B_2,\ldots,B_m$, each of size $d$. For each $j \in [m]$, introduce a set of $dk$ vector-variables $\bX^{(j)} :=\bigoplus_{r \in [k],i \in B_j}\bX_{r,i}$ and $\bY^{(j)} := \bigoplus_{r  \in [k], i \in B_j}\bY_{r,i}$. For each $r \in [k]$ and $i \in B_j$, let $\bX_{r,i} = \left\{X_{r,i,q}\right\}_{q \in [Q]}$ and $\bY_{r,i} = \left\{Y_{r,i,q}\right\}_{q \in [Q]}$ 
		
		{\bf Test}: Sample $a \sim \{0,1\}$ uniformly at random to denote the value of the output point. \\
		For every $j \in [m]$ independently do the following.
		\begin{enumerate}
			\item Sample $b_j  \in \{0,1\}$ uniformly at random.
			\item Sample a $k/2$ ($k$ is chosen to be even) sized subset $S_j \overset{\rm u.a.r}{\sim} {[k] \choose {k/2}}$.
			\item If $b_j = 0$, for every $r \not\in S_j$, $i \in B_j$ sample ${\bf X}_{r,i} \overset{\rm u.a.r}{\sim} \{0,1\}^Q$. Otherwise, if $b_j = 1$, for every $r \not\in S_j$, $i \in B_j$  sample ${\bf Y}_{r,i} \overset{\rm u.a.r}{\sim}\{0,1\}^Q$.
			\item Independently sample $r,r'\overset{\rm u.a.r}{\sim}[k]\setminus S_j$.
			\item If $a = 0$, then with probability $1/k$ do the following:  (i) if $b_j = 0$, then for every $i \in S_j$, sample $\bX_{r,i} \overset{\rm u.a.r}{\sim} \{{\bf e}_1,{\bf e}_2,\ldots,{\bf e}_Q\}$, otherwise (ii) if $b_j = 1$, for every $i \in S_j$, sample $\bY_{r,i} \overset{\rm u.a.r}{\sim} \{{\bf e}_1,{\bf e}_2,\ldots,{\bf e}_Q\}$.
			\item If $a = 1$, then for every $i \in B_j$ sample $\bX_{r,i},\bY_{r',i} \overset{\rm u.a.r}{\sim} \{{\bf e}_1,{\bf e}_2,\ldots,{\bf e}_Q\}$.
		\end{enumerate}

			Set the rest of the variables to $0$. Output the point $({\bf X}, {\bf Y})$ with label $a$.
			\end{mdframed}
	\caption{Distribution $\mc{D}$}
	\label{fig:basic-test-2}
\end{figure}

For the soundness assume that there is some function $f$ of $\ell$ halfspaces ${\rm pos}(h_s({\bf X}, {\bf Y}))$, $s = 1, \dots, \ell$ that classifies $1/2 + \nu$ fraction of the points of $\mc{D}$ correctly, where ${\rm pos}(\cdot)$ is the sign function. Using the definitions related to the critical index~\cite{Serv06} (see Section \ref{sec:criticalindex}) we define the sets $C_{X, s, r}$, $C_{X, s,r}^{\leq K}$, and $I_{X, s,r}$, letting ${{\bf c}}$ to be the coefficient vector of $h_s$ corresponding to the variables ${\bf X}_r$, as follows:
\begin{itemize}
	\item $C_{X, s, r}$ is the set $C_\tau({{\bf c}})$, and $C_{X, s,r}^{\leq K} \subseteq C_{X, s, r}$ are the top $K$ elements of $C_{X, s, r}$ given by $C^{\leq K}_\tau({{\bf c}})$
	\item  $I_{X, s,r} := C^{\leq K}_{X, s,r} \cup \{i \in [M]\setminus C_{X, s,r} : \|{{\bf c}}_i\|^2 > (1/d^8)\sum_{i' \in [M]\setminus C_{X, s,r} }\|{{\bf c}}_{i'}\|^2\}$,
\end{itemize}
and similarly $C_{Y, s, r}$, $C_{Y, s,r}^{\leq K}$, and $I_{Y, s,r}$, for $s \in [\ell], r \in [k]$, and some parameters $\tau$ and $K$ that we choose appropriately. Clearly the size of any $I_{X, s,r}$ or $I_{Y, s,r}$ is at most $2(K + d^8)$. Using the smoothness property of the Label Cover instance we can assume that for each $r$
\begin{equation}
	\left|\left(\bigcup_{s\in[\ell]}I_{X, s,r}\right) \cap B_j \right| \leq 1, \tn{\ \ and \ \ } \left|\left(\bigcup_{s\in[\ell]}I_{Y, s,r}\right) \cap B_j \right| \leq 1 \tn{\ \ for all\ }j  \in [m]. \label{eqn:uniqueproj}
\end{equation}
Further we may assume that no two subsets from $\{C^{\leq K}_{X, s, r}, C^{\leq K}_{Y, s, r} : s \in [\ell], r\in [k]\}$ can have indices from the same block $B_j$. Otherwise using \eqref{eqn:uniqueproj}, there are two distinct tuples $(Z_1, r)$ and $(Z_2, r')$  where $Z_1, Z_2 \in \{X, Y\}$ such that $C^{\leq K}_{Z_1, s, r}$ and $C^{\leq K}_{Z_2, s', r'}$ share indices from a common block. This yields a good labeling. This, along with \eqref{eqn:uniqueproj} implies that any $B_j$ can have at most one element from $\bigcup_{s\in[\ell], r\in[k]}(C^{\leq K}_{X, s, r}\cup C^{\leq K}_{Y, s, r})$. 

{\bf Pairing Distribution.} 
Using the description of $\mc{D}$ along with the above observations one can construct a distribution $\wh{\mc{D}}$ over $(({\bf X}^0, {\bf Y}^0),  ({\bf X}^1, {\bf Y}^1))$ such that its marginals are $\mc{D}_0$ and $\mc{D}_1$, as follows. 
\begin{itemize}
	\item Sample the subsets $\{S_j\}_{j\in [m]}$, and for all $j \in [m]$ such that $B_j$ contains some element (at most one from above) from $\bigcup_{s\in[\ell], r\in[k]}(C^{\leq K}_{X, s, r}\cup C^{\leq K}_{Y, s, r})$, sample $b_j$. 
	\item Sample the values of all  ${\bf X}_{r, i}$ s.t. $i \in C^{\leq K}_{X, s, r}$ for some $s \in [\ell]$, and all ${\bf Y}_{r, i}$ s.t. $i \in C^{\leq K}_{Y, s, r}$ for some $s \in [\ell]$. Now sample the rest of the $b_j$s. 
	\item Sample values of the rest of (\emph{noise variables}) ${\bf X}_{r,i} \overset{\rm u.a.r}{\sim} \{0,1\}^Q$ s.t. $r \not\in S_j, b_j = 0$ and  ${\bf Y}_{r,i} \overset{\rm u.a.r}{\sim} \{0,1\}^Q$ s.t. $r \not\in S_j, b_j = 1$.
	\item Finally, conditioned on all these fixings, sample $({\bf X}^0, {\bf Y}^0)\gets \mc{D}_0$ and $({\bf X}^1, {\bf Y}^1)\gets \mc{D}_1$. 
\end{itemize}
		Note that this distribution is independent of the choice of any specific halfspace ${\rm pos}(h_s({\bf X}, {\bf Y}))$, which is crucial to the argument. From the goodness of the classifier $f$ and an averaging argument we obtain that there exists one out of the $\ell$ halfspaces, ${\rm pos}(h_{s^*}({\bf X}, {\bf Y}))$ such that,
\begin{equation}
	\E_{\wh{\mc{D}}}\left[\left|{\rm pos}(h_{s^*}({\bf X}^1, {\bf Y}^1)) - {\rm pos}(h_{s^*}({\bf X}^0, {\bf Y}^0))\right|\right] \geq \nu/2\ell. \label{eqn:pointwisedev1}
\end{equation}

{\bf Truncation Step.} This step uses the property that squared mass corresponding to the indices in $C_{X, s^*,r}^{\leq K}$ is much larger than that in $C_{X, s^*,r}\setminus C_{X, s^*,r}^{\leq K}$ and similarly for $C_{Y, s^*,r}^{\leq K}$ and $C_{Y, s^*,r}$, $r \in [k]$. Using this we show that in the coefficient vector of $h_{s^*}$ corresponding to  some ${\bf X}_r$ can be truncated by zeroing out those coefficients corresponding to $C_{X, s^*,r}\setminus C_{X, s^*,r}^{\leq K}$, while not disturbing \eqref{eqn:pointwisedev1} appreciably. This is obtained by bounding the deviation in the value of $h_{s^*}$ due to this zeroing out, and showing that this is overwhelmed by the anti-concentration of the noisy variables corresponding to $C_{X, s^*,r}^{\leq K/4}$. Our anti-concentration bound leverages the \emph{Littlewood-Offord-Erd\H os} lemma (see Lemma~\ref{lem:LO}), as well as the standard Berry-Esseen bound. 
Doing this truncation for coefficients corresponding to  each ${\bf X}_{r}$ and ${\bf Y}_{r}$, we obtain a linear form $h$ which satisfies \eqref{eqn:pointwisedev1} with $\nu/4\ell$ on the RHS. 
Let the coefficients of $h$ be given by ${\bf c}_X$ and ${\bf c}_Y$ corresponding to ${\bf X}$ and ${\bf Y}$ respectively. Note that truncation implies, $C^{\leq K}_\tau({\bf c}_{X, r}) =  C_\tau({\bf c}_{X, r})$ and $C^{\leq K}_\tau({\bf c}_{Y, r}) =  C_\tau({\bf c}_{Y, r})$, for $r \in [k]$. Call the blocks $B_j$ which are disjoint from all $C_\tau({\bf c}_{X, r})$,  $C_\tau({\bf c}_{Y, r})$ as \emph{regular} blocks. 

{\bf Structural Lemma.} In this we show that unless the truncated coefficient vectors of $h$ satisfy a certain structural property (which we leave unstated in this sketch), the LHS of \eqref{eqn:pointwisedev1} for $h$ is very small, leading to a contradiction. On the other hand, this property leads to a good labeling. The main idea is to show that (assuming this property is not satisfied) the truncation of $({\bf c}_X, {\bf c}_Y)$ implies with high probability over $\wh{\mc{D}}$ that the squared mass corresponding to the noise variables in the regular blocks is a significant fraction of the total squared mass in the regular blocks. The regularity of the blocks is leveraged to apply the Chernoff-Hoeffding inequality to obtain this. On the other hand, the variance of the difference of $h$ between $({\bf X}^0, {\bf Y}^0)$ and $({\bf X}^1, {\bf Y}^1)$ is nearly always much smaller than the squared mass of the noise variables. This uses the fact that our choice of $Q$ is large (in fact we need to set it to $\tn{poly}(d)$). Another application of Berry-Esseen along with Chebyshev's inequality completes the proof.

\subsection{Organization} 
Section \ref{sec:prelim} provides the necessary technical preliminaries required for the paper and describes the Smooth Label Cover problem. In Section \ref{sec:main-red} we give the theorem which states the guarantees of our reduction from Smooth Label Cover, followed by a description of the reduction. We prove the completeness and soundness of the reduction in Sections \ref{sec:completeness} and \ref{sec:soundness}. In Section \ref{sec:trun-crit-index} we prove  the guarantee of the truncation step, and in Section \ref{sec:struct} we prove the main structural lemma used in our soundness analysis. Finally in Appendix \ref{sec:LO} we prove the anti-concentration bound used in the truncation step, using the Littlewood-Offord-Erd\H os lemma as well as the Berry-Esseen theorem.

\section{Preliminaries}				\label{sec:prelim}

Throughout the paper we use ${\rm pos}(x) = \mathbbm{1}\{x \geq 0\}$ to denote the sign function. The following is a well known quantitative form of the central limit theorem.
\begin{theorem}[Berry-Esseen Theorem~\cite{DDon14}]			\label{thm:berry-ess}
	Let $X_1,\ldots,X_n$ be independent random variables with $\Ex[X_i] = 0$ and ${\rm Var}(X_i) = \sigma^2_i$, and assume that $\sum_{i \in [n]} \sigma^2_i = 1$. Let $\gamma := \sum_{i \in [n]} \|X_i\|^3_{L_3}$. Then
	\begin{equation}
		\sup_{t \in \mathbbm{R}} \left|\Pr_{X_1,\ldots,X_n}\left[\sum_{i \in [n]} X_i \leq t\right] - \Phi(t)\right| \leq c\gamma \label{eqn:BerryEsseen}
	\end{equation} 
	where $c$ is a universal constant, and $\Phi(\cdot)$ is the standard Gaussian CDF. Note that it follows from  above that \eqref{eqn:BerryEsseen} also holds when taking $\gamma := \max_{i \in [n]} (\|X_i\|^3_{L_3}/\sigma^2_i)$. 
\end{theorem}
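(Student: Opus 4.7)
The plan is to prove Berry--Esseen by the classical Fourier-analytic route via Esseen's smoothing inequality, which cleanly yields the Lyapunov-style bound $c\gamma$ once the main characteristic-function estimate is in place. The starting point is Esseen's inequality: for any random variable $S$ with CDF $F$ and characteristic function $\varphi_S$, and with $\Phi$, $e^{-\xi^2/2}$ the standard Gaussian CDF and characteristic function,
$$
\sup_{t \in \mathbbm{R}} |F(t) - \Phi(t)| \;\leq\; \frac{1}{\pi}\int_{-T}^{T} \frac{|\varphi_S(\xi) - e^{-\xi^2/2}|}{|\xi|} \,d\xi \;+\; \frac{c_1}{T}
$$
holds for every $T > 0$, with $c_1$ a universal constant. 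Applied to $S = \sum_i X_i$, independence gives $\varphi_S(\xi) = \prod_i \varphi_i(\xi)$ where $\varphi_i$ is the characteristic function of $X_i$, so the task reduces to bounding $|\prod_i \varphi_i(\xi) - \prod_i e^{-\sigma_i^2\xi^2/2}|$ on a frequency window $|\xi| \leq T$ to be chosen.

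The next step is a per-coordinate estimate. Using $\Ex[X_i] = 0$ and $\Ex[X_i^2] = \sigma_i^2$, Taylor expansion with integral remainder yields
$$
\varphi_i(\xi) \;=\; 1 - \tfrac{1}{2}\sigma_i^2\xi^2 + R_i(\xi), \qquad |R_i(\xi)| \;\leq\; \tfrac{1}{6}|\xi|^3\|X_i\|_{L_3}^3,
$$
and analogously $e^{-\sigma_i^2\xi^2/2} = 1 - \tfrac{1}{2}\sigma_i^2\xi^2 + O(\sigma_i^4\xi^4)$. Lyapunov's inequality gives $\sigma_i^2 \leq \|X_i\|_{L_3}^2$, so on the window $|\xi|\,\|X_i\|_{L_3} \leq 1$ the cubic remainder absorbs the quartic correction, producing a single-coordinate bound $|\varphi_i(\xi) - e^{-\sigma_i^2\xi^2/2}| \leq C\,|\xi|^3\|X_i\|_{L_3}^3$. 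A parallel refinement of the same Taylor expansion yields the uniform decay estimate $|\varphi_i(\xi)| \leq 1 - \tfrac{1}{3}\sigma_i^2\xi^2 \leq e^{-\sigma_i^2\xi^2/3}$ on the same window.

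Third, I would pass from single factors to the product via the telescoping identity $|\prod_i a_i - \prod_i b_i| \leq \sum_i |a_i - b_i|\prod_{k \neq i}\max(|a_k|,|b_k|)$; combined with the two estimates above and the normalization $\sum_i \sigma_i^2 = 1$, this yields
$$
|\varphi_S(\xi) - e^{-\xi^2/2}| \;\leq\; C'\,|\xi|^3\gamma\cdot e^{-\xi^2/3}
$$
on the frequency window. Plugging into Esseen's inequality with $T$ of order $1/\gamma$, the frequency integral contributes $O\bigl(\gamma \int_{\mathbbm{R}} |\xi|^2 e^{-\xi^2/3} d\xi\bigr) = O(\gamma)$ and the tail term $c_1/T$ also contributes $O(\gamma)$, yielding the claimed uniform bound. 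The final remark in the theorem follows from $\gamma = \sum_i \|X_i\|_{L_3}^3 \leq \bigl(\max_i \|X_i\|_{L_3}^3/\sigma_i^2\bigr)\sum_i\sigma_i^2 = \max_i \|X_i\|_{L_3}^3/\sigma_i^2$. The main obstacle is the tension in choosing $T$: large enough that $c_1/T = O(\gamma)$ yet small enough that the per-coordinate cubic bound remains valid across the whole window, which demands a careful interplay between the window width, Lyapunov's inequality, and the sharp form of the Taylor remainder.
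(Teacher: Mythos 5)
The paper does not prove this theorem: it is a classical result cited verbatim from \cite{DDon14} and invoked as a black box in the truncation and anti-concentration arguments (Claim~\ref{cl:trunc1} / Lemma~\ref{lem:block-LO} and Lemma~\ref{lem:anti-conc-2}). So there is no paper proof to compare your argument against, and a complete proof would not belong here in any case.

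On its own merits, your Esseen-smoothing sketch follows the standard Fourier-analytic route and its architecture is sound; the derivation of the final remark via $\sum_i \|X_i\|_{L_3}^3 \le \bigl(\max_i \|X_i\|_{L_3}^3/\sigma_i^2\bigr)\sum_i \sigma_i^2$ is also correct. But the ``main obstacle'' you name at the end is a genuine gap in the sketch, not a routine detail. With $T \sim 1/\gamma$, the per-coordinate cubic estimate $|\varphi_i(\xi)-e^{-\sigma_i^2\xi^2/2}|\lesssim|\xi|^3\|X_i\|_{L_3}^3$ requires $|\xi|\,\|X_i\|_{L_3}\le 1$ across the whole window, which needs $\max_i\|X_i\|_{L_3}\le\gamma$; the hypotheses only give $\max_i\|X_i\|_{L_3}\le\gamma^{1/3}$ even after reducing to small $\gamma$ by the trivial bound. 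Similarly, the claimed decay $|\varphi_i(\xi)|\le e^{-\sigma_i^2\xi^2/3}$ via a cubic remainder needs $|\xi|\,\|X_i\|_{L_3}^3\lesssim\sigma_i^2$, a strictly narrower window than the one you use, and without it the product $\prod_i|\varphi_i|$ may not have the Gaussian tail you integrate against. Complete treatments close this window mismatch by, for instance, controlling $\prod_i|\varphi_i(\xi)|$ through $1-|\varphi_i(\xi)|^2$ (the CF of a symmetrized $X_i$), combining with a preliminary reduction to the regime $\gamma$ below a small constant, and splitting the frequency integral; these steps are exactly where the bulk of the length of a full proof lives.
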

Apart from the above, our anti-concentration bounds also use the following classical result for Bernoulli sums.
\begin{lemma}[Littlewood-Offord-Erd\H os Lemma~\cite{Erdos45}]			\label{lem:LO}
	Let $X_1,X_2,\ldots,X_n$ be i.i.d $\{0,1\}$-Bernoulli Random variables, and let $a_1,a_2,\ldots,a_n \in \mathbbm{R}$ such that for all $i \in [n]$ we have $|a_i| \geq 1$. Then there exists a constant $C> 0$ such that  
	\begin{align*}
	\Pr_{X_1,\ldots,X_n}\left[\left|\sum_{i \in [n]} a_i X_i + \theta \right| \leq 1 \right] \leq \frac{C}{\sqrt{n}}
	\end{align*}
	for any constant $\theta$.
\end{lemma}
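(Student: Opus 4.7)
The plan is to reduce from the $\{0,1\}$-Bernoulli case to the symmetric Rademacher case, recast the problem as counting subsets whose partial sums fall in a short interval, and finish by Sperner's theorem on antichains in the Boolean lattice. This is essentially Erd\H{o}s's original argument.

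First I would linearize by setting $\epsilon_i := 2X_i - 1 \in \{-1,+1\}$, so that the $\epsilon_i$ are i.i.d.\ uniform on $\{-1,+1\}$ and
\[ \sum_{i \in [n]} a_i X_i + \theta \;=\; \tfrac{1}{2}\sum_{i \in [n]} a_i \epsilon_i + \theta', \qquad \theta' := \theta + \tfrac{1}{2}\sum_{i \in [n]} a_i. \]
The event $|\sum_i a_i X_i + \theta| \le 1$ then becomes the event that $\sum_i a_i \epsilon_i$ lies in a fixed closed interval $J$ of length $4$. Because flipping the sign of any $\epsilon_i$ preserves its distribution, I may assume without loss of generality that $a_i \ge 1$ for every $i$. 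Reparametrizing by $S := \{ i : \epsilon_i = +1\}$ and using $\sum_i a_i \epsilon_i = 2\sum_{i \in S}a_i - \sum_i a_i$, the event becomes $\sum_{i \in S}a_i \in J'$ for some closed interval $J'$ of length $2$ that depends only on $\theta$ and the $a_i$. It therefore suffices to prove the combinatorial bound
\[ \#\bigl\{ S \subseteq [n] : \textstyle\sum_{i \in S} a_i \in J'\bigr\} \;\le\; C' \binom{n}{\lfloor n/2 \rfloor}, \]
after which dividing by $2^n$ and using Stirling's asymptotic $\binom{n}{\lfloor n/2 \rfloor} = O(2^n/\sqrt{n})$ yields the required probability bound $C/\sqrt{n}$.

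For the combinatorial count, I would cover $J'$ by $O(1)$ half-open sub-intervals of length strictly less than $1$ (say length $1/2$ each). For each such sub-interval $I$, let $\mathcal{F}_I := \{S \subseteq [n] : \sum_{i \in S} a_i \in I\}$. If $S \subsetneq T$ were both in $\mathcal{F}_I$, then $\sum_{i \in T} a_i - \sum_{i \in S}a_i = \sum_{i \in T \setminus S} a_i \ge |T \setminus S| \ge 1$, contradicting that both sums lie in an interval of length $< 1$. Hence $\mathcal{F}_I$ is an antichain in the Boolean lattice on $[n]$, and Sperner's theorem gives $|\mathcal{F}_I| \le \binom{n}{\lfloor n/2 \rfloor}$. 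Summing over the $O(1)$ sub-intervals yields the desired count.

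The only subtlety is the endpoint behaviour: since $J'$ is closed and we only assume $|a_i| \ge 1$ rather than strict inequality, a naive split into two length-$1$ closed sub-intervals can produce chains of length two whenever some $a_j = 1$ and both endpoints are attained. I would sidestep this entirely by using half-open sub-intervals of length $<1$ as above, absorbing the issue into the universal constant. No further obstacle is anticipated, since both Sperner's theorem and the Stirling bound on the central binomial coefficient are standard.
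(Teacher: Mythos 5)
Your proof is correct: the reduction to Rademacher signs (which legitimately justifies the WLOG $a_i \ge 1$ step), the recasting as counting subsets $S$ with $\sum_{i\in S} a_i$ in a length-$2$ interval, the antichain observation for sub-intervals of length $<1$, and the appeal to Sperner's theorem together with $\binom{n}{\lfloor n/2\rfloor} = O(2^n/\sqrt{n})$ all go through, and you correctly handle the endpoint issue. The paper states this lemma as a cited classical result of Erd\H{o}s and gives no proof of its own, so there is nothing to compare against beyond noting that yours is precisely the standard argument.
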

We also use the following well known bounds of Chernoff-Hoeffding and Chebyshev.
\begin{theorem}[Chernoff-Hoeffding]\label{thm:chernoff}
 Let $X_1,\dots, X_n$ be independent random variables, each bounded as $a_i \leq X_i \leq b_i$ with $\Delta_i = b_i - a_i$ for $i = 1,\dots, n$. Then, for any $t > 0$,
	$$\Pr\left[\left|\sum_{i=1}^n X_i - \sum_{i=1}^n\E[X_i]\right| > t\right] \leq 2\cdot\tn{exp}\left(-\frac{2t^2}{\sum_{i=1}^n\Delta_i^2}\right).$$
\end{theorem}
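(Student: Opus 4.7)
The plan is to follow the standard Chernoff-style argument: apply an exponential Markov inequality to the centered sum, use independence to factor the moment generating function, bound each factor via Hoeffding's lemma for bounded random variables, and then optimize over the free parameter.

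First I would reduce to centered variables. Let $Y_i = X_i - \E[X_i]$, so $\E[Y_i]=0$, $Y_i \in [a_i - \E X_i,\, b_i - \E X_i]$, and the width of this interval is still $\Delta_i$. Set $S = \sum_{i=1}^n Y_i$. For any $\lambda>0$, Markov's inequality applied to $e^{\lambda S}$ yields
\begin{equation*}
\Pr[S > t] \;=\; \Pr\bigl[e^{\lambda S} > e^{\lambda t}\bigr] \;\leq\; e^{-\lambda t}\,\E[e^{\lambda S}] \;=\; e^{-\lambda t}\prod_{i=1}^n \E[e^{\lambda Y_i}],
\end{equation*}
using independence of the $Y_i$ in the last step.

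The main technical step is Hoeffding's lemma, which asserts that if $Y$ is a zero-mean random variable supported on an interval of length $\Delta$, then $\E[e^{\lambda Y}] \leq \exp(\lambda^2 \Delta^2/8)$. I would prove this by writing $Y$, on its support $[\alpha,\beta]$ with $\beta-\alpha=\Delta$, as a convex combination $Y = \tfrac{\beta-Y}{\Delta}\alpha + \tfrac{Y-\alpha}{\Delta}\beta$. By convexity of $e^{\lambda\cdot}$,
\begin{equation*}
e^{\lambda Y} \;\leq\; \frac{\beta - Y}{\Delta} e^{\lambda \alpha} + \frac{Y - \alpha}{\Delta} e^{\lambda \beta}.
\end{equation*}
Taking expectations and using $\E[Y]=0$ gives $\E[e^{\lambda Y}] \leq \tfrac{\beta}{\Delta} e^{\lambda\alpha} - \tfrac{\alpha}{\Delta} e^{\lambda\beta} =: e^{\varphi(\lambda)}$ where $\varphi(\lambda) = \lambda\alpha + \log\bigl(\tfrac{\beta}{\Delta} - \tfrac{\alpha}{\Delta}e^{\lambda\Delta}\bigr)$. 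Computing $\varphi(0)=\varphi'(0)=0$ and verifying $\varphi''(\lambda) \leq \Delta^2/4$ (this is the only real calculation: write $p = -\alpha/(\Delta - \alpha(e^{\lambda\Delta}-1))$-type expression and observe it is of the form $p(1-p)\Delta^2$ with $p\in[0,1]$, so bounded by $\Delta^2/4$), a second-order Taylor expansion yields $\varphi(\lambda) \leq \lambda^2\Delta^2/8$.

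Combining, I obtain
\begin{equation*}
\Pr[S > t] \;\leq\; \exp\!\Bigl(-\lambda t + \tfrac{\lambda^2}{8}\sum_{i=1}^n \Delta_i^2\Bigr).
\end{equation*}
Optimizing the quadratic in $\lambda$ by taking $\lambda^\star = 4t/\sum_i \Delta_i^2 > 0$ gives $\Pr[S > t] \leq \exp\bigl(-2t^2/\sum_i \Delta_i^2\bigr)$. An identical argument applied to the variables $-Y_i$ (which have the same interval widths $\Delta_i$) bounds $\Pr[S < -t]$ by the same quantity. A union bound on the two tails yields the stated two-sided inequality with the leading factor of $2$, completing the proof.

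I do not expect a serious obstacle; the only nontrivial piece is the bound $\varphi''(\lambda) \leq \Delta^2/4$ in Hoeffding's lemma, and even that is a direct calculus exercise once one recognizes the expression as a variance-like $p(1-p)\Delta^2$ quantity under the tilted measure.
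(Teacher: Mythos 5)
Your proof is correct: this is the standard derivation of the two-sided Hoeffding inequality via the exponential Markov bound, Hoeffding's lemma (proved by convexity plus the second-order Taylor bound $\varphi''(\lambda)\leq\Delta^2/4$), optimization at $\lambda^\star=4t/\sum_i\Delta_i^2$, and a union bound over the two tails. The paper states this theorem as a classical result without proof, so there is no alternative argument to compare against; your write-up fills in exactly the textbook proof one would cite.
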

\textbf{Chebyshev's Inequality.} For any random variable $X$ and $t > 0$, $\Pr\left[|X| > t\right] \leq \E[X^2]/{t^2}$.

\subsection{Critical Index}		\label{sec:criticalindex}

Consider a vector of coefficients ${\bf c} \in \mathbbm{R}^{[M] \times [q]}$ such that ${\bf c} = ({\bf c}_i)_{i=1}^M$ and ${\bf c}_i \in \mathbbm{R}^{[q]}$ for each $i \in [M]$. Let $\sigma : [M] \to [M]$ be an ordering such that $\|{\bf c}_{\sigma(1)}\|_2 \geq  \|{\bf c}_{\sigma(2)}\|_2 \geq \ldots \geq \|{\bf c}_{\sigma(M)}\|_2$. For a given $\tau \in (0,1)$, the $\tau$-critical index $i_\tau({\bf c})$ to be the minimum index $i \in [M]$ such that $\|{\bf c}_{\sigma(i)}\|^2 \leq \tau \sum_{i' \geq i}\|{\bf c}_{\sigma(i')}\|^2$. If $i_\tau({\bf c}) = 1$ then ${\bf c}$ is said to be $\tau$-regular. Define $C_\tau({\bf c}) := \left\{\sigma(i) : i < i_\tau({\bf c}) \right\}$, and for any integer $K \in \mathbbm{N}$, $C^{\leq K}_\tau ({\bf c}) = \left\{\sigma(i): 1 \leq i \leq K, i < i_\tau({\bf c}) \right\}$ to denote the set of the first $K$ indices in $C_\tau({\bf c})$ . The following proposition summarizes well known properties of critical indices:

\begin{proposition}[\cite{Serv06}]					\label{prop:crit-ix}
	For the above setting the following condition hold:
	\begin{itemize}
		\item For any $1 \leq i_1 < i_2 \leq i_\tau({\bf c})$, we have $\|{\bf c}_{\sigma(i_2)}\|^2 \leq \frac1\tau(1-\tau)^{i_1 - i_2}\|{\bf c}_{\sigma(i_1)}\|^2$.
		\item The vector $\left({\bf c}_i\right)_{i \in [M]\setminus C_\tau({\bf c})}$ is $\tau$-regular.
	\end{itemize}
\end{proposition}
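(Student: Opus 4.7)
The plan is to control both parts of the proposition through the tail sums $T_i := \sum_{i' \geq i} \|{\bf c}_{\sigma(i')}\|^2$, since the definition of the critical index translates into a clean one-step contraction for this quantity. Concretely, for every index $i < i_\tau({\bf c})$ the defining inequality reads $\|{\bf c}_{\sigma(i)}\|^2 > \tau T_i$, which is equivalent to $T_{i+1} = T_i - \|{\bf c}_{\sigma(i)}\|^2 < (1-\tau)\, T_i$.

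For the first bullet, I would iterate this one-step contraction from $i = i_1$ up to $i = i_2 - 1$. Each index in this range is strictly below $i_\tau({\bf c})$ by the hypothesis $i_2 \leq i_\tau({\bf c})$, so the contraction is legal at every step, giving $T_{i_2} \leq (1-\tau)^{i_2 - i_1}\, T_{i_1}$. I would then sandwich this: on one side $\|{\bf c}_{\sigma(i_2)}\|^2 \leq T_{i_2}$ trivially, and on the other $T_{i_1} < \tau^{-1}\|{\bf c}_{\sigma(i_1)}\|^2$ by applying the critical-index inequality at $i_1 < i_\tau({\bf c})$. Chaining the three estimates yields
\[
\|{\bf c}_{\sigma(i_2)}\|^2 \;\leq\; T_{i_2} \;\leq\; (1-\tau)^{i_2-i_1}\, T_{i_1} \;\leq\; \tfrac{1}{\tau}(1-\tau)^{i_2-i_1}\|{\bf c}_{\sigma(i_1)}\|^2,
\]
which matches the stated inequality (read with the evident intent that the exponent is the positive quantity $i_2 - i_1$).

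For the second bullet, I would simply unpack what $\tau$-regularity of the truncated vector means. Deleting $C_\tau({\bf c}) = \{\sigma(1), \dots, \sigma(i_\tau - 1)\}$ leaves exactly $\{\sigma(i_\tau), \sigma(i_\tau + 1), \dots, \sigma(M)\}$, still ordered by decreasing norm under $\sigma$. So the top element of the restricted vector is $\sigma(i_\tau)$ and its total squared mass is $T_{i_\tau}$. The definition of $i_\tau({\bf c})$ as the \emph{minimum} index with $\|{\bf c}_{\sigma(i_\tau)}\|^2 \leq \tau\, T_{i_\tau}$ is precisely the statement that the truncated vector is $\tau$-regular.

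There is no real obstacle here; the proposition is a direct bookkeeping consequence of the recursion $T_{i+1} < (1-\tau)T_i$ together with the single-index bound $T_i < \tau^{-1}\|{\bf c}_{\sigma(i)}\|^2$. The only subtlety worth flagging in the write-up is verifying the range of validity, namely that every index at which we invoke the critical-index inequality lies strictly below $i_\tau({\bf c})$, which is exactly what the hypothesis $i_1 < i_2 \leq i_\tau({\bf c})$ guarantees.
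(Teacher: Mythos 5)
Your argument is correct; note that the paper itself offers no proof of this proposition, citing it to Servedio [Serv06] as a standard fact, so there is no in-paper proof to compare against. Your treatment via the tail sums $T_i = \sum_{i' \geq i}\|{\bf c}_{\sigma(i')}\|^2$ and the one-step contraction $T_{i+1} < (1-\tau)T_i$ for $i < i_\tau({\bf c})$ is exactly the textbook route: chaining $\|{\bf c}_{\sigma(i_2)}\|^2 \le T_{i_2} \le (1-\tau)^{i_2-i_1}T_{i_1} \le \tau^{-1}(1-\tau)^{i_2-i_1}\|{\bf c}_{\sigma(i_1)}\|^2$ gives the first bullet, and the second bullet is an immediate unwinding of the minimality in the definition of $i_\tau({\bf c})$. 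You also correctly flag that the exponent $i_1 - i_2$ as printed in the proposition is negative and hence gives a vacuous bound; the intended exponent is $i_2 - i_1$, and indeed the paper's later use of the proposition in the proof of Lemma \ref{lem:trunc} (the chain of bounds culminating in $\tau^2\|{\bf c}_{X,v,K/2}\|_2$) relies on the decaying form you derive. One minor point you could make explicit for completeness: the tail sums $T_i$ are computed over the \emph{full} index range $[M]$, and the restricted vector in the second bullet inherits the same $\sigma$-ordering, so its total squared mass is precisely $T_{i_\tau}$; you state this but it is worth spelling out that no reordering occurs upon deletion of $C_\tau({\bf c})$ because $\sigma$ already sorts by decreasing norm.
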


\subsection{Smooth Label Cover}

Our reduction is from the following hypergraph variant of Smooth Label Cover.

\begin{definition}[\slc]
	A $\slc$ instance  $\mc{L}((V_{\mc{L}},E_{\mc{L}}),\\M,m,\{\pi_{e,v}\}_{e \in E_{\mc{L}},v \in e})$ consists of a regular, $2k$-uniform, connected hypergraph with a vertex set $V_{\mc{L}}$, a hyperedge set $E_{\mc{L}}$ and a set of projections $\{\pi_{e,v} : [M] \mapsto [m]\}_{e \in E_{\mc{L}},v \in e}$.
	
	A labeling $\sigma:V_{\mc{L}} \mapsto [M]$ is said to {\em strongly satisfy} a hyperedge $e$ if for every $v,v' \in e$, we have $\pi_{e,v}(\sigma(v)) = \pi_{e,v'}(\sigma(v'))$. It is said to {\em weakly satisfy} a hyperedge $e$ if there exists a distinct pair of vertices $v,v' \in e$ such that $\pi_{e,v}(\sigma(v)) = \pi_{e,v'}(\sigma(v'))$.
\end{definition}

The following theorem gives the hardness of \slc.

\begin{theorem}[\cite{FGRW12}]				\label{thm:slc-hardness}
	There exists an absolute constant $\gamma_0 > 0$ such that for all integer parameters $z$ and $J$, it NP-Hard to distinguish whether an instance $\mc{L}$ of $\slc$ with $M = 7^{(J+1)z}$ and $N = 2^z7^{Jz}$, satisfies, 
	\begin{itemize}
		\item (YES): There exists a labeling $\sigma:V \mapsto [M]$ which strongly satisfies all the hyperedges.
		\item (NO): There is no labeling $\sigma:V \mapsto [M]$ which weakly satisfies more than $2k^22^{-\gamma_0 z}$-fraction of hyperedges.
	\end{itemize}
	Additionally, $\mc{L}$ satisfies the following properties:
	\begin{itemize}
		\item (Smoothness) For every vertex $v \in V$ and for a randomly sampled hyperedge incident on $v$ 
		\[
		\Pr_{e \sim v} \left[ \pi_{e,v}(i) = \pi_{e,v}(j)\right] \leq \frac{1}{J}
		\]
		for any fixed pair of distinct labels $i,j \in [M]$.
		\item For any hyperedge $e \in E$ and vertex $v \in e$, and any label on the smaller side $\alpha \in [m]$,  $|\pi^{-1}_{e,v}(\alpha)| \leq d$, where $d = 4^z$.
	\end{itemize}
\end{theorem}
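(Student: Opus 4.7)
The plan is to obtain this hardness by the standard two-stage template: (i) reduce from gap-3SAT to a basic Label Cover instance with exponentially small soundness via the PCP theorem plus Raz's parallel repetition, and (ii) augment the instance with redundant label copies to enforce smoothness, then lift the $2$-uniform construction to the desired $2k$-uniform hypergraph variant.

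For the first stage, I would start from the NP-hardness of gap-3SAT (Håstad) and express it as a Label Cover instance on a bipartite graph where large-side (clause) vertices take labels over the $7$ satisfying assignments of a clause, small-side (variable) vertices take labels over $\{0,1\}$, and the projections pick out the variable's value from the clause-assignment. Applying $z$-fold Raz parallel repetition amplifies the gap to soundness $2^{-\Omega(z)}$ while raising the alphabets to $7^z$ and $2^z$ respectively; this already explains the factors $7^z$ and $2^z$ appearing in the target parameters.

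For smoothness, I would replace each large-side vertex $v$ by $J$ independent parallel copies of the same 3SAT-subinstance attached to $v$, so the alphabet on the large side becomes $7^{(J+1)z}$ (the original block plus $J$ extra blocks). The hyperedge projection $\pi_{e,v}$ selects one of the $J$ copies uniformly at random; for any two distinct labels $i \neq j \in [M]$ the copies in which they agree form a proper subset, so the probability they project identically is at most $1/J$. The bound $|\pi_{e,v}^{-1}(\alpha)| \leq 4^z$ follows because each $2^z$-label on the small side extends to at most $4^z$ large-side labels in a single block (the $3$SAT clause has at most $4$ satisfying assignments extending a fixed variable value, raised to the $z$-th power via the repetition). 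To go from the $2$-uniform graph to the $2k$-uniform hypergraph, I would form each hyperedge by sampling $2k$ large-side vertices sharing a common small-side neighbor and setting the projections to be the original Label Cover projections to that neighbor; strong satisfiability of a hyperedge corresponds to a consistent labeling of all $2k$ vertices to the common small-side label, while weak satisfiability gives an agreeing pair which by averaging yields a satisfying assignment of the underlying Label Cover at the base rate, preserving the $2^{-\Omega(z)}$ soundness up to the extra factor $2k^2$ from the pair-selection union bound.

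The main obstacle is the soundness analysis: one must verify that a labeling weakly satisfying more than $2k^2 2^{-\gamma_0 z}$ fraction of hyperedges can be decoded into a labeling of the base (post-parallel-repetition) Label Cover that violates its soundness. The $2k^2$ factor comes from picking a random pair $(v,v')$ from each hyperedge and averaging, after which the parallel repetition soundness from Raz (together with ignoring the smoothness-redundant copies by marginalization) yields the contradiction. The smoothness and preimage-size clauses are essentially construction facts that drop out of the redundancy encoding, so the substantive difficulty lies in the careful hyperedge-to-edge soundness transfer and citing Raz's theorem with the correct quantitative exponent.
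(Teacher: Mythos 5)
Theorem~\ref{thm:slc-hardness} is not proved in this paper; it is imported verbatim from \cite{FGRW12} (Feldman, Guruswami, Raghavendra and Wu), which in turn builds on Khot's smooth-PCP machinery. So there is no in-paper argument to compare against, and I can only assess whether your sketch is a plausible reconstruction of the cited source.

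Your scaffold is the right one: a $3$-SAT-based bipartite Label Cover with clause/variable alphabets of sizes $7$ and $2$, $z$-fold parallel repetition to drive soundness to $2^{-\Omega(z)}$, a redundancy encoding to force smoothness, and a $2k$-uniform hypergraph obtained by grouping $2k$ clause-side vertices around a common variable-side neighbor, with the $\binom{2k}{2}$ pairs per hyperedge producing the $2k^2$ factor in the NO case. All the stated parameters ($M=7^{(J+1)z}$, $N=2^z 7^{Jz}$, $d=4^z$, smoothness $1/J$) fall out of this. A few things in your write-up are loose, though. There is an internal count mismatch: ``$J$ independent parallel copies'' versus ``the original block plus $J$ extra blocks''; you need $J{+}1$ blocks on the large side to get $M=7^{(J+1)z}$ and $N=2^z 7^{Jz}$, with the projection applying the base map to one randomly chosen block and passing the other $J$ through verbatim. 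Your smoothness justification (``the copies in which they agree form a proper subset, so the probability is at most $1/J$'') is not quite the right argument: the point is that two distinct tuples can collide under the projection only if they differ in exactly the one block that gets mapped down, which happens with probability at most $1/(J{+}1)\le 1/J$; mere disagreement ``somewhere'' is not enough, since the identity-kept blocks are what force separation. Also note that the actual construction in \cite{FGRW12} is phrased as a single $(J{+}1)z$-fold product in which a random size-$z$ subset of coordinates is projected and the rest kept, rather than your $J{+}1$ copies of a fixed $z$-fold block; the parameters coincide, but if you use your variant you should say explicitly in the soundness decoding that agreement on the identity-kept coordinates is an additional constraint, so soundness of the smoothed instance can only be better than that of the underlying repeated game, after which the $\binom{2k}{2}$-pair averaging and the Raz bound finish the NO case. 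Finally, the theorem also asserts regularity and connectedness of the hypergraph, which your construction does not address; this requires taking the hyperedge distribution to be uniform over the relevant $2k$-tuples from a biregular base graph, a point worth stating. None of these are fatal, but they are the steps that have to be pinned down to turn the sketch into the argument actually given in \cite{FGRW12}.
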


\section{Hardness Reduction}				\label{sec:main-red}

In this section we prove the following theorem.
\begin{theorem}			\label{thm:main-red}
	For any constants $\nu, \zeta \in (0,1/2)$ and $\ell \in \mathbbm{N}$, there exists a choice of $z,J$ and $k$ in Theorem \ref{thm:slc-hardness} and a polynomial time reduction from the corresponding \slc\ instance $\mc{L}$ to a distribution $\mc{D}$ supported on point label pairs in $\{0,1\}^n \times \{0,1\}$ such that 
	\begin{itemize}
		\item {\bf Completeness:} If $\mc{L}$ is a YES instance, then there exists two {\sf ORs} $\mc{C}_1$ and $\mc{C}_2$ such that 
				\[
		\Pr_{(\mb{x}, a) \gets \mc{D}} \left[\left(\mc{C}_1(\mb{x}) \wedge \mc{C}_2(\mb{x})\right) = a\right] = 1, \ \ \ \ \tn{and} \ \ \ \ 
		\Pr_{(\mb{x}, a) \gets \mc{D}} \left[\mc{C}_1(\mb{x}) = a\right] \geq 1 - \zeta.
				\]
		\item {\bf Soundness:} If $\mc{L}$ is a NO instance, then for any function $f:\{0,1\}^n \mapsto \{0,1\}$ of $\ell$-halfspaces 				
					\[\Pr_{(\mb{x}, a) \gets \mc{D}} \left[f(\mb{x}) = a\right] \leq \frac12 + \nu.\]
	\end{itemize}
\end{theorem}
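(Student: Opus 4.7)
The plan is to produce $\mc{D}$ by running an edge-local test (a suitably biased variant of Figure \ref{fig:basic-test-2}) once per hyperedge $e$ of an \slc\ instance $\mc{L}$ taken from Theorem \ref{thm:slc-hardness}. Each vertex $v$ of $\mc{L}$ contributes coordinates $X_{v,r,i,q}, Y_{v,r,i,q}$ for $r\in[k]$, $i\in[M]$, $q\in[Q]$; for each hyperedge $e=\{v_1,\dots,v_{2k}\}$ one fixes a canonical split into $k$ ``$\bX$-vertices'' and $k$ ``$\bY$-vertices'', with the blocks $B_j$ at $v$ given by $\pi_{e,v}^{-1}(j)$ (of size $\leq d=4^z$ by Theorem \ref{thm:slc-hardness}). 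Sampling a uniformly random hyperedge and running the test on it defines $\mc{D}$. The parameters $z, J, k, Q=\mathrm{poly}(d), \tau, K$ will be fixed in that order so that each soundness estimate overcomes the next.

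\textbf{Completeness.} Given a labeling $\sigma:V_{\mc{L}}\to[M]$ strongly satisfying every hyperedge, take $\mc{C}_1 := \bigvee_{v,r,q} X_{v,r,\sigma(v),q}$ and $\mc{C}_2 := \bigvee_{v,r,q} Y_{v,r,\sigma(v),q}$. At any hyperedge $e$ the value $j^* := \pi_{e,v}(\sigma(v))$ is common to all $v \in e$, so the edge-local analysis following Figure \ref{fig:basic-test-2} immediately yields $(\mc{C}_1 \wedge \mc{C}_2)(\bX,\bY) = a$ on every sample. The only contributions to $\mc{C}_1 \neq a$ come from $a=0$ samples in which $\bX^{(j^*)}$ is nontrivially set (the ``$b_{j^*}=0$'' branch of the test); by appropriately biasing the coin $b_j$ (and/or the prior on $a$) this error probability can be driven below $\zeta$, giving $\mc{C}_1$-accuracy $\geq 1-\zeta$.

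\textbf{Soundness setup.} Assume $f=g(h_1,\dots,h_\ell)$ with $h_s$ halfspaces classifies a $(1/2+\nu)$-fraction of $\mc{D}$ correctly. For each $s\in[\ell]$, $Z\in\{X,Y\}$, $r\in[k]$ form the $\tau$-critical-index sets $C_{Z,s,r}, C_{Z,s,r}^{\leq K}, I_{Z,s,r}$ as in the introduction. Choosing $J\gg \ell^2 k^2 (K+d^8)^2$, the smoothness guarantee of Theorem \ref{thm:slc-hardness} forces \eqref{eqn:uniqueproj} on all but a negligible fraction of hyperedges, and any pair $C_{Z_1,s,r}^{\leq K}, C_{Z_2,s',r'}^{\leq K}$ sharing a block at some hyperedge decodes (via the projection) into a labeling weakly satisfying more than $2k^2 2^{-\gamma_0 z}$ fraction of hyperedges, contradicting the NO case. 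Building the pairing distribution $\wh{\mc{D}}$ over $((\bX^0,\bY^0),(\bX^1,\bY^1))$ with marginals $\dzero$ and $\done$, an averaging argument over $s\in[\ell]$ isolates some $h_{s^*}$ satisfying \eqref{eqn:pointwisedev1} with $\nu/2\ell$ on the right.

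\textbf{Truncation, structural lemma, and main obstacle.} Proposition \ref{prop:crit-ix} gives geometric decay of the sorted coefficient norms past the critical index, so zeroing out the coefficients of $h_{s^*}$ outside $\bigcup_{Z,r} C_{Z,s^*,r}^{\leq K}$ produces a truncated linear form $h$ that still satisfies the deviation bound with $\nu/4\ell$; the truncation error on ${\rm pos}(\cdot)$ is controlled by combining the Littlewood-Offord-Erd\H os anti-concentration (Lemma \ref{lem:LO}) applied to the noise variables at coordinates in $C_{Z,s^*,r}^{\leq K/4}$ with Berry-Esseen (Theorem \ref{thm:berry-ess}). I expect the ensuing structural lemma to be the main technical obstacle: it must show that unless the truncated coefficient vector of $h$ encodes a list-decodable labeling of $\mc{L}$, the variance of $h(\bX^1,\bY^1) - h(\bX^0,\bY^0)$ given a typical noise-fixing under $\wh{\mc{D}}$ is dominated by the squared noise mass in the ``regular'' blocks. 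The noise mass is lower-bounded with high probability by a block-by-block application of Chernoff-Hoeffding (Theorem \ref{thm:chernoff}) using the regularity induced by the truncation, while bounding the signal variance requires $Q=\mathrm{poly}(d)$ chosen large enough that the $\{{\bf e}_1,\dots,{\bf e}_Q\}$-valued signal contributions are at most a $1/\mathrm{poly}(d)$ fraction of the noise variance. Chebyshev's inequality plus another Berry-Esseen application then forces the pointwise deviation of $h$ to be $o(1)$, a contradiction; in the alternative case the structural property directly yields correlated critical-index labels across distinct vertices of a common hyperedge, which via the Label Cover projections decode to a weakly-satisfying labeling, again contradicting the NO case and completing the reduction.
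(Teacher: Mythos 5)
Your proposal follows the paper's proof essentially step for step: the edge-local test on Smooth Label Cover, the ORs $\mc{C}_1,\mc{C}_2$ over $X_{v,\sigma(v),q}$ and $Y_{v,\sigma(v),q}$ with a biased coin $b_j$ for completeness, and the critical-index framework (smoothness forcing niceness, the pairing distribution $\wh{\mc{D}}$, truncation via Littlewood--Offord--Erd\H os and Berry--Esseen, and a structural lemma controlling the variance of $h(\bX^1,\bY^1)-h(\bX^0,\bY^0)$ against the regular noise mass via Chernoff--Hoeffding and Chebyshev) for soundness. One slip worth noting: the actual reduction does not carry an extra index $r\in[k]$ at each vertex --- the $k$-fold parallelism of the simplified Figure~\ref{fig:basic-test-2} test is realized by the $k$ vertices in each half $e_X,e_Y$ of the $2k$-uniform hyperedge, so a vertex $v$ only has coordinates $\bX_{v,i,q},\bY_{v,i,q}$ --- but this is a notational conflation with the introductory sketch and does not change the argument.
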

Since the negation of an $s$-term DNF is an $s$-clause CNF, the above theorem along with Theorem \ref{thm:slc-hardness} implies Theorem \ref{thm:mainyesno}. In the remainder of the section we provide a detailed construction of the distribution towards proving the above theorem.

The hardness reduction is from an instance $\mc{L}$ of $\slc$ as given in Theorem \ref{thm:slc-hardness} instantiated with  
\begin{equation}				
	J = \frac{10^3\cdot\ell^2\cdot\log^2(dk)\cdot d^{20}}{\nu(\zeta(1-\zeta))^2} \label{eqn:paramJ} 
\end{equation}
where $\gamma_0$ and $d$ are as given in Theorem \ref{thm:slc-hardness}, and we shall the fix the parameter $z$ later. Additionally, we also define the parameters 
\begin{equation}				\label{eqn:param}
	Q = 16dk, \tn{\ \ \ \ } \tau = (10 k\log Q)^{-2}, \tn{\ \ \ \ } K = \frac{20}{\tau}\log\frac{Q}{\tau}, \tn{\ \ \ \ } k = 10/(\zeta(1-\zeta))^2, \tn{\ \ \ \ } t = k/4.  
\end{equation}

For every hyperedge $e$, we define an arbitrary partition of its $2k$ vertices into two subsets of size $k$ each, given by $e = e_X \cup e_Y$ (this notation shall become clear below). 

\medskip
{\it Coordinate Set}:
For every vertex $v \in V_{\mc{L}}$ of the $\slc$ instance and every label $i \in [M]$ we introduce two vectors of $Q$ boolean valued variables (coordinates) each:   $\mb{X}_{v,i}  := \left(X_{v,i,1}, \dots, X_{v,i,Q}\right)$ and $\mb{Y}_{v,i} := \left(Y_{v,i,1}, \dots, Y_{v,i,Q}\right)$. Let $\mb{X}_v := \bigoplus_{i=1}^M \mb{X}_{v,i}$, and $\mb{Y}_v := \bigoplus_{i=1}^M \mb{Y}_{v,i}$. Let ${\bf X}$ and ${\bf Y}$ denote $\{{\bf X_v}\}_{v\in V_{\mc{L}}}$ and $\{{\bf Y}_v\}_{v\in V_{\mc{L}}}$.   In particular, the points of the instance lie in the $(2qM|V_{\mc{L}}|)$-dimensional boolean space.

\medskip
{\it Point-label distribution}: The point-label distribution $\mc{D}_{\sf global}$ is given in Figure \ref{fig:testdist}, with $t$ being a parameter to be decided later. 

\begin{figure}[h!]
	\begin{mdframed}
		\begin{itemize}
			\item [1.] Sample a random hyperedge $e \sim E$. Let $\pi_v := \pi_{e,v}$ for all $v \in e$. Recall the predefined partition of the vertices of $e$ as $e = e_X \cup e_Y$ where $|e_X| = |e_Y| = k$\; 

			\item [2.]  Set ${\bf X}_v = {\bf Y}_v = {\bf 0}$ for all $v \not\in e$\;

			\item [3.] Sample $a \in \{0,1\}$ u.a.r. as the label of the point sampled below\;
			
			\item [4.] For every label $j \in [m]$, do the following:
				\begin{itemize}
					\item[4.1.] Independently sample $S_j \overset{\rm u.a.r}{\sim} {e_X \choose t}$ and $S'_j \overset{\rm u.a.r}{\sim} {e_Y \choose t}$\;
					
					\item[4.2.] Sample a bit $b_j$ to be $0$ w.p. $\zeta$ and $1$ w.p. $1-\zeta$ . If $b_j = 0$, for every $v \in e_X\setminus S_j$ and $i \in \pi^{-1}_{v}(j)$, sample  ${\bf X}_{v,i} \overset{\rm u.a.r}{\sim}  \{0,1\}^Q$. If $b_j=1$, for every $v \in e_Y \setminus S'_j$ and $i \in \pi^{-1}_{v}(j)$, sample ${\bf Y}_{v,i} \overset{\rm u.a.r}{\sim}  \{0,1\}^Q$\;

		\item[4.3] If $a=1$ do Step 4.4 otherwise do Step 4.5\;

		\item[4.4]	{\bf Sampling a $1$-Point} Do the following:
			\begin{enumerate*}
				\item Independently sample $u_{X,j} \sim S_j$ and $u_{Y,j} \sim S'_j$ u.a.r.
				\item For every $i \in \pi^{-1}_{u_{X,j}}(j)$, set ${\bf X}_{u_{X,j},i} \overset{\rm u.a.r}{\sim} \{{\bf e}_1,{\bf e}_2,\ldots,{\bf e}_Q\}$\; 
				\item For every $i \in \pi^{-1}_{u_{Y,j}}(j)$, set ${\bf Y}_{u_{Y,j},i} \overset{\rm u.a.r}{\sim} \{{\bf e}_1,{\bf e}_2,\ldots,{\bf e}_Q\}$\; 
			\end{enumerate*}
			
		\item[4.5]	{\bf Sampling a $0$-Point}  With probability $1 - 1/(\zeta(1-\zeta)t)$, for every $u \in S_j$ and $v \in S'_j$, set ${\bf X}_{u,i}$ and ${\bf Y}_{v,i'}$ to ${\bf 0}$ for each $i \in \pi^{-1}_{u}(j)$ and $i' \in \pi^{-1}_{u}(j)$ . With the rest of the probability $1/(\zeta(1-\zeta)t)$, do the following:
			\begin{enumerate*}
				\item If $b_j = 0$, choose $T_j\subseteq S_j$ by independently sampling vertices of $S_j$ w.p. $(1-\zeta)$. For every $v \in T_j$ and $i \in \pi^{-1}_{v}(j)$, sample  ${\bf X}_{v,i} \overset{\rm u.a.r}{\sim} \{{\bf e}_1,{\bf e}_2,\ldots,{\bf e}_Q\}$\; 
				\item If $b_j = 1$, choose $T'_j\subseteq S'_j$ by independently sampling vertices of $S'_j$ w.p. $\zeta$. For every $v \in T'_j$ and $i \in \pi^{-1}_{v}(j)$, sample ${\bf Y}_{v,i} \overset{\rm u.a.r}{\sim} \{{\bf e}_1,{\bf e}_2,\ldots,{\bf e}_Q\}$\;
			\end{enumerate*}
				\end{itemize}
			\item[5.] For those ${\bf X}_{v, i}$, ${\bf Y}_{v, i}$ not assigned values yet, set them to ${\bf 0}$. Output the point $({\bf X}, {\bf Y})$ with label $a$.
		\end{itemize}
	\end{mdframed}
	\caption{Distribution $\mc{D}_{\sf global}$}
	\label{fig:testdist}
\end{figure}
		The restriction of $\mc{D}_{\sf global}$ to a hyperedge $e$ is denoted by $\mc{D}^e$.  The restrictions of $\mc{D}^e$ to the $0$ and $1$ points are given by $\mc{D}^e_0$ and $\mc{D}^e_1$ respectively. Fix a hyperedge $e$,and some $v \in e_X$. 
		
		Consider the distributions $\mc{D}^e_0$ and $\mc{D}^e_1$ conditioned on a fixation of $\{S_j, S'_j\}_{j\in [m]}$. Under both distributions, if $b_j = 0$ then for each $v \in S_j$ independently each of $\{{\bf X}_{v, i} : i \in \pi^{-1}_{e,v}(j)\}$ is independently sampled u.a.r. from $\{0,1\}^Q$, and if $b_j = 0$ they are all  set to ${\bf 0}$. If $v \in S_j$, with probability exactly $1/t$  each of $\{{\bf X}_{v, i} : i \in \pi^{-1}_{e,v}(j)\}$ is independently sampled u.a.r from $\{{\bf e}_1,{\bf e}_2,\ldots,{\bf e}_Q\}$, and otherwise set to ${\bf 0}$. The same analogously holds true for  $\{{\bf Y}_{v, i} : i \in \pi^{-1}_{e,v}(j)\}$ for $v \in e_Y$, with $\ol{b_j}$ replacing $b_j$, $S'_j$ replacing $S_j$. Based on this we have the following observation. 
Let ${\bf X}^{(j)}_e := \cup_{v \in e_X}\{{\bf X}_{v, i} : i \in \pi^{-1}_{e,v}(j)\}$ and ${\bf Y}^{(j)}_e := \cup_{v \in e_Y}\{{\bf Y}_{v, i} : i \in \pi^{-1}_{e,v}(j)\}$ be the set of variables for $e$ corresponding to $j \in [m]$.

		\begin{observation} \label{obs:inducedist}
			Conditioned on any fixation of $\{S_j, S'_j\}_{j\in [m]}$ the following holds. 
			\begin{itemize}
				\item Distributions $\mc{D}^e_0$ and $\mc{D}^e_1$ induce the same distribution on any variable ${\bf X}_{v, i}$ or ${\bf Y}_{v, i}$. 
				\item Under both $\mc{D}^e_0$ and $\mc{D}^e_1$, the variables $({\bf X}^{(j)}_e, {\bf Y}^{(j)}_e, b_j)$, $j = 1,\dots, m$ are sampled independently (over the different $j \in [m]$). 
				\item  For each $j \in [m]$, after additionally fixing $b_j$, the variables 
			$$\{{\bf X}_{v, i} :  v \in e_X, v \not\in S_j, i \in \pi^{-1}_{e,v}(j)\} \cup \{{\bf Y}_{v, i} :  v\in e_Y, v\not\in S'_j, i \in \pi^{-1}_{e,v}(j)\}$$ have the same distribution in  $\mc{D}^e_0,$
			and $\mc{D}^e_1$, and are independent of the variables corresponding to $v \in S_j$ and $v \in S_j'$.
			\end{itemize}
		\end{observation}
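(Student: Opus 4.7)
The plan is to verify each of the three bullets by direct inspection of the sampling procedure in Figure \ref{fig:testdist}, exploiting the clean factorization of the construction. Step 4 iterates independently over $j \in [m]$, and within a fixed block $j$ the ``noise'' variables (those indexed by $v \in e_X \setminus S_j$ on the $X$ side, or by $v \in e_Y \setminus S'_j$ on the $Y$ side) are produced in Step 4.2 using randomness disjoint from that of Steps 4.4--4.5, which in turn produce the ``active'' variables (those indexed by $v \in S_j$ or $v \in S'_j$). With this structure in hand, the second bullet is immediate: for each $j$, the auxiliary choices $b_j, T_j, T'_j, u_{X, j}, u_{Y, j}$ and the values assigned to ${\bf X}^{(j)}_e, {\bf Y}^{(j)}_e$ are drawn afresh and independently of the iterations for $j' \neq j$, and this independence is preserved by further conditioning on $\{S_j, S'_j\}_{j \in [m]}$.

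For the third bullet, fix a block $j$ and additionally condition on $b_j$. Step 4.2 assigns the $X$-side noise variables u.a.r.\ from $\{0,1\}^Q$ when $b_j = 0$ and leaves them at ${\bf 0}$ when $b_j = 1$, with a symmetric rule for the $Y$-side noise variables; crucially, this step is oblivious to $a$. Since Steps 4.4 and 4.5 use independent randomness and touch only active variables, the noise variables are independent of the active ones conditional on $(\{S_j, S'_j\}, b_j)$, and their joint distribution is identical under $\mc{D}^e_0$ and $\mc{D}^e_1$.

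The first bullet then reduces to checking the marginal equality for a single active variable ${\bf X}_{v, i}$ with $v \in S_j$; the $Y$ case is symmetric, and the noise case follows from the third bullet already established. Under $\mc{D}^e_1$, Step 4.4 sets ${\bf X}_{v,i}$ u.a.r.\ on $\{{\bf e}_1, \ldots, {\bf e}_Q\}$ iff $u_{X, j} = v$, which happens with probability $1/t$ independently of $b_j$; otherwise the variable is ${\bf 0}$. Under $\mc{D}^e_0$, the variable is nonzero only when $b_j = 0$ (prob.\ $\zeta$), the second branch of Step 4.5 is taken (prob.\ $1/(\zeta(1-\zeta)t)$, independent of $b_j$), and $v \in T_j$ (prob.\ $1-\zeta$), and conditional on all three it is u.a.r.\ on $\{{\bf e}_1, \ldots, {\bf e}_Q\}$. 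Multiplying, the probability that ${\bf X}_{v, i}$ is a nonzero unit vector equals $\zeta \cdot \frac{1}{\zeta(1-\zeta)t} \cdot (1-\zeta) = 1/t$, matching $\mc{D}^e_1$. No deeper argument is needed: the probabilities in Step 4.5 were in fact chosen precisely so that this equation balances to $1/t$, so the only ``obstacle'' is bookkeeping.
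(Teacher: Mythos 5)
Your proof is correct and follows the same route the paper takes in the paragraph immediately preceding the observation: read off the factorization of the sampler over $j\in[m]$, observe that Step 4.2 (noise variables) is oblivious to $a$ and uses randomness disjoint from Steps 4.4--4.5 (active variables), and verify by the explicit computation $\zeta\cdot\frac{1}{\zeta(1-\zeta)t}\cdot(1-\zeta)=1/t$ that the marginal of an active variable matches across $\mc{D}^e_0$ and $\mc{D}^e_1$. The paper only asserts the $1/t$ probability informally (and its prose version contains a couple of typos, e.g.\ writing $v\in S_j$ where $v\in e_X\setminus S_j$ is meant); you supply exactly the bookkeeping that makes the assertion precise, so there is nothing to flag.
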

We denote a halfspace over the variables ${\bf X}$ and ${\bf Y}$ as ${\rm pos}(h({\bf X}, {\bf Y}))$ where $h({\bf X}, {\bf Y}) = \Langle {\bf c}_X, {\bf X}\Rangle + \Langle {\bf c}_Y, {\bf Y}\Rangle + \theta$, where ${\bf c}_X, {\bf c}_Y \in \mathbbm{R}^{V_{\mc{L}}\times [M] \times [q]}$ and $\theta$ is a constant. For any vertex $v \in V_{\mc{L}}$, $i \in [M]$ and $q \in [Q]$ we denote by (i) ${\bf c}_{X,v,i} := (c_{X,v,i,q})_{q=1}^Q$ the vector of coefficients corresponding to ${\bf X}_{v,i}$, and (ii) ${\bf c}_{X,v}$ the vector of coefficients corresponding to ${\bf X}_v$. Similarly for  ${\bf c}_{Y,v,i}$ and ${\bf c}_{Y,v}$.

\section{Completeness}				\label{sec:completeness}
	Suppose $\mc{L}$ is a YES instance. Then there exists labeling $\sigma: V \mapsto [M]$ such that for every hyperedge $e$ we have $j_e \in [m]$ s.t. $\pi_{e,v}(\sigma(v)) = j_e$ for every $v \in e$. Consider the following two {\sf OR} formulas: 
	\[
		\mc{C}_1 = \left(\bigvee_{v \in V}\bigvee_{q \in [Q]} X_{v,\sigma(v),q} \right), \tn{ \ \ and\ \ } 	\mc{C}_2 = \left(\bigvee_{v \in V}\bigvee_{q \in [Q]} Y_{v,\sigma(v),q}\right)
	\] 
	
	Note that $\mc{C}_1$ depends only on ${\bf X}$ and $\mc{C}_2$ only on ${\bf Y}$. Fix a hyperedge $e = e_X \cup e_Y$. 
	
	Suppose $a = 1$. By construction of the distribution, there is $u = u_{X, j_e}$ and $v =  u_{Y, j_e}$ such that for some $q, q' \in [Q]$ we have $X_{u, \sigma(u), q} = 1$ and $Y_{v,\sigma(v),q'} = 1$. This implies that $\mc{C}_1 = \mc{C}_2 = 1$, and $\mc{C}_1\wedge \mc{C}_2$ evaluates to $1$ when $a = 1$.

	On the other hand, suppose  $a = 0$. Note that all the variables corresponding to vertices not in $e$ are set to zero.
	If $b_{j_e} = 0$, then ${\bf Y}_{v,\sigma(v)} = {\bf 0}$ for every $v \in e_Y$ implying that $\mc{C}_2 = 0$. Otherwise if $b_{j_e}=1$ then ${\bf X}_{v,\sigma(v)} = {\bf 0}$ for every $v \in e_X$ so that $\mc{C}_1 = 0$. Therefore, we have $\mc{C}_1\wedge\mc{C}_2 = 0$ whenever $a=0$. Furthermore, since $b_{j_e} = 1$ w.p. $1 - \zeta$, $\mc{C}_1 = 0$ w.p. $1 - \zeta$ on points labeled $0$. 

\section{Soundness Analysis}		\label{sec:soundness}		

	Suppose there exists a function $f$ of $\ell$-halfspaces ${\rm pos}(h_1({\bf X}, {\bf Y})), \dots, {\rm pos}(h_\ell({\bf X}, {\bf Y}))$ such that: 
	\begin{equation}
		\Pr_{(({\bf X},{\bf Y}), a) \sim \mathcal{D}_{\sf global}}\left[f({\bf X},{\bf Y}) = a\right] \geq \frac12 + \nu. 
	\end{equation}		
		Observe that by averaging, for at least $\nu/2$-fraction of the hyperedges $e$, $f$ is consistent with $\mc{D}_e$ with probability at least $\frac12 + \frac{\nu}{2}$. Call such hyperedges \emph{fine} and for each such hyperedge $e$ we have
		\begin{equation}
			\Pr_{(({\bf X},{\bf Y}), a) \sim \mathcal{D}^e}\left[f({\bf X},{\bf Y}) = a\right] \geq \frac12 + \frac{\nu}{2}. \label{eqn:ffine}
		\end{equation}

{\bf Applying Label Cover Smoothness}. Recall that $\{({\bf c}^{(s)}_X, {\bf c}^{(s)}_Y)\}_{s=1}^\ell$ are the $\ell$ coefficient vectors for the variables $({\bf X}, {\bf Y})$. For a vertex $v$, we have the subsets $C_\tau ({\bf c}^{(s)}_{X,v})$, $C_\tau ({\bf c}^{(s)}_{Y,v})$, $C^{\leq K}_\tau ({\bf c}^{(s)}_{X,v})$, and $C^{\leq K}_\tau ({\bf c}^{(s)}_{Y,v})$, for $s \in [\ell]$ as defined in Section \ref{sec:criticalindex}. Additionally, define the following subset of $[M]$:

\begin{equation}
\displaystyle	L_v\left(\{({\bf c}^{(s)}_X, {\bf c}^{(s)}_Y)\}_{s=1}^\ell\right) := \bigcup_{s=1}^\ell I_v({\bf c}^{(s)}_X, {\bf c}^{(s)}_Y), \label{eqn:Lv}
\end{equation}
where
\begin{align}
I_v({\bf c}_X, {\bf c}_Y) =  C^{\leq K}_\tau ({\bf c}_{X,v}) \bigcup C^{\leq K}_\tau ({\bf c}_{Y,v}) & \bigcup \left\{  i \in [M]\setminus C_\tau ({\bf c}_{X,v}): \|{\bf c}_{X,v,i}\|^2 > \frac{1}{d^8}\sum_{i' \notin C_\tau ({\bf c}_{X,v})}\left\|{\bf c}_{X,v,i'}\right\|^2_2\right\} \nonumber \\
\bigcup & \left\{i \in [M]\setminus C_\tau ({\bf c}_{Y,v}) : \|{\bf c}_{Y,v,i}\|^2 > \frac{1}{d^8}\sum_{i' \notin C_\tau ({\bf c}_{Y,v})}\left\|{\bf c}_{Y,v,i'}\right\|^2_2 \right\}. \label{eqn:Iv}
\end{align}

The following is a consequence of the smoothness property of $\slc$ and our setting of the parameters.

\begin{lemma}			\label{lem:nice-hyper}
	Given  $\{({\bf c}^{(s)}_X, {\bf c}^{(s)}_Y)\}_{s=1}^\ell$, at least $(1-\nu/4)$ fraction of the hyperedges $e \in E_{\mc{L}}$ are \emph{nice} w.r.t.  $\{({\bf c}^{(s)}_X, {\bf c}^{(s)}_Y)\}_{s=1}^\ell$  i.e., the following condition holds for every vertex $v \in e$: for any pair of labels $i_1,i_2 \in L_v\left( \{({\bf c}^{(s)}_X, {\bf c}^{(s)}_Y)\}_{s=1}^\ell\right)$, $\pi_{e,v}(i_1) \neq \pi_{e,v}(i_2)$. 
\end{lemma}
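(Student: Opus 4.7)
The plan is to bound the size of $L_v$ for each vertex $v$, then apply smoothness and union bounds. First I would note that $|C^{\leq K}_\tau ({\bf c}_{X,v})| \leq K$ and $|C^{\leq K}_\tau ({\bf c}_{Y,v})| \leq K$ by definition. For the other two terms in \eqref{eqn:Iv}, an index $i$ with $\|{\bf c}_{X,v,i}\|^2 > (1/d^8)\sum_{i'\notin C_\tau({\bf c}_{X,v})}\|{\bf c}_{X,v,i'}\|^2$ can appear for at most $d^8$ distinct $i$'s, since otherwise the total squared norm would exceed the full sum. Thus $|I_v({\bf c}^{(s)}_X,{\bf c}^{(s)}_Y)| \leq 2K + 2d^8$, and taking the union over $s \in [\ell]$ gives $|L_v| \leq 2\ell(K+d^8)$.

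Next I would invoke the smoothness guarantee from Theorem~\ref{thm:slc-hardness}: for any fixed vertex $v$ and any fixed pair of distinct labels $i_1, i_2 \in [M]$, a random hyperedge $e$ incident to $v$ satisfies $\Pr[\pi_{e,v}(i_1)=\pi_{e,v}(i_2)] \leq 1/J$. By a union bound over the at most $\binom{|L_v|}{2} \leq 2\ell^2(K+d^8)^2$ unordered pairs in $L_v$, the probability that some pair in $L_v$ collides under $\pi_{e,v}$ is at most $2\ell^2(K+d^8)^2/J$. A further union bound over the $2k$ vertices of a hyperedge (using regularity of $\mc{L}$ so that a random incident hyperedge of $v$ has the same distribution as the marginal over $v$ of a uniformly random hyperedge) shows that the fraction of hyperedges that fail to be nice is at most $4k\ell^2(K+d^8)^2/J$.

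Finally I would verify the parameter calculation. From \eqref{eqn:param}, $K = O(k^2 \log^3(dk))$, and since $k$ depends only on $\zeta$, we have $K + d^8 \leq 2d^8$ for sufficiently large $d$ (which is ensured by choosing $z$ large enough). Then
\[
\frac{4k\ell^2(K+d^8)^2}{J} \;\leq\; \frac{16 k\ell^2 d^{16}}{J} \;\leq\; \frac{16 k \nu (\zeta(1-\zeta))^2}{10^3 \log^2(dk)\, d^4} \;\leq\; \frac{\nu}{4},
\]
using the choice of $J$ in \eqref{eqn:paramJ} and taking $d$ (equivalently $z$) sufficiently large so that $10^3 \log^2(dk) d^4 \geq 64k(\zeta(1-\zeta))^2$, which holds since the right side is a fixed constant. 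This completes the bound. The main thing to be careful about is the size count for the tail set in $I_v$ (the counting argument using the $(1/d^8)$ threshold) and ensuring consistency in the union-bound accounting over both pairs and vertices; the parameter verification itself is routine given the generous choice of $J$ in \eqref{eqn:paramJ}.
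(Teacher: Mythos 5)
Your proposal is correct and follows essentially the same approach as the paper: bound $|L_v|$ by $2\ell(K+d^8)$, apply the smoothness guarantee together with a union bound over pairs in $L_v$ and over the $2k$ vertices of a hyperedge (using regularity to pass between the hyperedge-marginal and vertex-marginal distributions), and then verify the parameter choice of $J$ in \eqref{eqn:paramJ} makes the failure probability at most $\nu/4$. The paper states the size bound on $I_v$ without proof and uses $|L_v|^2$ directly rather than $\binom{|L_v|}{2}$, giving a constant of $8$ rather than your $4$, but the argument and all its ingredients are the same; your explicit justification of the $d^8$-count for the tail set is a clean spelling-out of the paper's "it is easy to see."
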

\begin{proof} It is easy to see that the size of $I_v({\bf c}_X, {\bf c}_Y)$ for any vertex $v$ is at most $2(K + d^8)$, and thus the size of $L_v = L_v\left(\{({\bf c}^{(s)}_X, {\bf c}^{(s)}_Y)\}_{s=1}^\ell\right)$ is at most $2\ell(K + d^8)$. By the regularity of $G_{\mc{L}} = (V_{\mc{L}}, E_{\mc{L}})$, the smoothness of $\mc{L}$ and union bound, we have,
	\begin{eqnarray}
	& &	\Pr_{e \sim E_{\mc{L}}}\left[\exists v \in e ,\exists i_1,i_2 \in L_v : \pi_{e,v}(i_1) = \pi_{e,v}(i_2)\right] \nonumber \\
	& \leq & 2k\cdot \E_{v \in V_{\mc{L}}}\left[\Pr_{e \sim v}\left[\exists i_1,i_2 \in L_v : \pi_{e,v}(i_1) = \pi_{e,v}(i_2)\right]\right] \nonumber \\
	& \leq &  2k\cdot \E_{v \in V_{\mc{L}}} \left[|L_v|^2/J\right] \nonumber \\
	& \leq & \left(8k\ell^2\cdot(K + d^8)^2\right)/J \leq \nu/4,
	\end{eqnarray}
	where the last inequality is obtained by our setting of $J$ in \eqref{eqn:paramJ}.
\end{proof}
For convenience we shall abuse notation to say that $e$ is nice w.r.t. $({\bf c}_X, {\bf c}_Y)$ if for each $v \in E$, for any pair of labels $i_1,i_2 \in I_v({\bf c}_X, {\bf c}_Y)$, $\pi_{e,v}(i_1) \neq \pi_{e,v}(i_2)$. We have the following observation.

\begin{observation}\label{obs:singlenice}
	If $e$ is nice w.r.t. $\{({\bf c}^{(s)}_X, {\bf c}^{(s)}_Y)\}_{s=1}^\ell$ then it is nice w.r.t. each $({\bf c}^{(s)}_X, {\bf c}^{(s)}_Y)$, $s \in [\ell]$. 
\end{observation}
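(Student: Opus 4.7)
The plan is essentially a one-line argument by set inclusion; there is no real technical content beyond unpacking the two (slightly overloaded) definitions of \emph{nice}. First I would recall from \eqref{eqn:Lv} that for every vertex $v$,
$$
L_v\bigl(\{({\bf c}^{(s')}_X, {\bf c}^{(s')}_Y)\}_{s'=1}^\ell\bigr) \;=\; \bigcup_{s'=1}^\ell I_v({\bf c}^{(s')}_X, {\bf c}^{(s')}_Y),
$$
so in particular $I_v({\bf c}^{(s)}_X, {\bf c}^{(s)}_Y) \subseteq L_v\bigl(\{({\bf c}^{(s')}_X, {\bf c}^{(s')}_Y)\}_{s'=1}^\ell\bigr)$ for every single $s \in [\ell]$.

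Next, I would invoke the definition of niceness w.r.t.\ the full family: by assumption, for every $v \in e$ and every pair of distinct labels $i_1, i_2 \in L_v\bigl(\{({\bf c}^{(s')}_X, {\bf c}^{(s')}_Y)\}_{s'=1}^\ell\bigr)$, one has $\pi_{e,v}(i_1) \neq \pi_{e,v}(i_2)$. Restricting the universe of pairs to the smaller set $I_v({\bf c}^{(s)}_X, {\bf c}^{(s)}_Y)$ only shrinks the collection of pairs for which the injectivity of $\pi_{e,v}$ must be verified, so the injectivity condition continues to hold on this smaller set. But this is precisely the single-$s$ definition of niceness given in the sentence immediately preceding the observation (the abuse-of-notation definition for $e$ being nice w.r.t.\ a single $({\bf c}_X, {\bf c}_Y)$).

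There is no real obstacle here: the observation is a direct consequence of the set-theoretic inclusion $I_v \subseteq L_v$, and no property of smoothness, of Label Cover, or of the critical-index construction is needed in the proof itself. The main point of the statement is notational bookkeeping—it lets later parts of the soundness argument reduce from the $\ell$-halfspace niceness guarantee provided by Lemma \ref{lem:nice-hyper} to single-halfspace niceness, which is what the per-halfspace analysis will work with. I would therefore write the proof as essentially one sentence: the inclusion of $I_v$ into $L_v$ together with the universal quantifier in the definition of niceness gives the conclusion.
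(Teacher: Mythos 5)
Your proof is correct and is exactly the (implicit) argument the paper leaves unstated: the observation is recorded without proof precisely because it follows immediately from the inclusion $I_v({\bf c}^{(s)}_X, {\bf c}^{(s)}_Y) \subseteq L_v\bigl(\{({\bf c}^{(s')}_X, {\bf c}^{(s')}_Y)\}_{s'=1}^\ell\bigr)$ together with the universal quantification over label pairs in the definition of niceness.
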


{\bf The Main Structural Lemma}. Applying Lemma \ref{lem:nice-hyper} to the coefficient vectors $({\bf c}^{(s)}_X, {\bf c}^{(s)}_Y)$ for $h_s$ for $s \in [\ell]$ we obtain that there are at least $(\nu/2 - \nu/4) = (\nu/4)$-fraction of hyperedges which are \emph{fine} and \emph{nice} i.e., they satisfy \eqref{eqn:ffine} as well as the niceness condition of Lemma \ref{lem:nice-hyper} w.r.t. $\{({\bf c}^{(s)}_X, {\bf c}^{(s)}_Y)\}_{s=1}^\ell$. Let $E^*$ represent the set such hyperedges. 
We now state the main structural lemma the proof of which is provided later in this section.
	\begin{lemma}				\label{lem:dec-struct}
For each $e \in E^*$ there exist distinct vertices $u, v \in e$ such that at least one of the following is satisfied,	
		\begin{itemize}
			\item[\tn{I.}] There exist $r, p \in [\ell]$ ($r$ may equal $p$) s.t. 
				$$\pi_{e,u}\left(C^{\leq K}_{\tau}({\bf c}^{(r)}_{X,u})\cup C^{\leq K}_{\tau}({\bf c}^{(r)}_{Y,u})\right) \cap \pi_{e,v}\left(C^{\leq K}_{\tau}({\bf c}^{(p)}_{X,v})\cup C^{\leq K}_{\tau}({\bf c}^{(p)}_{Y,v})\right) \neq \emptyset.$$

			\item[\tn{II.}] For some $r \in [\ell]$, there exists $j \in\pi_{e,u}\left(C^{\leq K}_{\tau}({\bf c}^{(r)}_{X,u})\cup C^{\leq K}_{\tau}({\bf c}^{(r)}_{Y,u})\right)$ such that 
at least one of \eqref{eqn:dec-struct-1} or \eqref{eqn:dec-struct-2} given below is satisfied 
				\begin{eqnarray}	
					\displaystyle j \notin \pi_{e,v}\left(C^{\leq K}_\tau({\bf c}^{(r)}_{X,v})\right), &\tn{\ \ and \ \ }& \sum_{i \in \pi^{-1}_{e,v}(j) \setminus C_\tau({\bf c}^{(r)}_{X,v})}\|{\bf c}^{(r)}_{X, v,i}\|^2\, >\, \tau^4 \sum_{i \in [M]\setminus C_\tau({\bf c}^{(r)}_{X,v})}\|{\bf c}^{(r)}_{X, v,i}\|^2 \label{eqn:dec-struct-1}\\
					\displaystyle j \notin \pi_{e,v}\left(C^{\leq K}_\tau({\bf c}^{(r)}_{Y,v})\right), &\tn{\ \ and \ \ }& \sum_{i \in \pi^{-1}_{e,v}(j) \setminus C_\tau({\bf c}^{(r)}_{Y,v})}\|{\bf c}^{(r)}_{Y, v,i}\|^2 \, > \, \tau^4 \sum_{i \in [M]\setminus C_\tau({\bf c}^{(r)}_{Y,v})}\|{\bf c}^{(r)}_{Y, v,i}\|^2 \label{eqn:dec-struct-2}
				\end{eqnarray}
		\end{itemize}  
	\end{lemma}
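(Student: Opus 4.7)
The plan is to prove Lemma~\ref{lem:dec-struct} by contradiction. Fix $e \in E^*$ and suppose that for every pair of distinct $u,v \in e$, neither condition~I nor condition~II holds; in particular this failure will be used at the single halfspace index $s^*$ identified below. I will derive a contradiction with the fineness bound \eqref{eqn:ffine}. First I would construct the pairing distribution $\widehat{\mc{D}}^e$ on pairs $\bigl(({\bf X}^0,{\bf Y}^0),({\bf X}^1,{\bf Y}^1)\bigr)$ with marginals $\mc{D}^e_0$ and $\mc{D}^e_1$, exactly as in the sketch: sample $\{S_j,S'_j\}_{j\in[m]}$ and the $b_j$ bits for blocks intersecting some $C^{\leq K}_\tau(\cdot)$ set, then sample the shared critical-index variables (possible by Observation~\ref{obs:inducedist}), then the remaining $b_j$ bits, then the shared noise coordinates, and only finally sample the $S_j,S'_j$-indexed $\{{\bf e}_1,\ldots,{\bf e}_Q\}$-valued coordinates independently under $\mc{D}^e_0$ and $\mc{D}^e_1$. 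From \eqref{eqn:ffine} and averaging over the $\ell$ halfspaces, there exists $s^* \in [\ell]$ satisfying $\E_{\widehat{\mc{D}}^e}\bigl[|{\rm pos}(h_{s^*}({\bf X}^1,{\bf Y}^1)) - {\rm pos}(h_{s^*}({\bf X}^0,{\bf Y}^0))|\bigr] \geq \nu/(2\ell)$, as in \eqref{eqn:pointwisedev1}.

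Next I would invoke the truncation step of Section~\ref{sec:trun-crit-index}, zeroing out for each $v \in e$ the coefficients of $h_{s^*}$ indexed by $C_\tau({\bf c}^{(s^*)}_{X,v}) \setminus C^{\leq K}_\tau({\bf c}^{(s^*)}_{X,v})$ and analogously for ${\bf Y}$. The geometric decay from Proposition~\ref{prop:crit-ix} bounds the truncated mass, while Lemma~\ref{lem:LO} applied to the noise coordinates whose coefficients lie in $C^{\leq K/4}_\tau$ (which are shared between ${\bf X}^0$ and ${\bf X}^1$ by the coupling) dominates it via anti-concentration. The net cost in the pointwise deviation is at most $\nu/(4\ell)$, leaving a truncated linear form $h$ with coefficients $({\bf c}_X,{\bf c}_Y)$ satisfying $C_\tau({\bf c}_{X,v}) = C^{\leq K}_\tau({\bf c}_{X,v})$ (and analogously for $Y$) for every $v \in e$, yet still inheriting a pointwise-deviation lower bound of $\nu/(4\ell)$.

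Now I exploit the failure of cases~I and~II for $h$. Call $j \in [m]$ \emph{regular} if $j \notin \bigcup_{v \in e} \pi_{e,v}\bigl(C_\tau({\bf c}_{X,v}) \cup C_\tau({\bf c}_{Y,v})\bigr)$; the number of irregular $j$'s is at most $2k\ell K$. Niceness (Observation~\ref{obs:singlenice}) together with the failure of case~I at $s^*$ implies that each irregular $j$ is hit by the $X$- or $Y$-critical projection at a \emph{single} vertex $u_j \in e$, while the failure of case~II at $s^*$ forces, for every other $v \in e$ and both $Z \in \{X,Y\}$, that $\sum_{i \in \pi^{-1}_{e,v}(j)\setminus C_\tau({\bf c}_{Z,v})} \|{\bf c}_{Z,v,i}\|^2 \leq \tau^4 \sum_{i \notin C_\tau({\bf c}_{Z,v})} \|{\bf c}_{Z,v,i}\|^2$. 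Combined with the $\tau$-regularity of the non-critical tails (Proposition~\ref{prop:crit-ix}) and a Chernoff-Hoeffding estimate (Theorem~\ref{thm:chernoff}) over the independent $b_j$ bits in regular blocks, this forces, with probability $1 - o(\nu/\ell)$ over $\widehat{\mc{D}}^e$, that the shared-noise squared coefficient mass constitutes a constant fraction of the total non-critical squared mass of $({\bf c}_X,{\bf c}_Y)$.

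Finally I would conclude as follows. Berry-Esseen (Theorem~\ref{thm:berry-ess}) applied to the noise portion of $h({\bf X}^0,{\bf Y}^0)$ shows it is approximately Gaussian with standard deviation equal to the noise-mass norm, so the mass of any interval of length $\Delta$ around the threshold $-\theta$ is $O(\Delta/\sigma_{\mathrm{noise}})$. Under the coupling, $h({\bf X}^1,{\bf Y}^1) - h({\bf X}^0,{\bf Y}^0)$ depends only on the $S_j,S'_j$-indexed indicator coordinates, each of which is a single-position indicator out of $Q = 16dk$; its variance is therefore at most $O(dk/Q)$ times the truncated coefficient mass assigned to those indices, which by the noise-mass bound above is much smaller than $\sigma_{\mathrm{noise}}^2$. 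Chebyshev's inequality then implies that the sign of $h$ flips with probability $o(\nu/\ell)$ under $\widehat{\mc{D}}^e$, contradicting the $\nu/(4\ell)$ lower bound and completing the proof. The main obstacle is precisely this variance comparison: one has to argue that the $\tau^4$ upper bound on the mass at each irregular label, summed over the bounded collection of irregular labels (and combined with the $\{S_j, S'_j\}$-level anti-concentration), is \emph{jointly} dominated by the regular-block noise mass, while using the $1/Q$ factor to keep the $S_j,S'_j$-indexed contribution below the noise mass. Balancing $\tau$, $K$, $d$, and $Q$ against niceness (governed by $J$ in \eqref{eqn:paramJ}) is where the parameter accounting is most delicate.
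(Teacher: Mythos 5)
Your plan follows the paper's proof of Lemma~\ref{lem:dec-struct} essentially step for step: assume the conclusion fails for every pair $u,v \in e$, build the coupling $\widehat{\mc{D}}_e$ of Figure~\ref{fig:conddist} (using Observations~\ref{obs:triples} and~\ref{obs:inducedist} that the critical-index variables and the noise variables are shared between $\mc{D}^e_0$ and $\mc{D}^e_1$), average over the $\ell$ halfspaces to isolate $h_{s^*}$ with pointwise deviation $\geq \nu/(2\ell)$, truncate via Lemma~\ref{lem:trunc} losing only $\tau^{1/4}/2$, then argue that the failure of Conditions~I and~II means the truncated coefficients are \emph{regular} in the sense needed for the concentration/anti-concentration argument. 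The only organizational difference is that the paper packages the last part as a separate statement (Lemma~\ref{lem:main-struct}) and, in the body of the proof, verifies that the truncated vectors satisfy its hypotheses (via \eqref{eqn:additional1}--\eqref{eqn:additional2} and the negation of Conditions~I and~II); you re-derive that lemma inline, which is logically equivalent.

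The one place you should tighten is the variance bound for $h({\bf X}^1,{\bf Y}^1) - h({\bf X}^0,{\bf Y}^0)$. You first write that its variance is ``at most $O(dk/Q)$ times the truncated coefficient mass assigned to those indices.'' With the paper's choice $Q = 16dk$, this factor is $\Theta(1)$, which is \emph{not} small relative to $\sigma^2_{\mathrm{noise}} \geq (\zeta/8)\|{\bf c}^{\rm reg}\|^2$ from Lemma~\ref{lem:noisy-conc}; Chebyshev then yields no useful bound, and the contradiction with the $\nu/(4\ell)$ deviation lower bound does not follow. Your own final paragraph refers instead to a ``$1/Q$ factor,'' which \emph{would} close the argument but contradicts the earlier $O(dk/Q)$. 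The paper's actual computation (eqn.~\eqref{eqn:varbound0}) is in between: it uses Cauchy--Schwarz together with $\E[\langle {\bf c}_{v,i}, {\bf X}_{v,i}\rangle^2] = \|{\bf c}_{v,i}\|^2/Q$ to get $\mathrm{Var}\bigl[h({\bf X}^1,{\bf Y}^1) - h({\bf X}^0,{\bf Y}^0)\bigr] \leq 2\|{\bf c}^{\rm reg}\|^2/\sqrt{Q} = \|{\bf c}^{\rm reg}\|^2 / O(\sqrt{dk})$, which \emph{is} a vanishing fraction of $\sigma^2_{\mathrm{noise}}$ once $z$ (and hence $d$) is large. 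You need to replace the $O(dk/Q)$ estimate with this more careful per-block Cauchy--Schwarz accounting (summing at most $4kd$ inner products per block, each of second moment $\|{\bf c}_{v,i}\|^2/Q$) before the Chebyshev step, after which the rest of your sketch goes through.
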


\medskip
\noindent
{\bf Labeling the Vertices of $\mc{L}$ using Lemma \ref{lem:dec-struct}.}			
Consider the following randomized labeling for each $v \in V_{\mc{L}}$: with probability $1/2$ each do Step 2 or 3.

\begin{enumerate}
	\item Choose $s \in  [\ell]$ u.a.r.
	\item Assign $\sigma(v) \overset{\rm unif}{\sim} \left(C^{\leq K}_{\tau}({\bf c}^{(s)}_{X,v}) \cup C^{\leq K}_{\tau}({\bf c}^{(s)}_{Y,v})\right)$.
	\item W.p. $1/2$ each do (a) or (b):
	\begin{enumerate}
		\item Assign $\sigma(v)$ a label $i \notin C_\tau({\bf c}^{(s)}_{X,v})$ with probability 
			$$\|{\bf c}^{(s)}_{X,v,i}\|^2/\left(\sum_{i \in [M]\setminus C_\tau({\bf c}^{(s)}_{X,v})}\|{\bf c}^{(s)}_{X, v,i}\|^2\right).$$
		\item Assign $\sigma(v)$ a label $i \notin C_\tau({\bf c}^{(s)}_{Y,v})$ with probability 
			$$\|{\bf c}^{(s)}_{Y,v,i}\|^2/\left(\sum_{i \in [M]\setminus C_\tau({\bf c}^{(s)}_{Y,v})}\|{\bf c}^{(s)}_{Y, v,i}\|^2\right).$$
	\end{enumerate}
\end{enumerate}
 
We now analyze the probability of the above labeling \emph{weakly} satisfying a fixed hyperedge $e \in E^*$. Suppose that $e$ satisfies Case I of Lemma \ref{lem:dec-struct} for $u, v \in e$. Then, with probability $\geq 1/(4\ell^2)$ both $u$ and $v$ are labeled using Step 1 of the  labeling procedure with $u$ choosing $r$ and $v$ choosing $p$. Therefore, in this case, $\pi_{e, u}(\sigma(u)) = \pi_{e, v}(\sigma(v))$ with probability at least
\begin{equation}
	\frac{1}{4\ell^2}\cdot\frac{1}{\left|C^{\leq K}_{\tau}({\bf c}^{(r)}_{X,u}) \cup C^{\leq K}_{\tau}({\bf c}^{(r)}_{Y,u})\right|} \cdot	\frac{1}{\left|C^{\leq K}_{\tau}({\bf c}^{(p)}_{X,v}) \cup C^{\leq K}_{\tau}({\bf c}^{(p)}_{Y,v})\right|} \geq \frac{1}{16 K^2\ell^2}.
\end{equation}
On the other hand, suppose $e$ satisfies Case II of Lemma \ref{lem:dec-struct} such that \eqref{eqn:dec-struct-1} holds, for $u, v \in e$. Then, with probability $1/(8\ell^2)$ $u$ is labeled according to Step 1 using $s=r$ and $v$ according to Step 2.a of the labeling procedure using $s=r$. Thus, in this case $\pi_{e, u}(\sigma(u)) = \pi_{e, v}(\sigma(v))$ with probability at least
\begin{equation}
	\frac{1}{8\ell^2}\cdot\frac{1}{\left|C^{\leq K}_{\tau}({\bf c}^{(r)}_{X,u}) \cup C^{\leq K}_{\tau}({\bf c}^{(r)}_{Y,u})\right|}\cdot \tau^4 \geq \frac{\tau^4}{16 K \ell^2},
\end{equation}
which also analogously holds when $e$ satisfies Case II of Lemma \ref{lem:dec-struct} such that \eqref{eqn:dec-struct-2} is true.

Combining the above with the lower bound on the size of $E^*$, we obtain that the expected fraction of hyperedges satisfied by the labeling is at least,
\begin{equation}
	\frac{1}{16\ell^2}\cdot\frac{\nu}{4}\cdot\min\left\{\frac{1}{K^2}, \frac{\tau^4}{K}\right\}
\end{equation}
Plugging in the parameters from \eqref{eqn:param} and $d = 4^z$ (as given in Theorem \ref{thm:slc-hardness}), the RHS of the above equation is at least $\nu(10k\log(16dk))^{-10} \geq \nu(10kz\log (64k))^{-10}$ (from the value of $d$ in Theorem \ref{thm:slc-hardness}). On the other hand, from the guarantee of the NO case of Theorem \ref{thm:slc-hardness}, we know that any labeling can weakly satisfy at most $2k^22^{-\gamma_0 z}$ fraction of hyperedges in $\mc{L}$. By choosing $z$ to be a large enough constant, the expected number of weakly satisfied hyperedges exceeds the soundness parameter, which completes the proof of Theorem \ref{thm:main-red}.

\subsection{Proof of Lemma \ref{lem:dec-struct}}
Fix an edge $e \in E^*$. Assume for a contradiction that $e$ does not satisfy Lemma \ref{lem:dec-struct}.
For convenience, define 
\begin{equation}
	B_{s, v} := C^{\leq K}_{\tau}({\bf c}^{(s)}_{X,v})\cup C^{\leq K}_{\tau}({\bf c}^{(s)}_{Y,v})
\end{equation}
for $s \in [\ell]$ and $v \in e$.
Let us say that two triples $(r, u, i_1)$ and $(p, v, i_2)$ from $[\ell]\times e \times [M]$ are distinct if they differ in at least one coordinate.  From the niceness of $e$ w.r.t. $\{({\bf c}^{(s)}_X, {\bf c}^{(s)}_Y)\}_{s=1}^\ell$ given by Lemma \ref{lem:nice-hyper} and the negation of Condition I of Lemma \ref{lem:dec-struct} we have the following observation. 

\begin{observation} \label{obs:triples}
	For any two distinct triples $(r, u, i_1)$ and $(p, v, i_2)$ s.t. $i_1 \in B_{r, u}$ and $i_2 \in B_{p, v}$, $\pi_{e, u}(i_1) \neq \pi_{e, v}(i_2)$. 
\end{observation}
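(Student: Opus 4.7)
The plan is to prove Observation \ref{obs:triples} by a case split on whether $u = v$ or $u \neq v$, drawing on exactly two already-established facts: the niceness of $e$ (Lemma \ref{lem:nice-hyper}) and the standing contradiction hypothesis that $e$ fails Case I of Lemma \ref{lem:dec-struct}. The only non-trivial content of the statement arises when the pair $(u, i_1)$ differs from $(v, i_2)$, since if these pairs coincide the two projections coincide automatically; I would implicitly restrict to this case. The first preliminary step is to record the containment $B_{s,w} \subseteq I_w({\bf c}^{(s)}_X, {\bf c}^{(s)}_Y) \subseteq L_w$, which is immediate from the definitions of $B_{s,w}$, of $I_w(\cdot,\cdot)$ in \eqref{eqn:Iv}, and of $L_w$ in \eqref{eqn:Lv}.

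For the first case, $u = v$, distinctness of the triples together with $(u,i_1) \neq (v,i_2)$ forces $i_1 \neq i_2$; moreover the containment above gives $i_1, i_2 \in L_u$. Since $e \in E^*$ is nice with respect to $\{({\bf c}^{(s)}_X, {\bf c}^{(s)}_Y)\}_{s=1}^\ell$, the projection $\pi_{e,u}$ is injective on $L_u$, which immediately yields $\pi_{e,u}(i_1) \neq \pi_{e,u}(i_2)$.

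For the second case, $u \neq v$, I would argue by contradiction: suppose $\pi_{e,u}(i_1) = \pi_{e,v}(i_2)$. Since $i_1 \in B_{r,u}$ and $i_2 \in B_{p,v}$, this common value witnesses that
$$\pi_{e,u}\left(C^{\leq K}_{\tau}({\bf c}^{(r)}_{X,u})\cup C^{\leq K}_{\tau}({\bf c}^{(r)}_{Y,u})\right) \cap \pi_{e,v}\left(C^{\leq K}_{\tau}({\bf c}^{(p)}_{X,v})\cup C^{\leq K}_{\tau}({\bf c}^{(p)}_{Y,v})\right) \neq \emptyset.$$
But this is precisely the statement that Case I of Lemma \ref{lem:dec-struct} holds for the distinct vertices $u, v \in e$ with the choices $r, p$, contradicting the standing assumption (at the start of the proof of Lemma \ref{lem:dec-struct}) that $e$ does not satisfy the conclusion of that lemma. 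I do not foresee a substantive obstacle: the proof is essentially bookkeeping that pipes the containment $B_{s,w} \subseteq L_w$ into niceness in one case and unpacks the definition of Case I in the other.
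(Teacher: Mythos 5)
Your proof is correct and follows exactly the route the paper intends: the paper justifies Observation \ref{obs:triples} in one line by invoking niceness (which handles the $u=v$ case via injectivity of $\pi_{e,u}$ on $L_u \supseteq B_{r,u}\cup B_{p,u}$) together with the negation of Condition I of Lemma \ref{lem:dec-struct} (which handles $u\neq v$), and you have simply made that reasoning explicit. Your observation that the statement must implicitly be restricted to pairs with $(u,i_1)\neq(v,i_2)$ is a fair reading of a slight imprecision in the paper's phrasing, and is consistent with how the observation is actually used (to guarantee that the distinct variables in \eqref{eqn:disjvar} land in distinct blocks).
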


Combining the above with Observation \ref{obs:inducedist} we obtain that after any fixation of $\{S_j, S_j'\}_{j=1}^m$ the variables 
\begin{equation}
	\{{\bf X}_{v, i}\ :\ v \in e_X,s \in [\ell], i \in C^{\leq K}_{\tau}({\bf c}^{(s)}_{X,v})\} \cup \{{\bf Y}_{v, i}\ :\ v \in e_Y, s \in [\ell], i \in C^{\leq K}_{\tau}({\bf c}^{(s)}_{Y,v})\}, \label{eqn:disjvar}
\end{equation}
are independently distributed under both $\mc{D}^e_0$ and $\mc{D}^e_1$. Further, since their marginal distributions match under  $\mc{D}^e_0$ and $\mc{D}^e_1$ (again by  Observation \ref{obs:inducedist}), their joint distributions also match. Using this we describe in Figure \ref{fig:conddist} a distribution $\wh{\mc{D}}_e$ on $(({\bf X}^0, {\bf Y}^0), ({\bf X}^1, {\bf Y}^1))$ with  marginal distributions $\mc{D}^e_0$ and $\mc{D}^e_1$ respectively.
\hspace{1cm}
\begin{figure}[h!]
	\begin{mdframed}
\begin{itemize}
	\item[1.] Sample $\{S_j, S_j'\}_{j=1}^m$ as according to $\mc{D}^e$.
	\item[2.] Conditioned on $\{S_j, S_j'\}_{j=1}^m$, sample the variables in \eqref{eqn:disjvar} by sampling $b_j$ for $j \in \cup_{v\in e}\cup_{s\in[\ell]}\pi_{e,v}(B_{s,v})$.
	\item[3.] Conditioned on the fixing till now sample all the rest of the $b_j$ and then the rest of the variables in
		$$\{{\bf X}_{v, i} :  v \in e_X, v \not\in S_j, i \in \pi^{-1}_{e,v}(j)\} \cup \{{\bf Y}_{v, i} :  v\in e_Y, v\not\in S'_j, i \in \pi^{-1}_{e,v}(j)\},$$
		according to the distribution common to $\mc{D}^e_0$ and $\mc{D}^e_1$ (see  Observation \ref{obs:inducedist}). 
	\item[4.] Call the above fixings $\Gamma_e$. Conditioned on $\Gamma_e$ sample $({\bf X}^0, {\bf Y}^0) \gets \mc{D}^e_0$ and $({\bf X}^1, {\bf Y}^1) \gets \mc{D}^e_1$.
	\item[5.] Output $(({\bf X}^0, {\bf Y}^0), ({\bf X}^1, {\bf Y}^1))$.
\end{itemize}
\end{mdframed}
	\caption{Distribution $\wh{\mc{D}}_e$}
	\label{fig:conddist}
\end{figure}

From \eqref{eqn:ffine} we obtain that,
\begin{equation}
	\E_{(({\bf X}^0, {\bf Y}^0), ({\bf X}^1, {\bf Y}^1))\gets \wh{\mc{D}}_e}\left[\left|f\left({\bf X}^1, {\bf Y}^1\right) - f\left({\bf X}^0, {\bf Y}^0\right)\right|\right] \geq \frac{\nu}{2}.
\end{equation}
By averaging, there is one halfspace ${\rm pos}(h)(\cdot)$  on which $f$ depends, satisfying:
\begin{equation}
	\E_{(({\bf X}^0, {\bf Y}^0), ({\bf X}^1, {\bf Y}^1))\gets \wh{\mc{D}}_e}\left[\left|{\rm pos}\left(h\left({\bf X}^1, {\bf Y}^1\right)\right) - {\rm pos}\left(h\left({\bf X}^0, {\bf Y}^0\right)\right)\right|\right] \geq \frac{\nu}{2\ell}. \label{eqn:nu2ell}
\end{equation}
Let the linear form $h$ be given by $h({\bf X}, {\bf Y}) = \Langle {\bf c}_{X}, {\bf X}\Rangle + \Langle {\bf c}_{Y}, {\bf Y}\Rangle + \theta$.

Applying Lemma \ref{lem:trunc} iteratively for each $v \in e$ on $h$, we obtain a \emph{truncated} linear form $\tilde{h}({\bf X}, {\bf Y}) = \Langle \tilde{{\bf c}}_{X}, {\bf X}\Rangle + \Langle \tilde{{\bf c}}_{Y}, {\bf Y}\Rangle + \theta$ satisfying 
\begin{itemize}
	\item[(A)] $\tilde{\bf c}_{X, u} = {\bf c}_{X, u}$ and $\tilde{\bf c}_{Y, u} = {\bf c}_{Y, u}$ for all $u \in V_{\mc{L}}, u \not\in e$. 
	\item[(B)] For $v \in e_X$: $\tilde{\bf c}_{X, v, i} = {\bf 0}$ for all $i \in C_{\tau}({\bf c}_{X,v})\setminus C_{\tau}^{\leq K}({\bf c}_{X,v})$ and $\tilde{\bf c}_{X, v, i} = {\bf c}_{X, v, i}$ otherwise,
	\item[(C)] For $v \in e_Y$: $\tilde{\bf c}_{Y, v, i} = {\bf 0}$ for all $i \in C_{\tau}({\bf c}_{Y,v})\setminus C_{\tau}^{\leq K}({\bf c}_{Y,v})$ and $\tilde{\bf c}_{Y, v, i} = {\bf c}_{Y, v, i}$ otherwise,
\end{itemize}
and by union bound,
\begin{equation}
	\E_{(({\bf X},{\bf Y}), a) \sim \mathcal{D}_e}\left[\left|{\rm pos}\left(h\left({\bf X},{\bf Y}\right)\right) - {\rm pos}\left(\tilde{h}\left({\bf X},{\bf Y}\right)\right)\right|\right] \leq 2k\cdot\frac{\tau^{1/4}}{4k} = \frac{\tau^{1/4}}{2}. \label{eqn:applytrucn}
\end{equation}
Using  the above  we obtain,
\begin{eqnarray}
	& & \E_{(({\bf X}^0, {\bf Y}^0), ({\bf X}^1, {\bf Y}^1))\gets \wh{\mc{D}}_e}\left[\left|{\rm pos}\left(\tilde{h}\left({\bf X}^1, {\bf Y}^1\right)\right) - {\rm pos}\left(\tilde{h}\left({\bf X}^0, {\bf Y}^0\right)\right)\right|\right] \nonumber \\ 
	& \geq & \E_{(({\bf X}^0, {\bf Y}^0), ({\bf X}^1, {\bf Y}^1))\gets \wh{\mc{D}}_e}\left[\left|{\rm pos}\left(h\left({\bf X}^1, {\bf Y}^1\right)\right) - {\rm pos}\left(h\left({\bf X}^0, {\bf Y}^0\right)\right)\right|\right] \nonumber \\ & & \ \  - \left(\E_{(({\bf X}^0,{\bf Y}^0)) \sim \mathcal{D}^0_e}\left[\left|{\rm pos}\left(h\left({\bf X}^0,{\bf Y}^0\right)\right) - {\rm pos}\left(\tilde{h}\left({\bf X}^0,{\bf Y}^0\right)\right)\right|\right]  \right. \nonumber \\
	& & \ \ \ \ \left. + \E_{(({\bf X}^1,{\bf Y}^1)) \sim \mathcal{D}^1_e}\left[\left|{\rm pos}\left(h\left({\bf X}^1,{\bf Y}^1\right)\right) - {\rm pos}\left(\tilde{h}\left({\bf X}^1,{\bf Y}^1\right)\right)\right|\right]\right) \nonumber \\
	& = & \frac{\nu}{2\ell} - 2\E_{(({\bf X},{\bf Y}), a) \sim \mathcal{D}_e}\left[\left|{\rm pos}\left(h\left({\bf X},{\bf Y}\right)\right) - {\rm pos}\left(\tilde{h}\left({\bf X},{\bf Y}\right)\right)\right|\right] \label{eqn:diffbd} \\
	& \geq & \frac{\nu}{2\ell} - \tau^{1/4}, \label{eqn:diffbd1}
\end{eqnarray}
where the equality \eqref{eqn:diffbd} is due to $a$ being $0$ or $1$ with equal probability under $\mc{D}^e$, and the final inequality uses \eqref{eqn:nu2ell}.

From the structural properties (A), (B) and (C) of $\tilde{h}$ listed above it is easy to see that $(\tilde{{\bf c}}_{X}, \tilde{{\bf c}}_{Y})$ satisfies \eqref{eqn:lemstructsetting}, and
\begin{align}
	C_\tau(\tilde{{\bf c}}_{X,v}) = C^{\leq K}_\tau(\tilde{{\bf c}}_{X,v}) = C^{\leq K}_\tau({\bf c}_{X,v}),& \tn{ and \ \ } \sum_{i \notin C_\tau (\tilde{{\bf c}}_{X,v})}\left\|\tilde{{\bf c}}_{X,v,i}\right\|^2_2 = \sum_{i \notin C_\tau ({\bf c}_{X,v})}\left\|{\bf c}_{X,v,i}\right\|^2_2 \tn{\ for\ } v \in e_X, \label{eqn:additional1} \\
	C_\tau(\tilde{{\bf c}}_{Y,v}) = C^{\leq K}_\tau(\tilde{{\bf c}}_{Y,v}) = C^{\leq K}_\tau({\bf c}_{Y,v}),& \tn{ and \ \ } \sum_{i \notin C_\tau (\tilde{{\bf c}}_{Y,v})}\left\|\tilde{{\bf c}}_{Y,v,i}\right\|^2_2 = \sum_{i \notin C_\tau ({\bf c}_{Y,v})}\left\|{\bf c}_{Y,v,i}\right\|^2_2 \tn{\ for\ } v \in e_Y,
	\label{eqn:additional2}
\end{align}
Together with (B) and (C), \eqref{eqn:additional1} and \eqref{eqn:additional2} imply that $I_v({\bf c}_X, {\bf c}_Y) \supseteq I_v(\tilde{\bf c}_X, \tilde{\bf c}_Y)$ and thus $e$ remains nice w.r.t. $(\tilde{{\bf c}}_{X}, \tilde{{\bf c}}_{Y})$. From our assumption that $e$ does not satisfy the conditions from  Lemma \ref{lem:dec-struct}, we claim that $(\tilde{{\bf c}}_{X}, \tilde{{\bf c}}_{Y})$ satisfies both the conditions of Lemma  \ref{lem:main-struct}. In particular, by negating Condition $\tn{I}$ of Lemma \ref{lem:dec-struct}, for any pair of vertices $u,v \in e$ we have 
\begin{align*}
& \pi_{e,u}\left(C^{\leq K}_\tau({\bc}_{X,u}) \cup C^{\leq K}_\tau({\bc}_{{Y},u})\right) \cap \pi_{e,v}\left(C^{\leq K}_\tau({\bc}_{X,v}) \cup C^{\leq K}_\tau({{\bf c}}_{Y,v})\right) = \emptyset \\
\Rightarrow\ &\pi_{e,u}\Big(C_\tau(\tilde{{\bf c}}_{X,u}) \cup C_\tau(\tilde{{\bf c}}_{{Y},u})\Big) \cap \pi_{e,v}\Big(C_\tau(\tilde{{\bf c}}_{X,v}) \cup C_\tau(\tilde{{\bf c}}_{ Y,v})\Big)  = \emptyset
\end{align*}
which gives us Condition 1 of Lemma \ref{lem:main-struct}. Towards establishing the Condition 2 of Lemma \ref{lem:main-struct}, we observe that for $P$ as used in Lemma \ref{lem:main-struct}, any $v \in e_X$ and $j \in P \setminus\pi_{e,v}\left(C^{\leq K}_\tau(\tilde{\bc}_{X,v})\right)$ we have
\begin{align*}
	\sum_{i \in \pi^{-1}_{e,v}(j) \setminus C_\tau(\tilde{\bc}_{X,v})} \|\tilde{\bc}_{X,v,i}\|^2  = & \sum_{i \in \pi^{-1}_{e,v}(j) \setminus C_\tau({\bc}_{X,v})} \|\tilde{\bc}_{X,v,i}\|^2\  + \sum_{i \in \pi^{-1}_{e,v}(j)\cap(C_\tau({\bc}_{X,v})\setminus  C_\tau^{\leq K}({\bc}_{X,v}))} \|\tilde{\bc}_{X,v,i}\|^2 \\ 
	= & \sum_{i \in \pi^{-1}_{e,v}(j) \setminus C_\tau({\bc}_{X,v})} \|{\bc}_{X,v,i}\|^2 \leq \tau^4\sum_{i \notin C_\tau({\bc}_{X,v})} \|{\bc}_{X,v,i}\|^2 = \tau^4\sum_{i \notin C_\tau(\tilde{\bc}_{X,v})} \|\tilde{\bc}_{X,v,i}\|^2
\end{align*}
where the first equality  uses the first part of \eqref{eqn:additional1}, the second and the last equalities follow property (B) of $\tilde{h}$, and the inequality uses our assumption that $e$ does not satisfy Condition $\tn{II}$ from Lemma \ref{lem:dec-struct}. Similar guarantees also hold for any $v \in e_Y$, which together give us Condition 2 of Lemma \ref{lem:main-struct}. Therefore, by applying Lemma \ref{lem:main-struct} we get that
$$\E_{\widehat{\mc{D}}}\left[\left|{\rm pos}\left(\tilde{h}\left({\bf X}^0,{\bf Y}^0\right)\right) - {\rm pos}\left(\tilde{h}\left({\bf X}^1,{\bf Y}^1\right)\right)\right|\right] \leq O(\tau),$$
which is a contradiction to \eqref{eqn:diffbd1} for  small enough setting of $\tau$, which can be achieved through \eqref{eqn:param} by setting $z$ in Theorem \ref{thm:slc-hardness} large enough.

\section{Truncating Long Critical Index Lists}						\label{sec:trun-crit-index}

Fix a vertex $v \in V_{\mc{L}}$. 
Consider a linear form $h$ given by 
$h({\bf X}, {\bf Y}) = \Langle {\bf c}_{X}, {\bf X}\Rangle + \Langle {\bf c}_{Y}, {\bf Y}\Rangle + \theta$.
Let $\tilde{h}$ be the linear form of \emph{truncated} coefficient vectors given by  $\tilde{h}({\bf X}, {\bf Y}) = \Langle \tilde{{\bf c}}_{X}, {\bf X}\Rangle + \Langle \tilde{{\bf c}}_{Y}, {\bf Y}\Rangle + \theta$,
where
\begin{itemize}
	\item $\tilde{\bf c}_{X, u} = {\bf c}_{X, u}$ and $\tilde{\bf c}_{Y, u} = {\bf c}_{Y, u}$ for all $u \in V_{\mc{L}}, u \neq v$. 
	\item For $v \in e_X$: $\tilde{\bf c}_{X, v, i} = {\bf 0}$ for all $i \in C_{\tau}({\bf c}_{X,v})\setminus C_{\tau}^{\leq K}({\bf c}_{X,v})$ and $\tilde{\bf c}_{X, v, i} = {\bf c}_{X, v, i}$ otherwise.
	\item For $v \in e_Y$: $\tilde{\bf c}_{Y, v, i} = {\bf 0}$ for all $i \in C_{\tau}({\bf c}_{Y,v})\setminus C_{\tau}^{\leq K}({\bf c}_{Y,v})$ and $\tilde{\bf c}_{Y, v, i} = {\bf c}_{Y, v, i}$ otherwise.
\end{itemize}
This section proves the following lemma.
\begin{lemma}					\label{lem:trunc}
	Given the above setting, for any hyperedge $e$, such that $e$ is nice w.r.t. $({\bf c}_{X}, {\bf c}_{Y})$ (as given in Lemma \ref{lem:nice-hyper}), and $v \in e$ fixed above, the following holds:
	\begin{equation}
		\E_{(({\bf X},{\bf Y}), a) \sim \mathcal{D}_e}\left[\left|{\rm pos}\left(h\left({\bf X},{\bf Y}\right)\right) - {\rm pos}\left(\tilde{h}\left({\bf X},{\bf Y}\right)\right)\right|\right] \leq \frac{\tau^{1/4}}{4k}. \label{eqn:lemmatrucn}
	\end{equation}
\end{lemma}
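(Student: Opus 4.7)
The plan is to bound $\Pr_{\mc{D}_e}[\mathrm{pos}(h)\ne\mathrm{pos}(\tilde h)]$ via the inclusion $\{\mathrm{pos}(h)\ne\mathrm{pos}(\tilde h)\}\subseteq\{|\tilde h|\le|\Delta|\}$, where $\Delta:=h-\tilde h$, and then split the right-hand side into a concentration event $\{|\Delta|>M\}$ and an anti-concentration event $\{|\tilde h|\le M\}$ for a threshold $M$ that I will tune at the end.

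First I would control $|\Delta|$. Assuming WLOG $v\in e_X$ (the $v\in e_Y$ case is symmetric), only the coefficients on indices $i\in C_\tau({\bf c}_{X,v})\setminus C_\tau^{\le K}({\bf c}_{X,v})$ differ between $h$ and $\tilde h$, and the corresponding ${\bf Y}$-variables at $v$ are identically $0$ under $\mc{D}_e$. By Proposition~\ref{prop:crit-ix}, the sequence $\|{\bf c}_{X,v,\sigma(j)}\|^2$ decays geometrically along the critical index, giving the tail bound $T:=\sum_{j>K}\|{\bf c}_{X,v,\sigma(j)}\|^2 \le \frac{(1-\tau)^K}{\tau^2}\|{\bf c}_{X,v,\sigma(1)}\|^2 \le (\tau/Q)^{\Omega(1)}\|{\bf c}_{X,v,\sigma(1)}\|^2$ for our choice $K=(20/\tau)\log(Q/\tau)$ from \eqref{eqn:param}. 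Since each ${\bf X}_{v,i,q}\in[0,1]$ is independent under $\mc{D}_e$, $\mathrm{Var}(\Delta)\le T$, and Chebyshev's inequality gives $\Pr[|\Delta|>M]\le T/M^2$.

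Next I would establish anti-concentration of $\tilde h$ at scale $M$. Let $i^*:=\sigma(1)\in C_\tau^{\le K}({\bf c}_{X,v})$ and $j^*:=\pi_{e,v}(i^*)$. Niceness of $e$ with respect to $({\bf c}_X,{\bf c}_Y)$ ensures that $j^*$ is not the $\pi_{e,u}$-image of any other coefficient coordinate in the critical heads at any $u\in e\setminus\{v\}$, so block $j^*$ is essentially ``owned'' by $v$. I would then condition on the event $\mathcal{E}:=\{b_{j^*}=0,\, v\notin S_{j^*}\}$, which by independence in $\mc{D}_e$ has probability at least $\zeta(k-t)/k$, a positive constant. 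Under $\mathcal{E}$, Step~4.2 of $\mc{D}_e$ forces ${\bf X}_{v,i^*}$ to be drawn uniformly from $\{0,1\}^Q$, independently of every other random choice feeding into $\tilde h$. Fixing those other choices, $\tilde h = \theta' + \sum_{q=1}^Q c_{X,v,i^*,q}\,X_{v,i^*,q}$ for some constant $\theta'$. Performing a sub-critical-index decomposition internally on the $Q$-vector ${\bf c}_{X,v,i^*}$ --- treating the (few) dominant entries by Lemma~\ref{lem:LO} and the remaining regular bulk by Theorem~\ref{thm:berry-ess} --- yields $\Pr[|\tilde h|\le M\mid\mathcal{E},\text{rest}] = O\bigl(M/(\sqrt{\tau}\,\|{\bf c}_{X,v,i^*}\|)\bigr)$ as long as $M\le \sqrt{\tau}\,\|{\bf c}_{X,v,i^*}\|$; integrating out the conditioning costs only the constant factor $1/\Pr[\mathcal{E}]$.

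Finally I would set $M := T^{1/2}/\tau^{1/8}$. Chebyshev then gives $\Pr[|\Delta|>M]\le \tau^{1/4}$, while the tail estimate $T\le (\tau/Q)^{\Omega(1)}\|{\bf c}_{X,v,i^*}\|^2$ combined with the parameter choices in \eqref{eqn:param} ensures $M\ll \sqrt{\tau}\,\|{\bf c}_{X,v,i^*}\|$, so the anti-concentration step delivers $\Pr[|\tilde h|\le M] = O(\tau^{\Omega(1)})$. Both terms can be driven below $\tau^{1/4}/(8k)$ by choosing $z$ (hence $1/\tau$) in Theorem~\ref{thm:slc-hardness} sufficiently large, so that summing them gives the claimed bound~\eqref{eqn:lemmatrucn}. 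The main obstacle is precisely the anti-concentration bound at scale $M\ll \sqrt{\tau}\|{\bf c}_{X,v,i^*}\|$: the $Q$-length vector ${\bf c}_{X,v,i^*}$ may itself be highly irregular (a handful of large entries among many small ones), so neither Lemma~\ref{lem:LO} nor Theorem~\ref{thm:berry-ess} applies out of the box. The hybrid Littlewood--Offord--Erd\H os / Berry--Esseen argument that splits ${\bf c}_{X,v,i^*}$ by its own internal critical index, handling the ``spikes'' by \textbf{LO} and the regular residual by \textbf{BE}, is where the real technical work lies and is deferred to Appendix~\ref{sec:LO}.
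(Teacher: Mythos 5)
Your anti-concentration step has a genuine gap. You condition on the single event $\mathcal{E}=\{b_{j^*}=0,\ v\notin S_{j^*}\}$ and then try to anti-concentrate $\tilde h$ using only the one $Q$-dimensional Bernoulli vector $\mathbf{X}_{v,i^*}$, where $i^*=\sigma(1)$. But no internal critical-index decomposition of a \emph{single} $Q$-vector can give anti-concentration at scale $\ll\|\mathbf{c}_{X,v,i^*}\|$ when that vector is spiky. Concretely, if $\mathbf{c}_{X,v,i^*}=(c,0,\dots,0)$, then $\tilde h = \theta' + c\,X_{v,i^*,1}$ takes exactly two values, and for $\theta'=0$ we get $\Pr[|\tilde h|\le M]\ge 1/2$ for every $M\ge 0$. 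Lemma~\ref{lem:LO} applied to one variable gives only $O(1)$, and Theorem~\ref{thm:berry-ess} is inapplicable because the single block is not regular. This is not a technical detail that Appendix~\ref{sec:LO} resolves: Lemma~\ref{lem:block-LO} with $T=1$ block is vacuous, and its two-case proof fundamentally needs $T$ large (LO over $T$ spikes in Case (ii)).

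The paper's fix, which you are missing, is to anti-concentrate over \emph{many} noise blocks rather than one. It defines $A:=\{i\in[K/4]:\pi(i)=j,\ b_j=0,\ v\notin S_j\}$, the set of indices among the top $K/4$ of the critical head whose corresponding $\mathbf{X}_{v,i}$ is a genuine noise variable, and shows (Claim~\ref{cl:trunc2}) that $|A|\ge K\zeta/8$ except with probability $\exp(-K\zeta/64)$, by Chernoff, using niceness of $e$ so that the $\pi$-images of the top $K/4$ indices are distinct and independent. It then applies Lemma~\ref{lem:block-LO} to the $|A|=\Omega(K)$ blocks $\{\mathbf{c}_{X,v,i}\}_{i\in A}$, obtaining $\Pr[|\sum_{i\in A}\langle\mathbf{c}_{X,v,i},\mathbf{X}_{v,i}\rangle+\Delta|\le\|\mathbf{c}_{X,v,K/2}\|_2]\le|A|^{-1/4}$; here the geometric decay of critical-index norms guarantees $\|\mathbf{c}_{X,v,K/2}\|_2\le(1/|A|)\min_{i\in A}\|\mathbf{c}_{X,v,i}\|_2$ so the scale matches. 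The set $A$ is essential: it is what makes Lemma~\ref{lem:block-LO} effective, since it guarantees many blocks, one of which is either internally regular (BE) or else all contribute a spike (LO over $\Omega(K)$ variables). Your proposal needs to be repaired to condition on the full set $A$ rather than the single index $i^*$. A secondary and minor divergence: the paper bounds $|h-\tilde h|\le\tau^2\|\mathbf{c}_{X,v,K/2}\|_2$ \emph{deterministically} via the geometric decay of $\ell_1$-norms of the critical tail and the fact that $\mathbf{X}_{v,i}\in\{0,1\}^Q$, rather than via Chebyshev; your variance argument could be made to work but is weaker and introduces an unnecessary probabilistic failure event.
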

\begin{proof} 
	Assume that $v \in e_X$ (we shall handle the $v \in e_Y$ case analogously). Given this, we may further assume that $C_{\tau}({\bf c}_{X,v})\setminus C_{\tau}^{\leq K}({\bf c}_{X,v}) \neq \emptyset$ implying that $\left|C_{\tau}^{\leq K}({\bf c}_{X,v})\right| = K$, otherwise $h = \tilde{h}$ under $\mathcal{D}_e$.
	For ease of notation we  relabel the indices in $[M]$ so that $C_{\tau}^{\leq K}({\bf c}_{X,v}) = [K]$, and denote $\pi = \pi_{e,v}$. First we  bound the difference between $h$ and $\tilde{h}$ as follows:
	\begin{eqnarray}
		\left|h\left({\bf X},{\bf Y}\right) - \tilde{h}\left({\bf X},{\bf Y}\right)\right| & \leq & \sum_{i \in C_{\tau}({\bf c}_{X,v})\setminus [K]}  \Langle {\bf c}_{X, v, i}, {\bf X}_{v,i} \Rangle \nonumber \\
		& \leq & \sum_{i \in C_{\tau}({\bf c}_{X,v})\setminus [K]} \|{\bf c}_{X, v, i}\|_1 \nonumber \\
		\tn{(By Cauchy-Schwartz)}  & \leq & \sqrt{Q}\sum_{i \in C_{\tau}({\bf c}_{X,v})\setminus [K]} \|{\bf c}_{X, v, i}\|_2 \nonumber \\
		\tn{(By Proposition \ref{prop:crit-ix})}		& \leq & \sqrt{Q/\tau} (1 - \tau)^{K/4} \|{\bf c}_{X, v, K/2}\|_2 \sum_{i = 1}^{\left| C_{\tau}({\bf c}_{X,v})\right| - K} (1-\tau)^{i/2} \nonumber  \\
		& \leq & \sqrt{Q/\tau} (1 - \tau)^{K/4}(2/\tau) \|{\bf c}_{X, v, K/2}\|_2 \nonumber \\
		\tn{(By our setting of }K\tn{)}						& \leq & \tau^2 \|{\bf c}_{X, v, K/2}\|_2 \label{eqn:truncbound}
	\end{eqnarray}
	
	Given a choice of $\{S_j, S'_j, b_j\}_{j=1}^m$ we can define the following subset:
	\begin{align}
		A & := \{i \in [K/4] :  \pi(i) = j, b_j = 0, v \not\in S_j\}
	\end{align}
	Let $\mc{D}'$ be the restriction of $\mc{D}^e$ fixing everything else except for values of ${\bf X}_{v, i}$ for $i \in A \cup  \left(C_{\tau}({\bf c}_{X,v})\setminus [K]\right)$. 
	We have the following two claims which we shall prove later in this section.
	\begin{claim}\label{cl:trunc1}
		\begin{equation}
			\Pr_{\mc{D}'}\left[\left|\sum_{i \in A} \Langle {\bf c}_{X, v, i}, {\bf X}_{v,i}\Rangle + \Delta \right| \leq \|{\bf c}_{X, v, K/2}\|_2\right] \leq |A|^{-1/4}, \label{eqn:cltrunc1}
		\end{equation}
		for any constant $\Delta$.
	\end{claim}
	\begin{claim}\label{cl:trunc2}
		$$\Pr_{\mc{D}^e}\left[|A| \leq K\zeta/8\right] \leq \tn{exp}(-K\zeta/64).$$
	\end{claim}

	From the construction of $\tilde{\bf c}_{X, v}$ defined earlier,  we obtain that under $\mc{D}'$,
	\begin{eqnarray}
		& & \tilde{h}\left({\bf X},{\bf Y}\right) = \sum_{i \in A} \Langle {\bf c}_{X, v, i}, {\bf X}_{v,i} \Rangle + \theta', \nonumber \\
		& \Rightarrow & {h}\left({\bf X},{\bf Y}\right) = \sum_{i \in A} \Langle {\bf c}_{X, v, i}, {\bf X}_{v,i}\Rangle + \theta' + {h}\left({\bf X},{\bf Y}\right) - \tilde{h}\left({\bf X},{\bf Y}\right)
	\end{eqnarray} where $\theta'$ is some constant.
	The above implies that,
	\begin{eqnarray}
		\Pr_{\mc{D}'}\left[{\rm pos}\left(h\left({\bf X},{\bf Y}\right)\right) \neq {\rm pos}\left(\tilde{h}\left({\bf X},{\bf Y}\right)\right)\right] & \leq & \Pr_{\mc{D}'}\left[\left|\sum_{i \in A} \Langle {\bf c}_{X, v, i}, {\bf X}_{v,i} \Rangle + \theta' \right| \leq \left|h\left({\bf X},{\bf Y}\right) - \tilde{h}\left({\bf X},{\bf Y}\right)\right| \right] \nonumber \\
		\tn{(By \eqref{eqn:truncbound})} & \leq & \Pr_{\mc{D}'}\left[\left|\sum_{i \in A} \Langle {\bf c}_{X, v, i}, {\bf X}_{v,i} \Rangle + \theta' \right| \leq \tau^2 \|{\bf c}_{X, v, K/2}\|_2 \right] \nonumber \\
		\tn{(By Claim \ref{cl:trunc1})}  & \leq & |A|^{-1/4}.
	\end{eqnarray}
	Using the above along with Claim \ref{cl:trunc2} we can upper bound the LHS of \eqref{eqn:lemmatrucn} by 
	\begin{equation}
		\left(K\zeta/8\right)^{-1/4} + \tn{exp}(-K\zeta/64), \label{eqn:trunc2}
	\end{equation}
	which is at most $\tau^{1/4}/4k$ by our setting of parameters in \eqref{eqn:param} and large enough $z$ in Theorem \ref{thm:slc-hardness}.

	For the case when $v \in e_Y$, the proof is analogous to the above. The quantitative difference arises from defining $A$ instead as $\{i \in [K/4] :  \pi(i) = j, b_j = 1, v \not\in S'_j\}$ and replacement of $\zeta$ by $(1 -\zeta)$ in the corresponding version of Claim  \ref{cl:trunc2} and in \eqref{eqn:trunc2}. By our setting of the parameters, the LHS of \eqref{eqn:lemmatrucn} remains bounded by $\tau^{1/4}/4k$. 
\end{proof}

\begin{proof} (of Claim \ref{cl:trunc1}) Observe that by Proposition \ref{prop:crit-ix},  
	$$\|{\bf c}_{X, v, K/2}\|_2 \leq \sqrt{1/\tau}(1 - \tau)^{-K/4} \min_{i \in A} \|{\bf c}_{X, v, K/2}\|_2 \leq (4/K)\min_{i \in A} \|{\bf c}_{X, v, i}\|_2 \leq (1/|A|)\min_{i \in A} \|{\bf c}_{X, v, i}\|_2,$$
	where the penultimate inequality follows from the setting of $K$ which is large enough. Using this the LHS of \eqref{eqn:cltrunc1} can be upper bounded by,
	$$\Pr_{\mc{D}'}\left[\left|\sum_{i \in A} \Langle {\bf c}_{X, v, i}, {\bf X}_{v,i}\Rangle + \Delta \right| \leq \frac{\min_{i \in A} \|{\bf c}_{X, v, i}\|_2}{|A|}\right],$$
	which is at most $|A|^{-1/4}$ by an application of Lemma \ref{lem:block-LO}.
\end{proof}

\begin{proof} (of Claim \ref{cl:trunc2}) Since $e$ is nice w.r.t. $({\bf c}_X, {\bf c}_Y)$, we have that $|\pi([K/4])| = K/4$. Thus, each $i \in [K/4]$ is independently chosen to be in $A$ w.p. $\zeta(1 - t/k) \geq \zeta/2$. An application of Chernoff bound completes the proof.

\end{proof}

\section{Main Structural Lemma}					\label{sec:struct}
For the rest of this section we shall consider  a linear form given by 
$h({\bf X}, {\bf Y}) = \Langle {\bf c}_{X}, {\bf X}\Rangle + \Langle {\bf c}_{Y}, {\bf Y}\Rangle + \theta$, and an edge $e$ which is nice w.r.t $({\bf c}_X, {\bf c}_Y)$. Let $\pi_v := \pi_{e,v}$ for $v \in e$.
Further, we assume that $({\bf c}_X, {\bf c}_Y)$ satisfies
\begin{equation}
	C_\tau({\bf c}_{X, u}) = C^{\leq K}_\tau({\bf c}_{X, u}), \forall u \in e_X, \tn{\ \  and\ \ }  C_\tau({\bf c}_{Y, v}) = C^{\leq K}_\tau({\bf c}_{Y, v}), \forall v \in e_Y. \label{eqn:lemstructsetting}
\end{equation}
For convenience we define the following notation for each $v \in e$:
\begin{align}
	B_v &:= \begin{cases}
		C_\tau({\bf c}_{X, v}) & \tn{if } v \in e_X, \\
		C_\tau({\bf c}_{Y, v}) & \tn{if } v \in e_Y.
	\end{cases}
	& {\bf c}_{v} &:= \begin{cases}
		{\bf c}_{X, v} & \tn{if } v \in e_X, \\
		{\bf c}_{Y, v} & \tn{if } v \in e_Y.
	\end{cases}
	\nonumber \\
	 \beta_v &:= \begin{cases}
		0 & \tn{if } v \in e_X, \\
		1 & \tn{if } v \in e_Y.
	 \end{cases}
	 & \hat{S}_{v, j} &:= \begin{cases}
		 S_j & \tn{if } v \in e_X, \\
		 S'_j & \tn{if } v \in e_Y.
	 \end{cases}
	 {\tn \ \ } \forall j \in [m].
\end{align}
Further, let $P \subseteq [m]$ denote $\cup_{v\in e} \pi_v(B_v)$ and,
\begin{equation}
	{\bf c}^{(j)}_v :=  \left({\bf c}_{v, i}\right)_{i \in \pi_v^{-1}(j)}, \tn{\ \ \ \ }  {\bf c}^{\rm reg}_v :=  \left({\bf c}_{v, i}\right)_{i \in [M] \setminus (\pi_v^{-1}(P) \cup B_v)}, \tn{\ \ \ \ } {\bf c}^{\rm reg} =  ({\bf c}^{\rm reg}_v)_{v \in e}.
\end{equation}

We have the following lemma.

\begin{lemma}
\label{lem:main-struct}
	Given the above setting, if the following two conditions are satisfied,
	\begin{itemize}
		\item[1.] {\bf No Weak Intersections:} For every $u, v \in e$,   $\pi_{u}(B_u) \cap \pi_{v}(B_v) = \emptyset$,
		\item[2.] {\bf No Large Regular Top Blocks:} For every $v \in e, j \in P\setminus \pi_v(B_v$),  $\|{\bf c}^{(j)}_{v}\|^2 \leq \tau^4\sum_{i \notin B_v}\|\bc_{v,i}\|^2$, 
	\end{itemize}
	then the following holds,
	\begin{equation}
		\E_{\widehat{\mc{D}}}\left[\left|{\rm pos}\left(h\left({\bf X}^0,{\bf Y}^0\right)\right) - {\rm pos}\left(h\left({\bf X}^1,{\bf Y}^1\right)\right)\right|\right] \leq O(\tau). \label{eqn:lemmainstructeqn}
	\end{equation}
\end{lemma}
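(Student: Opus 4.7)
The plan is to couple the paired samples via $\widehat{\mc{D}}_e$ and show that a large, shared, Bernoulli-driven portion of $h$ yields anti-concentration which swamps the much smaller difference between the two samples; the conclusion then follows from Berry-Esseen combined with Chebyshev. Under $\widehat{\mc{D}}_e$, the samples $({\bf X}^0,{\bf Y}^0)$ and $({\bf X}^1,{\bf Y}^1)$ agree on every coordinate except those with $v \in \hat{S}_{v,j}$. I decompose $h({\bf X}^a,{\bf Y}^a) = h_F + h_N + h_D^a$, where $h_F$ collects the contributions fully fixed by $\Gamma_e$ (including $\theta$ and all $B_v$-indexed variables), $h_N$ collects the shared Bernoulli noise variables (for $v \notin \hat{S}_{v,j}$, $\beta_v = b_j$, $i \in \pi_v^{-1}(j)$), and $h_D^a$ captures the sample-specific contribution coming from $v \in \hat{S}_{v,j}$. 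Writing $D := h_D^1 - h_D^0$, a sign flip of $h$ between the two samples requires $h_F + h_N$ to lie in an interval of length $|D|$.

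The key structural input is that the ``regular'' mass $\sum_v \|{\bf c}^{\rm reg}_v\|^2$ is at least $(1-o(1))\sum_v M_v$, where $M_v := \sum_{i \notin B_v}\|{\bf c}_{v,i}\|^2$. Indeed, indices $i$ with $\pi_v(i) \in P \setminus \pi_v(B_v)$ contribute at most $|P|\tau^4 M_v$ by condition~2 (using $|P| \leq 2kK$ and the choice of $\tau$ in \eqref{eqn:param}); and indices $i \notin B_v$ with $\pi_v(i) \in \pi_v(B_v)$ contribute at most $d|\pi_v(B_v)|\cdot d^{-8} M_v$, because niceness of $e$ forces such an $i$ to fail the tail-heavy criterion in the definition of $I_v$ (otherwise $\pi_v$ would not be injective on $I_v$). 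Conditional on $(S_j, S'_j, b_j)_j$, $h_N$ is a sum of independent scaled Bernoulli$(1/2)$ variables with variance $\sigma_N^2 = \tfrac{1}{4}\sum c_{v,i,q}^2$ over the shared-noise coordinates. A Chernoff-Hoeffding argument across the independent blocks (using that each regular coefficient is ``selected'' into $h_N$ with probability $\Pr[b_j = \beta_v]\Pr[v \notin \hat{S}_{v,j}] = \Omega(1)$) then shows $\sigma_N^2 = \Omega(\sum_v M_v)$ with probability $1 - o(1)$.

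For the matching upper bound on $\E[D^2]$, I decompose $D = \sum_j Z_j$ into independent per-block terms. Each non-zero entry of ${\bf X}^a_{v,i}$ or ${\bf Y}^a_{v,i}$ for $v \in \hat{S}_{v,j}$ is uniform on $\{{\bf e}_1,\ldots,{\bf e}_Q\}$, so Cauchy-Schwarz combined with $|\pi_v^{-1}(j)| \leq d$ yields $\E[Z_j^2] = O(d/Q)\sum_{v, i \in \pi_v^{-1}(j)}\|{\bf c}_{v,i}\|^2$. Summing over $j$ (treating the $P$-blocks via condition~2) together with $Q = 16dk$ gives $\E[D^2] = O(1/k)\sigma_N^2$. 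Finally, $\tau$-regularity of $({\bf c}_{v,i})_{i \notin B_v}$ implies that each shared-noise coordinate contributes at most a $\tau$-fraction of $\sigma_N^2$, so Berry-Esseen (Theorem~\ref{thm:berry-ess}) yields $\sup_\alpha \Pr[h_N \in [\alpha, \alpha+\delta]] \leq O(\delta/\sigma_N) + O(\tau^{1/2})$. Combining with Chebyshev applied to $D$, and carefully choosing the threshold separating ``small $|D|$'' from ``large $|D|$'', produces the desired $O(\tau)$ bound on the sign-flip probability.

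\emph{Main obstacle.} The trickiest step is the residual-mass bound at indices $i \notin B_v$ projecting into $\pi_v(B_v)$: this is not controlled by condition~2 and one must rely on niceness combined with the $d^{-8}$ threshold in the definition of $I_v$ to force these indices to be tail-light, then multiply by the preimage bound $d$. A secondary subtlety is matching the Chebyshev and Berry-Esseen estimates so that the final bound is cleanly $O(\tau)$; this requires the tight parameter choices in \eqref{eqn:param} and a careful moment analysis of $D$, since a naive application only gives $O(\tau^{1/2})$ or a weaker polynomial rate.
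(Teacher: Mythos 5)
Your overall plan matches the paper's proof step for step: both arguments (i) deduce from Conditions 1 and 2 plus niceness that the regular mass $\sum_v\|{\bf c}^{\rm reg}_v\|^2$ captures almost all of $\sum_v\sum_{i\notin B_v}\|{\bf c}_{v,i}\|^2$ (this is exactly the paper's Claim~\ref{cl:reg-1}, and you correctly identify that the indices $i\notin B_v$ with $\pi_v(i)\in\pi_v(B_v)$ are controlled by niceness plus the $d^{-8}$ threshold rather than by Condition~2); (ii) establish $\tau'$-regularity of ${\bf c}^{\rm reg}_v$ (Claim~\ref{cl:reg-2}); (iii) apply Chernoff--Hoeffding to show the shared Bernoulli mass is an $\Omega(\zeta)$-fraction of $\|{\bf c}^{\rm reg}\|^2$ whp (Lemma~\ref{lem:noisy-conc}); (iv) apply Berry--Esseen to the shared-noise part for anti-concentration (Lemma~\ref{lem:anti-conc-2}); (v) bound the second moment of $D := h({\bf X}^1,{\bf Y}^1)-h({\bf X}^0,{\bf Y}^0)$; and (vi) combine with Chebyshev. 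So the route is the same, not a different one.

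The concrete gap is in step (v). You state $\E[D^2] = O(1/k)\,\sigma_N^2$, obtained from $\E[Z_j^2] = O(d/Q)\sum_{v,i\in\pi_v^{-1}(j)}\|{\bf c}_{v,i}\|^2$ with $Q = 16dk$. This cancels the $d$ entirely: your second-moment bound is a fixed constant over $k$, independent of the growing parameter $z$ (hence $d = 4^z$). But after choosing the Chebyshev threshold $\eps_0 \sim \tau\sqrt{\zeta}\,\|{\bf c}^{\rm reg}\|$, the sign-flip probability from Chebyshev is $\E[D^2]/\eps_0^2 = \Theta\bigl(1/(k\tau^2\zeta)\bigr)$, and requiring this to be $O(\tau)$ forces $k\tau^3\zeta = \Omega(1)$. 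Since $\tau = (10k\log Q)^{-2}$, the quantity $k\tau^3\zeta$ scales like $1/(k^5(\log(dk))^6)\,\zeta$, which tends to $0$ as $z\to\infty$. So with your stated bound, the Chebyshev term never becomes $O(\tau)$ no matter how large $z$ is, and the argument does not close. The paper instead bounds $\E[D^2] \leq 2\|{\bf c}^{\rm reg}\|^2/\sqrt{Q} = O(\|{\bf c}^{\rm reg}\|^2/\sqrt{dk})$, and the surviving $\sqrt{d}$ in the denominator is what eventually overwhelms $\tau^{-3} = \Theta((k\log(dk))^6)$ as $z$ grows (the Chebyshev probability becomes $\Theta((k\log(dk))^6/(\zeta\sqrt{dk}))$, which indeed tends to $0$). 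In other words, the whole point of taking $Q = 16dk$ is to retain a factor that grows with $z$, and your Cauchy--Schwarz application gives that factor away. The "careful moment analysis" you allude to in your last paragraph is therefore not a detail but an essential step; as written, your $\E[Z_j^2] = O(d/Q)\sum\|{\bf c}_{v,i}\|^2$ is too coarse (one can exploit that $u_{X,j},u_{Y,j}$ pick out a single vertex each, and that the per-$i$ inner products are independent once the vertex is fixed, to shave the $d$ factor appearing in the per-block second moment).

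One small aside in your favor: your Berry--Esseen error estimate $O(\tau^{1/2})$ is what $\tau'$-regularity actually yields for the normalized third/second-moment ratio (since $\|{\bf c}_{v,i}\|\leq\sqrt{\tau'}\,\|{\bf c}^{\rm reg}\|$), and this is still small enough to contradict~\eqref{eqn:diffbd1} given the setting of $\tau$; the paper's statement of $O(\tau')$ at that point appears to be the optimistic version.
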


The rest of this section is devoted to proving the above lemma. To achieve a contradiction, we assume its two conditions.
First we make a couple of easily verifiable observations from condition 1 of the lemma. 

\begin{observation}				\label{obs:weak-index}
For every $j \in P$, there exists a unique vertex $v_j$ such that $B_{v_j} \cap \pi^{-1}_{v_j}(j) \neq \emptyset$.
\end{observation}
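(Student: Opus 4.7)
The plan is to deduce both existence and uniqueness directly from the definition of $P$ together with Condition~1 of Lemma~\ref{lem:main-struct}. There is no real analytic content here; this is a set-theoretic consequence of the ``No Weak Intersections'' hypothesis.

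\textbf{Existence.} Fix $j \in P$. By the definition $P = \bigcup_{v\in e}\pi_v(B_v)$, there must be at least one vertex $v \in e$ with $j \in \pi_v(B_v)$. Unwinding what it means for $j$ to lie in the image $\pi_v(B_v)$, there exists some label $i \in B_v$ such that $\pi_v(i) = j$, i.e.\ $i \in \pi_v^{-1}(j) \cap B_v$. In particular $B_v \cap \pi_v^{-1}(j) \neq \emptyset$, so we may set $v_j := v$.

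\textbf{Uniqueness.} Suppose for contradiction that there are two distinct vertices $u,v \in e$, $u \neq v$, with both $B_u \cap \pi_u^{-1}(j) \neq \emptyset$ and $B_v \cap \pi_v^{-1}(j) \neq \emptyset$. Picking witnesses $i_u \in B_u \cap \pi_u^{-1}(j)$ and $i_v \in B_v \cap \pi_v^{-1}(j)$, we have $\pi_u(i_u) = j = \pi_v(i_v)$, so $j \in \pi_u(B_u) \cap \pi_v(B_v)$. But Condition~1 of Lemma~\ref{lem:main-struct} (No Weak Intersections) asserts $\pi_u(B_u) \cap \pi_v(B_v) = \emptyset$ whenever $u \neq v$, a contradiction. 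Hence the vertex $v_j$ produced above is unique.

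Since both directions are immediate, I do not anticipate any technical obstacle; the only thing to be careful about is that Condition~1 is stated for every pair $u,v \in e$ (including permitting $u = v$ trivially), so the contradiction step genuinely requires the two witnesses to come from distinct vertices, which is exactly the uniqueness claim. The observation will then be used downstream to unambiguously associate each coordinate $j \in P$ with the single ``critical-index'' vertex $v_j$ contributing to that block, which is needed to cleanly decompose the contribution of $h$ across vertices in the remainder of the proof of Lemma~\ref{lem:main-struct}.
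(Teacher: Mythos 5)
Your proof is correct and is exactly the verification the paper has in mind: the paper states this as an "easily verifiable observation from condition 1 of the lemma" without writing it out, and your existence step (unwinding $P=\bigcup_{v\in e}\pi_v(B_v)$) plus uniqueness step (two distinct witnesses would put $j$ in $\pi_u(B_u)\cap\pi_v(B_v)$, contradicting No Weak Intersections for $u\neq v$) is the intended argument. Your side remark is also the right reading: Condition 1 must be interpreted for distinct pairs $u\neq v$, since for $u=v$ it would force $\pi_v(B_v)=\emptyset$.
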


\begin{observation}				\label{fct:bound}
	The set $P$ satisfies $|P| \leq \sum_{v\in e}|B_v| \leq (2k)K$.
\end{observation}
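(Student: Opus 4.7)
The plan is to prove both inequalities by elementary counting, relying on the definitions of $P$ and $B_v$ together with the structural assumption \eqref{eqn:lemstructsetting}. There is no probabilistic content here; this is purely combinatorial bookkeeping, so I do not anticipate any genuine obstacle.

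For the first inequality, I would unpack the definition $P = \bigcup_{v \in e} \pi_v(B_v)$ and apply the union bound on set sizes to get $|P| \leq \sum_{v \in e} |\pi_v(B_v)|$. Since the image of a finite set under any function cannot exceed the size of the set itself, each term satisfies $|\pi_v(B_v)| \leq |B_v|$, which yields $|P| \leq \sum_{v \in e} |B_v|$.

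For the second inequality, I would use the hypothesis \eqref{eqn:lemstructsetting} that $B_v = C_\tau({\bf c}_{X,v}) = C^{\leq K}_\tau({\bf c}_{X,v})$ when $v \in e_X$ and analogously for $v \in e_Y$. By the very definition of $C^{\leq K}_\tau(\cdot)$ in Section \ref{sec:criticalindex}, this set contains at most $K$ indices, so $|B_v| \leq K$ for every $v \in e$. Finally, since $\mc{L}$ is a $2k$-uniform hypergraph we have $|e| = 2k$, and combining these two facts gives $\sum_{v \in e} |B_v| \leq (2k) K$, completing the chain.

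I expect the proof to be a single short paragraph in the paper's final form; the only thing worth flagging to the reader is that the bound $|B_v| \leq K$ genuinely uses the truncation hypothesis $C_\tau = C^{\leq K}_\tau$ from \eqref{eqn:lemstructsetting}, which is exactly what earlier sections (in particular the application of Lemma \ref{lem:trunc}) arranged to hold on the edge $e$ under consideration.
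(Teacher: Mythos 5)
Your argument is correct and is exactly the intended (and essentially only) justification: the paper states this as an easily verifiable observation without proof, and your chain $|P| \leq \sum_{v\in e}|\pi_v(B_v)| \leq \sum_{v\in e}|B_v| \leq (2k)K$ via \eqref{eqn:lemstructsetting}, the definition of $C^{\leq K}_\tau(\cdot)$, and $2k$-uniformity is the right bookkeeping. Your closing remark correctly identifies that the bound $|B_v|\leq K$ is where the truncation hypothesis is actually used.
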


Furthermore, condition $2$ of Lemma \ref{lem:main-struct} yields the following.

\begin{claim}						\label{cl:reg-1}
	For every vertex $v \in e$, $\|{\bf c}^{\rm reg}_v\|^2 \geq (1-\tau)\sum_{i \notin B_v} \|{\bf c}_{v,i}\|^2$.
\end{claim}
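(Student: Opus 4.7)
The plan is to bound the squared mass of the ``excluded'' indices, namely those in $\pi_v^{-1}(P)\setminus B_v$, and show it is at most $\tau\sum_{i\notin B_v}\|\bc_{v,i}\|^2$. Starting from the identity
\begin{equation*}
\sum_{i\notin B_v}\|\bc_{v,i}\|^2 \;=\; \|\bc^{\rm reg}_v\|^2 \;+\; \sum_{j\in P}\sum_{i\in \pi_v^{-1}(j)\setminus B_v}\|\bc_{v,i}\|^2,
\end{equation*}
I would split the $j$-sum according to whether $j\in\pi_v(B_v)$ or $j\in P\setminus\pi_v(B_v)$, and argue each contributes negligibly.

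For $j\in P\setminus\pi_v(B_v)$, the bound is immediate from Condition~2 of Lemma~\ref{lem:main-struct}: since $\pi_v^{-1}(j)\cap B_v=\emptyset$, the inner sum equals $\|\bc^{(j)}_v\|^2 \leq \tau^4\sum_{i\notin B_v}\|\bc_{v,i}\|^2$. For $j\in\pi_v(B_v)$, I would invoke niceness. By assumption \eqref{eqn:lemstructsetting} we have $B_v=C^{\leq K}_\tau(\bc_v)\subseteq I_v(\bc_X,\bc_Y)$, so there is some $i^*\in B_v$ with $\pi_v(i^*)=j$, and $i^*\in I_v$. If some $i\in \pi_v^{-1}(j)\setminus B_v$ satisfied $\|\bc_{v,i}\|^2 > \tfrac{1}{d^8}\sum_{i'\notin B_v}\|\bc_{v,i'}\|^2$, then by definition of $I_v$ in \eqref{eqn:Iv} we would have $i\in I_v$ with $\pi_v(i)=\pi_v(i^*)$, contradicting niceness of $e$. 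Hence every $i\in\pi_v^{-1}(j)\setminus B_v$ has $\|\bc_{v,i}\|^2\leq \tfrac{1}{d^8}\sum_{i'\notin B_v}\|\bc_{v,i'}\|^2$, and since $|\pi_v^{-1}(j)|\leq d$ by the Label Cover property, the inner sum is at most $\tfrac{1}{d^7}\sum_{i'\notin B_v}\|\bc_{v,i'}\|^2$.

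Summing over $j\in P$ and using Observation~\ref{fct:bound} ($|P|\leq 2kK$) gives
\begin{equation*}
\sum_{j\in P}\sum_{i\in \pi_v^{-1}(j)\setminus B_v}\|\bc_{v,i}\|^2 \;\leq\; 2kK\Bigl(\tfrac{1}{d^7}+\tau^4\Bigr)\sum_{i\notin B_v}\|\bc_{v,i}\|^2.
\end{equation*}
Plugging in $K=(20/\tau)\log(Q/\tau)$, $\tau=(10k\log Q)^{-2}$ from \eqref{eqn:param} and $d=4^z$, both $2kK/d^7$ and $2kK\tau^4$ are $o(\tau)$ for sufficiently large $z$, so the RHS is at most $\tau\sum_{i\notin B_v}\|\bc_{v,i}\|^2$. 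Rearranging yields the claim.

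The main obstacle is the $j\in\pi_v(B_v)$ case: one has to notice that niceness gives a nontrivial $\ell_2$ bound on coefficients outside $B_v$ that project to ``used'' labels, via the second and third sets defining $I_v$. The rest is a parameter-chasing exercise using the Label Cover degree bound $d$ and the extravagant choice of $d$ relative to $K$.
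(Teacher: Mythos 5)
Your proposal is correct and matches the paper's own proof essentially step for step: the same decomposition of the excluded squared mass according to whether $j\in P\setminus\pi_v(B_v)$ (handled by Condition 2) or $j\in\pi_v(B_v)$ (handled by niceness via the third/fourth set in the definition of $I_v$, combined with $|\pi_v^{-1}(j)|\leq d$), followed by the same parameter chase showing $2kK(\tau^4 + d^{-7})\leq\tau$ for large enough $z$.
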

\begin{proof}
	For any vertex $v$, we have 
	\begin{equation}
		\sum_{j \in P}\sum_{i \in \pi^{-1}_v(j) \setminus B_v}  \|{\bf c}_{v,i}\|^2 = \sum_{j \in P\setminus \pi_v(B_v)}\sum_{i \in \pi^{-1}_v(j)} \|{\bf c}_{v,i}\|^2 + \sum_{j \in \pi_v(B_v)}\sum_{i \in \pi^{-1}_v(j) \setminus B_v} \|{\bf c}_{v,i}\|^2. \label{eqn:clreg-1}
	\end{equation}
The first term in the RHS of the above can be bounded by,
		\begin{equation}
			\leq \sum_{j \in P}\tau^4 \left(\sum_{i \notin B_v} \|{\bf c}_{v,i}\|^2\right) \leq \tau^4|P| \sum_{i \notin B_v}\|{\bf c}_{v,j}\|^2 \leq  (\tau/2) \sum_{i \in B_v}\|{\bf c}_{v,j}\|^2 
	\end{equation}
	For the second term, observe that by the niceness of $e$ w.r.t. $({\bf c}_X, {\bf c}_Y)$, for any $j \in \pi_v(B_v)$, we have $ |\pi^{-1}_v(j) \cap C^{\leq K}_\tau(\bc)| = 1$, and therefore for any other $i \in \pi^{-1}_v(j) \setminus B_v$ we must have that the value of $\|{\bf c}_{v,j}\|^2$ is at most $(1/d^8) \sum_{i \notin B_v} \|{\bf c}_{v,i}\|^2$. Since the number of such values in the summation is at most $d|B_v| \leq d K$, by setting $z$ in Theorem \ref{thm:slc-hardness} large enough, this summation can be bounded by $(\tau/2) \sum_{i \notin B_v}\|{\bf c}_{v,j}\|^2$. 
	Therefore, 
	\[
		\|{\bf c}^{\rm reg}_v\|^2 = \sum_{i \notin B_v} \|{\bf c}_{v,i}\|^2 - \sum_{j \in P} \sum_{i \in \pi^{-1}_v(j) \setminus B_v} \|{\bf c}_{v,j}\|^2 \geq (1-\tau)\sum_{j \notin B_v} \|{\bf c}_{v,j}\|^2.  
	\]
\end{proof}

\begin{claim}					\label{cl:reg-2}
	For every vertex $v \in e$, the coefficient vector ${\bf c}^{\rm reg}_v$ is $\tau'$-regular where $\tau' = \tau/(1 - \tau)$.
\end{claim}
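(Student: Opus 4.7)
The plan is to combine the $\tau$-regularity of the post-critical-index portion of ${\bf c}_v$ (given by Proposition \ref{prop:crit-ix}) with the lower bound on $\|{\bf c}^{\rm reg}_v\|^2$ that was just established in Claim \ref{cl:reg-1}. Intuitively, passing from $({\bf c}_{v,i})_{i \notin B_v}$ to ${\bf c}^{\rm reg}_v$ only removes coordinates, so the maximum block-norm can only go down, while Claim \ref{cl:reg-1} guarantees that we do not lose more than a $\tau$-fraction of the total squared mass in doing so. This should immediately yield $\tau/(1-\tau)$-regularity.

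More concretely, I would proceed as follows. First, invoke Proposition \ref{prop:crit-ix}: by definition of $B_v = C_\tau({\bf c}_v)$, the vector $({\bf c}_{v,i})_{i \in [M]\setminus B_v}$ is $\tau$-regular, so
\[
\max_{i \in [M]\setminus B_v} \|{\bf c}_{v,i}\|^2 \;\leq\; \tau \sum_{i \in [M]\setminus B_v} \|{\bf c}_{v,i}\|^2.
\]
Next, observe that the index set underlying ${\bf c}^{\rm reg}_v$, namely $[M]\setminus (\pi_v^{-1}(P)\cup B_v)$, is a subset of $[M]\setminus B_v$, so a fortiori
\[
\max_{i \,:\, {\bf c}_{v,i}\tn{ in }{\bf c}^{\rm reg}_v} \|{\bf c}_{v,i}\|^2 \;\leq\; \tau \sum_{i \in [M]\setminus B_v} \|{\bf c}_{v,i}\|^2.
\]
Finally, apply Claim \ref{cl:reg-1}, which says $\|{\bf c}^{\rm reg}_v\|^2 \geq (1-\tau)\sum_{i \notin B_v}\|{\bf c}_{v,i}\|^2$, and thus $\sum_{i \notin B_v}\|{\bf c}_{v,i}\|^2 \leq \|{\bf c}^{\rm reg}_v\|^2/(1-\tau)$. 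Chaining these two bounds gives
\[
\max_{i \,:\, {\bf c}_{v,i}\tn{ in }{\bf c}^{\rm reg}_v} \|{\bf c}_{v,i}\|^2 \;\leq\; \frac{\tau}{1-\tau}\,\|{\bf c}^{\rm reg}_v\|^2 \;=\; \tau'\,\|{\bf c}^{\rm reg}_v\|^2,
\]
which is precisely the $\tau'$-regularity condition (i.e.\ $i_{\tau'}({\bf c}^{\rm reg}_v) = 1$) spelled out in Section \ref{sec:criticalindex}.

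There is no real obstacle here; the claim is essentially a direct consequence of Claim \ref{cl:reg-1} together with Proposition \ref{prop:crit-ix}, and the only subtlety is making sure that one uses the correct definition of $\tau$-regularity (largest block-norm squared at most $\tau$ times the total squared mass) together with the block structure of ${\bf c}_v$, rather than inadvertently arguing about scalar entries. The cost of restricting to the ``regular part'' inside $P$ is exactly the $(1-\tau)$ factor from Claim \ref{cl:reg-1}, which is what accounts for the slight relaxation of the regularity parameter from $\tau$ to $\tau/(1-\tau)$.
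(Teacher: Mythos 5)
Your proof is correct and takes essentially the same approach as the paper's own one-sentence argument: use the $\tau$-regularity of $({\bf c}_{v,i})_{i \notin B_v}$ from Proposition \ref{prop:crit-ix}, observe that ${\bf c}^{\rm reg}_v$ is obtained by deleting some of those blocks, and apply Claim \ref{cl:reg-1} to bound the resulting loss of squared mass. You have merely spelled out the details that the paper leaves implicit.
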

\begin{proof}
	This follows from the fact that $\{{\bf c}_{v,i}\}_{i\not\in B_v}$ is $\tau$-regular, contains all of the ${\bf c}_{v,i}$ constituting ${\bf c}^{\rm reg}_v$, and from Claim \ref{cl:reg-1}.
\end{proof}

The following lemma provides a useful concentration for some $v \in e$ the sum of squared coefficients $\|{\bf c}_{v,i}\|^2$ corresponding to those indices $i$ constituting ${\bf c}^{\rm reg}_v$ s.t. the corresponding variables (${\bf X}_{v,i}$ or ${\bf Y}_{v,i}$ depending on whether $v$ is in $e_X$ or $e_Y$) are sampled u.a.r. from $\{0,1\}^Q$.  

\begin{lemma}				\label{lem:noisy-conc}
	For any vertex $v \in e$, over the choice of $\{b_j, S_j, S'_j\}_{j=1}^m$,
	\begin{equation}
		\Pr\left[\sum_{\substack{j \in [m]\setminus P: (b_j = \beta_v) \\ \wedge(v \not\in \hat{S}_{v,j})}} \ \sum_{i \in \pi_v^{-1}(j)} \|{\bf c}_{v,i}\|^2 \ \leq\ \frac{\zeta}{8}\|{\bf c}^{\rm reg}_v\|^2 \right] \leq \exp\left(-\frac{\zeta^2}{64\tau}\right), \label{eqn:lemnoisy}
	\end{equation}
	where 
\end{lemma}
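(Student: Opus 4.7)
The plan is to express the sum as a sum of independent bounded random variables indexed by $j \in [m]\setminus P$ and apply the Chernoff-Hoeffding inequality (Theorem \ref{thm:chernoff}). For each such $j$ I would define
\[
Z_j := \mathbbm{1}[b_j = \beta_v]\cdot \mathbbm{1}[v \notin \hat{S}_{v,j}] \cdot \|\bc^{(j)}_v\|^2,
\]
so that the target sum equals $Z := \sum_{j \notin P} Z_j$. The $Z_j$ are mutually independent because the triples $(b_j, S_j, S'_j)$ are drawn independently across $j$ in $\mc{D}^e$, and for each $j$ the bit $b_j$ is independent of $\hat{S}_{v,j}$.

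Next I would lower bound $\E[Z]$. For $v \in e_X$ one has $\Pr[b_j = 0] = \zeta$ and $\Pr[v \notin S_j] = (k-t)/k = 3/4$ (using $t = k/4$), giving $\E[Z_j] = (3\zeta/4)\|\bc^{(j)}_v\|^2$; for $v \in e_Y$ the analogous computation yields $(3(1-\zeta)/4)\|\bc^{(j)}_v\|^2 \geq (3\zeta/4)\|\bc^{(j)}_v\|^2$ since $\zeta < 1/2$. Observing that $B_v \subseteq \pi_v^{-1}(P)$ gives
\[
\sum_{j \notin P}\|\bc^{(j)}_v\|^2 \;=\; \sum_{i \notin \pi_v^{-1}(P)}\|\bc_{v,i}\|^2 \;=\; \|\bc^{\rm reg}_v\|^2,
\]
whence $\E[Z] \geq (3\zeta/4)\|\bc^{\rm reg}_v\|^2$.

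The main technical obstacle is to obtain a uniform upper bound $\Delta_j := \|\bc^{(j)}_v\|^2 \leq O(\tau)\|\bc^{\rm reg}_v\|^2$ for every $j \notin P$, since this controls the Chernoff variance proxy $\sum_j \Delta_j^2$. The niceness of $e$ with respect to $(\bc_X, \bc_Y)$ implies that $\pi_v^{-1}(j)$ contains at most one index from $I_v(\bc_X, \bc_Y)$; by the $\tau'$-regularity of $\bc^{\rm reg}_v$ (Claim \ref{cl:reg-2}) this single ``heavy'' index contributes at most $\tau'\|\bc^{\rm reg}_v\|^2 \leq 2\tau\|\bc^{\rm reg}_v\|^2$. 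Every other $i \in \pi_v^{-1}(j)$ lies outside $B_v \cup I_v$, so by the definition of $I_v$ combined with Claim \ref{cl:reg-1}, $\|\bc_{v,i}\|^2 \leq \tfrac{1}{d^8(1-\tau)}\|\bc^{\rm reg}_v\|^2$; since $|\pi_v^{-1}(j)| \leq d$, the remaining contribution is $O(d^{-7})\|\bc^{\rm reg}_v\|^2$, negligible compared to $\tau\|\bc^{\rm reg}_v\|^2$ for the parameters in \eqref{eqn:param}. Thus $\Delta_j \leq 3\tau\|\bc^{\rm reg}_v\|^2$.

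Finally, since $\sum_{j \notin P}\Delta_j = \|\bc^{\rm reg}_v\|^2$, we obtain $\sum_{j \notin P}\Delta_j^2 \leq \max_j \Delta_j \cdot \sum_j \Delta_j \leq 3\tau\|\bc^{\rm reg}_v\|^4$. Applying Theorem \ref{thm:chernoff} with deviation $t = (5\zeta/8)\|\bc^{\rm reg}_v\|^2$ yields
\[
\Pr\left[Z \leq \tfrac{\zeta}{8}\|\bc^{\rm reg}_v\|^2\right] \;\leq\; 2\exp\!\left(-\frac{2(5\zeta/8)^2\|\bc^{\rm reg}_v\|^4}{3\tau\|\bc^{\rm reg}_v\|^4}\right) \;=\; 2\exp\!\left(-\frac{25\zeta^2}{96\tau}\right) \;\leq\; \exp\!\left(-\frac{\zeta^2}{64\tau}\right),
\]
where the last inequality holds for small enough $\tau$ (via \eqref{eqn:param}), absorbing the $\log 2$ slack. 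Without niceness limiting $|\pi_v^{-1}(j)\cap I_v| \leq 1$, the bound on $\Delta_j$ in the third step would scale with $d$, degrading the Chernoff exponent by a factor of $d$ and destroying the clean $1/\tau$ rate.
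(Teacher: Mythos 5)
Your proof is correct and takes essentially the same route as the paper: independent blockwise indicators $Z_j=\phi_j$, the observation $\sum_{j\notin P}\|\mathbf{c}^{(j)}_v\|^2=\|\mathbf{c}^{\rm reg}_v\|^2$ to lower-bound the mean, a variance-proxy bound $\sum_j\Delta_j^2=O(\tau)\|\mathbf{c}^{\rm reg}_v\|^4$ from niceness (at most one heavy index per fiber $\pi_v^{-1}(j)$) combined with the $\tau'$-regularity of $\mathbf{c}^{\rm reg}_v$, and then Chernoff--Hoeffding. The only cosmetic difference is how the variance proxy is obtained: the paper expands each $\Delta_j^2$ into the diagonal sum $\sum_i\|\mathbf{c}_{v,i}\|^4$ plus cross terms and bounds the two pieces separately, whereas you bound $\max_j\Delta_j\leq 3\tau\|\mathbf{c}^{\rm reg}_v\|^2$ uniformly and use $\sum_j\Delta_j^2\leq(\max_j\Delta_j)\sum_j\Delta_j$, which is an equivalent and arguably slightly cleaner packaging of the same estimate.
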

\begin{proof}
	Define the random variable $\phi_j := \mathbbm{1}\{b_j = \beta_v, v \not\in \hat{S}_{v,j}\}\cdot\sum_{i \in \pi_v^{-1}(j)}	\|{\bf c}_{v,i}\|^2$ for each $j \in [m]\setminus P$. Note that $\{\phi_j\}_{j \in [m]\setminus P}$ are independent non-negative random variables and the summation inside the probability expression on the LHS of \eqref{eqn:lemnoisy} is precisely the random variable $\sum_{j \in [m]\setminus P}\phi_j$. Since $\{b_j = \beta_v, v \not\in \hat{S}_{v,j}\}$ occurs with probability $(1 - \zeta)(1 - t/k)$ if $v \in e_Y$ and $\zeta(1 - t/k)$ if $v \in e_X$, letting $\alpha_v := (1 - t/k)(\beta_v\zeta + (1 - \beta_v)(1 - \zeta))$ we obtain,
	\begin{equation}
		\E\left[\sum_{j \in [m]\setminus P}\phi_j\right] = \alpha_v\|{\bf c}^{\rm reg}_v\|^2. \label{eqn:lemnoisy0}
	\end{equation}
	Further, for each $j \in  [m]\setminus P$. 
	\begin{equation}
		(\max \phi_j)^2 = \left(\sum_{i \in \pi_v^{-1}(j)}	\|{\bf c}_{v,i}\|^2\right)^2 
			 = \sum_{i \in \pi_v^{-1}(j)} \|{\bf c}_{v,i}\|^4 + \sum_{\substack{i,i' \in \in \pi_v^{-1}(j) \\ i \neq i'}} \|{\bf c}_{v,i}\|^2 \|{\bf c}_{v,i'}\|^2 \label{eq:lemnoisy1}
	\end{equation}
	The first term on the RHS of \eqref{eq:lemnoisy1} can be upper bounded using the $\tau'$-regularity of ${\bf c}^{\rm reg}_v$ (Claim \ref{cl:reg-2}) as follows:
	\begin{equation}
		\sum_{i \in \pi_v^{-1}(j)} \|{\bf c}_{v,i}\|^4 \ \leq \ \tau'\|{\bf c}^{\rm reg}_v\|^2\sum_{i \in \pi_v^{-1}(j)} \|{\bf c}_{v,i}\|^2. \label{eq:lemnoisy2}
	\end{equation}
	On the other hand using the niceness of $e$ and Claim \ref{cl:reg-1}, we obtain that all $i \in \pi_v^{-1}(j)$ except for at most one satisfy  $\|{\bf c}_{v,i}\|^2 < (1/d^8)\sum_{i \notin B_v} \|{\bf c}_{v,i}\|^2  \leq (2/d^8) \|{\bf c}^{\rm reg}_v\|^2$. Using this, the second term on the RHS of  \eqref{eq:lemnoisy1} is at most,
	\begin{eqnarray}
		\sum_{\substack{i,i' \in \in \pi_v^{-1}(j) \\ i \neq i'}}\left( \max\{\|{\bf c}_{v,i}\|^2 , \|{\bf c}_{v,i'}\|^2\}\cdot\left(2/d^8\right) \|{\bf c}^{\rm reg}_v\|^2\right) & \leq & \frac{4}{d^8}\sum_{i \in \pi_v^{-1}(j)}\left[\|{\bf c}_{v,i}\|^2\sum_{\substack{i' \in \pi_v^{-1}(j) \\ i' \neq i}}\|{\bf c}^{\rm reg}_v\|^2\right] \nonumber \\ 
		&\leq & \left( \frac{4}{d^7}\right)\|{\bf c}^{\rm reg}_v\|^2\sum_{i \in \pi_v^{-1}(j)} \|{\bf c}_{v,i}\|^2.
	\end{eqnarray}
	Combining the above with \eqref{eq:lemnoisy2} and \eqref{eq:lemnoisy1} we obtain,
	\begin{equation}
		\sum_{j \in [m]\setminus P} (\max \phi_j)^2 \leq \left(\tau' + 4/d^7\right) \|{\bf c}^{\rm reg}_v\|^2 \sum_{i \in \pi_v^{-1}(j)} \|{\bf c}_{v,i}\|^2 \leq 3\tau \|{\bf c}^{\rm reg}_v\|^4.
	\end{equation}
	Using the above along with \eqref{eqn:lemnoisy0}, we apply the Chernoff-Hoeffding inequality (Theorem \ref{thm:chernoff}) to the sum  $\sum_{j \in [m]\setminus P}\phi_j$ as follows:
	\begin{equation}
		\Pr\left[\sum_{j \in [m]\setminus P}\phi_j \leq \frac{\alpha_v}{2}\|{\bf c}^{\rm reg}_v\|^2\right]  \leq  2 \tn{exp}\left(\frac{-\alpha_v^2 \|{\bf c}^{\rm reg}_v\|^4}{6\tau \|{\bf c}^{\rm reg}_v\|^4}\right) 
		 \leq  \exp\left(-\frac{\alpha_v^2}{12\tau}\right), 
	\end{equation}
	which using the fact that $\alpha_v > \zeta/2$ (since $\zeta < 1/2$ and $t < k/2$) proves the lemma.
\end{proof}

A repeated application of Lemma \ref{lem:noisy-conc} for each vertex  $v \in e$ along with a union bound yields,
\begin{equation}	
	\Pr\left[\sum_{v\in e}\ \sum_{\substack{j \in [m]\setminus P: (b_j = \beta_v) \\ \wedge(v \not\in \hat{S}_{v,j})}}\ \sum_{i \in \pi_v^{-1}(j)} \|{\bf c}_{v,i}\|^2 \ \leq\ \frac{\zeta}{8}\|{\bf c}^{\rm reg}\|^2 \right] \leq 2k\cdot\exp\left(-\frac{\zeta^2}{64\tau}\right) \label{eqn:noisyall}
\end{equation}
 
Let us define, 
\begin{eqnarray}
	\mc{S}\left({\bf c}_{X}, {\bf c}_{Y}, {\bf X}, {\bf Y}, \{b_j, S_j , S_j'\ :\ j \in [m]\}\right) :=  & &\sum_{v\in e_X}\ \sum_{\substack{j \in [m]\setminus P : v \not\in S_j}} \ \sum_{i \in \pi_v^{-1}(j)} \Langle {\bf c}_{X, v,i}, {\bf X}_{v,i} \Rangle \nonumber \\ & & +  \sum_{v\in e_X}\ \sum_{\substack{j \in [m]\setminus P : v \not\in S'_j}} \ \sum_{i \in \pi_v^{-1}(j)} \Langle {\bf c}_{Y, v,i}, {\bf Y}_{v,i} \Rangle \label{eqn:Lambda}
\end{eqnarray}
The above definition of $\mc{S}$ captures the contribution due to the ${\bf c}_{v, i}$ constituting ${\bf c}^{\rm reg}$ such that the variables (${\bf X}_{v,i}$ or ${\bf Y}_{v,i}$ depending on whether $v$ is in $e_X$ or $e_Y$) are sampled u.a.r. from $\{0,1\}^Q$. We prove the following anti-concentration of $\mc{S}$.
\begin{lemma}					\label{lem:anti-conc-2}
	\begin{eqnarray}
		\Pr_{\mc{D}^e} \left[ \left| \mc{S}\left({\bf c}_{X}, {\bf c}_{Y}, {\bf X}, {\bf Y}, \{b_j, S_j , S_j'\ :\ j \in [m]\}\right) + \theta'\right| \leq \eps_0 \right] \leq  O(\tau) & + & \frac{64\eps_0}{\|{\bf c}^{\rm reg}\|_2\sqrt{\zeta}} \nonumber \\&  & + 2k\cdot\exp\left(-\frac{\zeta^2}{64\tau}\right),
	\end{eqnarray}
	where $\eps_0 \geq 0$ is a constant, and after fixing $\{b_j, S_j, S_j'\}_{j\in [m]}$ $\theta'$ does not depend on the variables ${\bf X}_{v,i}$ and ${\bf Y}_{v,i}$ in \eqref{eqn:Lambda}.
\end{lemma}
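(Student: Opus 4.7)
The plan is to apply a Berry--Esseen-based Gaussian anti-concentration argument to $\mc{S}+\theta'$ after conditioning on $\{b_j,S_j,S_j'\}_{j=1}^m$, with the mass-concentration bound from Lemma \ref{lem:noisy-conc} (in its union-bounded form \eqref{eqn:noisyall}) guaranteeing sufficient variance.

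Fix a realization of $\{b_j,S_j,S_j'\}_{j=1}^m$. Examining Step 4.2 of $\mc{D}_{\sf global}$, the only variables in \eqref{eqn:Lambda} that are not identically $\mathbf{0}$ are (i) the ${\bf X}_{v,i}$ with $v\in e_X$, $v\notin S_j$, $b_j=0$, $j\in[m]\setminus P$, $i\in\pi_v^{-1}(j)$, and (ii) the analogous ${\bf Y}_{v,i}$ for $v\in e_Y$; each such vector is independently uniform on $\{0,1\}^Q$. Centering every $X_{v,i,q}$ or $Y_{v,i,q}$ by $1/2$ shifts $\mc{S}$ by a quantity that depends only on the conditioning (not on the Bernoulli coordinates themselves), and this shift can be absorbed into $\theta'$. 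After this relabeling $\mc{S}+\theta'$ is the sum of a constant and independent centered terms $c_{v,i,q}(X_{v,i,q}-\tfrac12)$ (and their $Y$-analogues) whose total variance is
$$\sigma^2 \;=\; \frac{1}{4}\sum_{v\in e}\ \sum_{\substack{j\in[m]\setminus P:\, b_j=\beta_v\\ v\notin \hat{S}_{v,j}}}\ \sum_{i\in\pi_v^{-1}(j)} \|{\bf c}_{v,i}\|^2.$$

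By \eqref{eqn:noisyall}, there is an event $\mc{G}$ of $\{b_j,S_j,S_j'\}$-probability at least $1-2k\exp(-\zeta^2/(64\tau))$ on which $\sigma^2 \geq (\zeta/32)\|{\bf c}^{\rm reg}\|_2^2$. On $\mc{G}^c$ I bound the target probability trivially by $1$, contributing the third term on the RHS. Conditioned on $\mc{G}$, apply the second form of Theorem \ref{thm:berry-ess} to $(\mc{S}+\theta')/\sigma$: each centered summand is bounded by $|c_{v,i,q}|/2$ with variance $c_{v,i,q}^2/4$, and by the $\tau'$-regularity of ${\bf c}^{\rm reg}$ (Claim \ref{cl:reg-2}) together with $|c_{v,i,q}|\leq\|{\bf c}_{v,i}\|_2 \leq \sqrt{\tau'}\,\|{\bf c}^{\rm reg}\|_2$, the Berry--Esseen parameter $\gamma$ is at most
$$\gamma \;=\; O\!\left(\frac{\max_{v,i,q}|c_{v,i,q}|}{\sigma}\right) \;=\; O\!\left(\sqrt{\tau/\zeta}\right),$$
which is absorbed into $O(\tau)$ under the parameter choices of \eqref{eqn:param}. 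Standard Gaussian anti-concentration then gives
$$\Pr\!\left[|\mc{S}+\theta'|\leq \eps_0 \,\big|\, \mc{G}\right] \;\leq\; \frac{2\eps_0}{\sigma\sqrt{2\pi}} + O(\tau) \;\leq\; \frac{64\,\eps_0}{\|{\bf c}^{\rm reg}\|_2\sqrt{\zeta}} + O(\tau),$$
and combining with the bound on $\Pr[\mc{G}^c]$ finishes the proof.

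The main obstacle is the calibration of the Berry--Esseen error: it relies crucially on ${\bf c}^{\rm reg}$ being $\tau'$-regular, which in turn is guaranteed only because (a) the preceding truncation step enforces $C_\tau({\bf c}_v) = C^{\leq K}_\tau({\bf c}_v)$ and (b) Condition 2 of Lemma \ref{lem:main-struct} rules out any block outside $B_v$ carrying more than a $\tau^4$-fraction of the mass (Claims \ref{cl:reg-1} and \ref{cl:reg-2}). Without both ingredients, a single large coefficient or block would spoil the $\gamma = O(\sqrt{\tau/\zeta})$ bound and destroy the Gaussian approximation.
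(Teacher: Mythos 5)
Your proof follows essentially the same route as the paper's: fix $\{b_j,S_j,S_j'\}$, center the noise Bernoulli coordinates, use \eqref{eqn:noisyall} to lower-bound the total variance by $(\zeta/32)\|{\bf c}^{\rm reg}\|^2$ off a small bad event, invoke the second form of Theorem~\ref{thm:berry-ess} with the max-summand bound coming from Claim~\ref{cl:reg-2}, and then read off Gaussian anti-concentration over a length-$2\eps_0$ window. The decomposition into the three terms on the right-hand side matches the paper exactly (the centered variables ${\bf Z}_{v,i}$, the event $\mc{G}$ of probability $\geq 1 - 2k\exp(-\zeta^2/(64\tau))$, and the CDF-comparison step).

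One small caveat on a bookkeeping point: you compute the Berry--Esseen parameter as $O(\sqrt{\tau/\zeta})$ (which is the correct value, since $\tau'$-regularity gives $\|{\bf c}_{v,i}\|_2 \leq \sqrt{\tau'}\,\|{\bf c}^{\rm reg}_v\|_2$, so the normalized third-to-second-moment ratio is $O(\sqrt{\tau'/\zeta})$), and then assert this is ``absorbed into $O(\tau)$''. As stated that absorption does not hold, since $\sqrt{\tau/\zeta} \gg \tau$ for small $\tau$ and constant $\zeta$. Interestingly, the paper's own write-up has a mirror-image of the same slip: it states the max ratio as $\tau'\|{\bf c}^{\rm reg}\|$ rather than $\sqrt{\tau'}\|{\bf c}^{\rm reg}\|$, which is what lets it write $O(\tau')=O(\tau)$. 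The honest bound is $O(\sqrt{\tau})$. This discrepancy is harmless: in the application (Lemma~\ref{lem:dec-struct}), the paper only needs the left-hand side of \eqref{eqn:lemmainstructeqn} to be $o(\nu/\ell - \tau^{1/4})$, and $O(\sqrt{\tau})$ serves just as well as $O(\tau)$ once $\tau$ is pushed small via the choice of $z$. So your proof is correct in substance, modulo replacing ``$O(\tau)$'' with ``$O(\sqrt{\tau})$'' in the Berry--Esseen error, which then requires the same cosmetic change in the statement of this lemma and in Lemma~\ref{lem:main-struct}, with no downstream damage.
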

\begin{proof}
	Define ${\bf Z}_v := {\bf X}_v - \frac{1}{2}{\bf 1}$ if $v \in e_X$ and ${\bf Y}_v - \frac{1}{2}{\bf 1}$ otherwise. Note that for any $v$, $j \in [m]\setminus P$ s.t. $b_j = \beta_v$ and $v \not\in \hat{S}_{v,j}$, and $i \in \pi_v^{-1}(j)$, the vector ${\bf Z}_{v,i}$ is uniformly sampled from $\{-1/2,1/2\}^Q$. Further we have,
	\begin{eqnarray}
		\mc{S}\left({\bf c}_{X}, {\bf c}_{Y}, {\bf X}, {\bf Y}, \{b_j, S_j , S_j'\}_{j \in [m]}\right) + \theta'
		& = & 
		\sum_{v\in e}\sum_{\substack{j \in [m]\setminus P: (b_j = \beta_v) \\ \wedge(v \not\in \hat{S}_{v,j})}} \ \sum_{i \in \pi_v^{-1}(j)} \Langle {\bf c}_{v,i}, {\bf Z}_{v,i} \Rangle \  + \Theta \label{eqn:anticoncshift} \\
		 & = & \sum_{v\in e}\sum_{\substack{j \in [m]\setminus P: (b_j = \beta_v) \\ \wedge(v \not\in \hat{S}_{v,j})}} \ \sum_{i \in \pi_v^{-1}(j)}\sum_{q\in[Q]}c_{v,i,q}Z_{v,i,q}\  +  \Theta, \label{eqn:anticoncshift2} 
	\end{eqnarray}
	where $\Theta$ is a fixed constant after fixing $\{b_j, S_j\}_{j\in [m]}$ along with any value of $\theta'$, and randomizing only on the values of the ${\bf Z}_{v,i}$ appearing on the RHS of \eqref{eqn:anticoncshift}. Any product $c_{v,i,q} Z_{v,i,q}$ on the RHS of \eqref{eqn:anticoncshift2} is a mean zero random variable with variance $c_{v,i,q}^2/4$ and third moment $|c_{v,i,q}^3|/8$.  The maximum over all the random variables $c_{v,i,q} Z_{v,i,q}$ of the ratio of its third moment and second moments is the maximum value of $|c_{v,i,q}|/2$ over $(v,i,q)$ appearing in  \eqref{eqn:anticoncshift2}. By Claim \ref{cl:reg-2} this is at most $\tau'\|{\bf c}^{\rm reg}\|$. Further, the sum of variances $\gamma$ is (by \eqref{eqn:noisyall}) at least $(\frac{\zeta}{32})\|{\bf c}^{\rm reg}\|^2$ except with probability  $2k\cdot\exp\left(-\zeta^2/(64\tau)\right)$ over the choice of $\{b_j, S_j, S'_j\}_{j\in [m]}$. Assuming this and applying the Berry-Esseen theorem (Theorem \ref{thm:berry-ess}) we obtain that the CDF $F$ of the RHS of \eqref{eqn:anticoncshift} excluding $\Theta$, satisfies $\left|F(x) - \Phi_\gamma(x)\right| \leq O(\tau') = O(\tau)$ for any $x \in (-\infty,\infty)$, where $\Phi_\gamma$ is the CDF of the normal distribution $N(0, \gamma)$. In this case, the LHS of \eqref{eqn:anticoncshift} lies in a fixed interval of length $2\eps_0$ with probability at most $O(\tau) + 16\eps_0/(\sqrt{\zeta} \|{\bf c}^{\rm reg}\|)$. Losing additional probability of $2k\cdot\exp\left(-\zeta^2/(64\tau)\right)$ for our assumption by \eqref{eqn:noisyall} completes the proof of Lemma \ref{lem:anti-conc-2}.
\end{proof}

\subsection{Proof of Lemma \ref{lem:main-struct}}

	Our first goal is to bound the variance of, $\left|h({\bf X}^1, {\bf Y}^1) - h({\bf X}^0, {\bf Y}^0)\right|$ under the distribution  $\widehat{\mc{D}}$. Let the difference for each $j \in [m]$ be represented by $\Delta_j$ so that $\Ex\left[h({\bf X}^1, {\bf Y}^1) - h({\bf X}^0, {\bf Y}^0)\right] = \E\left[\sum_j \Delta_j\right]$. Observe that $\E\left[\Delta_j\right] = 0$ due to matching expectations of every variable under $\widehat{\mc{D}}$. Explicitly,
\begin{eqnarray}
	\Delta_j & := & \sum_{v \in e_X}\sum_{i \in \pi_v^{-1}(j)\setminus B_v}\mathbbm{1}\{v = u_{X, j}\}\Langle {\bf c}_{X, v, i}, {\bf X}^1_{v, i}\Rangle + \sum_{v \in e_Y}\sum_{i \in \pi_v^{-1}(j)\setminus B_v}\mathbbm{1}\{v = u_{Y, j}\}\Langle {\bf c}_{Y, v, i}, {\bf Y}^1_{v, i} \Rangle   \nonumber \\
	& & - \sum_{v \in e_X}\sum_{i \in \pi_v^{-1}(j)\setminus B_v}\mathbbm{1}\{v \in T_j\}\Langle {\bf c}_{X, v, i}, {\bf X}^0_{v, i}\Rangle + \sum_{v \in e_Y}\sum_{i \in \pi_v^{-1}(j)\setminus B_v}\mathbbm{1}\{v \in T'_j\}\Langle {\bf c}_{Y, v, i}, {\bf Y}^0_{v, i}\Rangle.
\end{eqnarray}
Note that the ${\bf X}^1_{v, i}, {\bf X}^0_{v, i}, {\bf Y}^1_{v, i}$ or ${\bf Y}^0_{v, i}$ appearing in the above expression are independent random variables sampled u.a.r. from $\{{\bf e}_1, \dots, {\bf e}_Q\}$. Therefore, 
\begin{equation}
	\E\left[\Langle {\bf c}_{X, v, i}, {\bf X}^1_{v, i}\Rangle^2\right] = \|{\bf c}_{X, v, i}\|^2/Q, \label{eqn:eqprodexp}
\end{equation}
and similarly for $\Langle {\bf c}_{X, v, i}, {\bf X}^0_{v, i}\Rangle$, $\Langle {\bf c}_{Y, v, i}, {\bf Y}^1_{v, i}\Rangle$, and $\Langle {\bf c}_{Y, v, i}, {\bf Y}^0_{v, i}\Rangle$. 
There are at most $4kd$ inner products appearing in the above expression for $\Delta_j$, with at most two corresponding to each $(v, i)$ s.t. $v \in e, i \in   \pi_v^{-1}(j)\setminus B_v$. Since $\E\left[\Delta_j\right] = 0$, we have $\tn{Var}[\Delta_j] = \E[\Delta_j^2]$ which, by Cauchy-Schwartz and \eqref{eqn:eqprodexp} is at most,
$$\left(\sqrt{4kd}\right)\cdot2\sum_{v \in e}\sum_{i \in \pi_v^{-1}(j)\setminus B_v}\left(\frac{\|{\bf c}_{v, i}\|^2}{Q}\right) = \frac{1}{\sqrt{Q}}\sum_{v \in e}\sum_{i \in \pi_v^{-1}(j)\setminus B_v}\|{\bf c}_{v, i}\|^2,$$
where the inequality is implied by the setting of $Q$ in \eqref{eqn:param}.
Summing up over all $j \in [m]$ and using independence of $\{\Delta_j, j \in [m]\}$, we obtain that,
\begin{eqnarray}
	\tn{Var}\left[\left|h({\bf X}^1, {\bf Y}^1) - h({\bf X}^0, {\bf Y}^0)\right|\right] & \leq & \frac{1}{\sqrt{Q}}\sum_{j\in [m]}\sum_{v \in e}\sum_{i \in \pi_v^{-1}(j)\setminus B_v}\|{\bf c}_{v, i}\|^2 \nonumber \\
	& \leq & \frac{1}{\sqrt{Q}}\sum_{v \in e}\sum_{i \notin  B_v}\|{\bf c}_{v, i}\|^2 \nonumber \\
	& \leq & \frac{1}{\sqrt{Q}} (1 - \tau)^{-1}\sum_{v \in e}\|{\bf c}^{\rm reg}_v\|^2 \leq \frac{2\|{\bf c}^{\rm reg}\|^2}{\sqrt{Q}} \label{eqn:varbound0}
\end{eqnarray}
 where the second last inequality follows from applying Claim \ref{cl:reg-1} on each inner summation. Applying Chebyshev's inequality we obtain for any $\eps_0 > 0$,
	\begin{equation}
		\Pr_{\widehat{\mc{D}}}\left[\left|h({\bf X}^1, {\bf Y}^1) - h({\bf X}^0, {\bf Y}^0)\right| > \eps_0  \right] < \frac{2\|{\bf c}^{\rm reg}\|^2}{\eps_0^2\sqrt{Q}}. \label{eqn:cheby}
	\end{equation}

To complete the proof of Lemma \ref{lem:main-struct}, observe that under $\widehat{\mc{D}}$, 
$${\bf X}^1_{v, i} = {\bf X}^0_{v, i} =: {\bf X}_{v, i} \overset{\rm u.a.r}{\sim} \{0,1\}^Q, \forall v \in e_X, j \in [m]\setminus P : b_j = 0, v \not\in S_j, i \in \pi_v^{-1}(j),$$
$${\bf Y}^1_{v, i} = {\bf Y}^0_{v, i} =: {\bf Y}_{v, i} \overset{\rm u.a.r}{\sim} \{0,1\}^Q, \forall v \in e_Y, j \in [m]\setminus P : b_j = 1, v \not\in S'_j, i \in \pi_v^{-1}(j).$$
Thus, under  $\widehat{\mc{D}}$
\begin{eqnarray}
	& & h({\bf X}^0, {\bf Y}^0) = \mc{S}\left({\bf c}_{X}, {\bf c}_{Y}, {\bf X}, {\bf Y}, \{b_j, S_j , S_j'\ :\ j \in [m]\}\right) + \theta'  \label{eqn:arrange1} \\
	&\Rightarrow & h({\bf X}^1, {\bf Y}^1) = \mc{S}\left({\bf c}_{X}, {\bf c}_{Y}, {\bf X}, {\bf Y}, \{b_j, S_j , S_j'\ :\ j \in [m]\}\right) + \theta' + h({\bf X}^1, {\bf Y}^1) -  h({\bf X}^0, {\bf Y}^0), \label{eqn:arrange2}
\end{eqnarray}
where $\theta'$ does not depend on the variables ${\bf X}_{v,i}$ and ${\bf Y}_{v,i}$ appearing in the expression \eqref{eqn:Lambda} for $\mc{S}$ after fixing $\{b_j, S_j, S_j'\}_{j\in [m]}$. The equations \eqref{eqn:arrange1} and \eqref{eqn:arrange2} also imply that (for some constant $\eps_0$ that we shall choose shortly),
\begin{eqnarray}
	& & \Pr_{\widehat{\mc{D}}}\left[{\rm pos}(h({\bf X}^1, {\bf Y}^1)) \neq {\rm pos}(h({\bf X}^0, {\bf Y}^0))\right] \nonumber \\
	& \leq & \Pr_{\widehat{\mc{D}}}\left[\left|\mc{S}\left({\bf c}_{X}, {\bf c}_{Y}, {\bf X}, {\bf Y}, \{b_j, S_j , S_j'\}_{j \in [m]}\right) + \theta' \right| \leq \left|h({\bf X}^1, {\bf Y}^1) -  h({\bf X}^0, {\bf Y}^0)\right|\right] \nonumber \\
	&\leq & \Pr_{\widehat{\mc{D}}}\left[\neg \left(\left(\left|\mc{S}\left({\bf c}_{X}, {\bf c}_{Y}, {\bf X}, {\bf Y}, \{b_j, S_j , S_j\}_{j \in [m]}\right) + \theta' \right| \geq \eps_0\right)\bigwedge \left( \left|h({\bf X}^1, {\bf Y}^1) -  h({\bf X}^0, {\bf Y}^0)\right| \leq \eps_0\right)\right)\right] \nonumber \\ 
	& \leq & \Pr_{\widehat{\mc{D}}}\left[\left(\left|\mc{S}\left({\bf c}_{X}, {\bf c}_{Y}, {\bf X}, {\bf Y}, \{b_j, S_j , S_j'\}_{j \in [m]}\right) + \theta' \right| \leq \eps_0\right)\bigvee \left( \left|h({\bf X}^1, {\bf Y}^1) -  h({\bf X}^0, {\bf Y}^0)\right| \geq\eps_0\right)\right] \nonumber \\ 
	& \leq & \Pr_{\widehat{\mc{D}}}\left[\left|\mc{S}\left({\bf c}_{X}, {\bf c}_{Y}, {\bf X}, {\bf Y}, \{b_j, S_j , S_j'\}_{j \in [m]}\right) + \theta' \right| \leq \eps_0\right] + \Pr_{\widehat{\mc{D}}}\left[\left|h({\bf X}^1, {\bf Y}^1) -  h({\bf X}^0, {\bf Y}^0)\right| \geq\eps_0\right]
\end{eqnarray}
Choosing $\eps_0$ to be $\tau\sqrt{\zeta}\|{\bf c}^{\rm reg}\|/64$, and applying  Lemma \ref{lem:anti-conc-2} and \eqref{eqn:cheby} we obtain that the LHS of \eqref{eqn:lemmainstructeqn} is bounded by,
$$O(\tau) + 2k\cdot\exp\left(-\frac{\zeta^2}{64\tau}\right) + O\left(\left(\tau^2\zeta\sqrt{Q}\right)^{-1}\right),$$
which by our setting of the parameters in \eqref{eqn:param} is  $O(\tau)$ for a large enough $z$ in Theorem \ref{thm:slc-hardness}, completing the proof.

\bibliographystyle{alpha}
\bibliography{and-bib}

\appendix

\section{Blockwise Small Ball Probability}			\label{sec:LO}

Here we prove the following extension of the Littlewood-Offord-Erd\H os lemma. 

\begin{lemma}			\label{lem:block-LO}
	Let $\bc_1,\bc_2,\ldots,\bc_T \in \mathbbm{R}^Q$ be such that $\|\bc_1\| \geq \|\bc_2 \| \geq \cdots \geq \|\bc_T\|$. Furthermore, let $\bX_1,\bX_2,\ldots,\bX_T$ be $Q$-dimensional Bernoulli vector random variables i.e.,  $\bX_i \overset{\rm u.a.r}{\sim}  \{0,1\}^Q$ for every $i \in [T]$. Then
		\[
		\sup_{\theta \in \mathbbm{R}}\Pr\left[\left|\sum_{i \in [T]} \langle \bc_i,\bX_i \rangle + \theta \right| \leq \frac{\|\bc_T\|}{T^{1/2}}\right] \leq O(T^{-1/2})
		\] 
\end{lemma}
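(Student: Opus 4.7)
The plan is to split the analysis based on how much coefficient mass lies in individual coordinates exceeding the anti-concentration scale $\lambda := \|\bc_T\|/\sqrt{T}$. For each block $i \in [T]$, set $N_i := \{q \in [Q] : |c_{i,q}| \geq \lambda\}$ and $N := \sum_{i=1}^T |N_i|$. I would dichotomize on $N$: handling $N \geq T$ via the Littlewood-Offord-Erd\H os Lemma \ref{lem:LO} and $N < T$ via Berry-Esseen (Theorem \ref{thm:berry-ess}). Observe that only the hypothesis $\|\bc_i\| \geq \|\bc_T\|$ for every $i$ is needed; the full ordering of the $\|\bc_i\|$'s plays no role.

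If $N \geq T$, I would first condition on every $X_{i,q}$ with $|c_{i,q}| < \lambda$. The residual random sum $\sum_{(i,q) : q \in N_i} c_{i,q} X_{i,q} + \theta'$ is a Bernoulli sum of $N \geq T$ independent terms whose coefficients all have magnitude at least $\lambda$. Rescaling by $\lambda^{-1}$ (so the coefficients are $\geq 1$ and the target window becomes $[-1,1]$) and invoking Lemma \ref{lem:LO} bounds the conditional probability by $C/\sqrt{N} \leq C/\sqrt{T}$; this uniform bound survives expectation over the conditioning.

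If $N < T$, then by pigeonhole at least $T_0 := T - N \geq 1$ blocks satisfy $N_i = \emptyset$; call these the \emph{spread} blocks and denote their index set by $I_{\mathrm{sp}}$. In any spread block every $|c_{i,q}| < \lambda$, while the constraint $\|\bc_i\| \geq \|\bc_T\|$ guarantees a within-block variance contribution of at least $\|\bc_T\|^2/4$ to the spread sum $S_{\mathrm{sp}} := \sum_{i \in I_{\mathrm{sp}}} \langle \bc_i, \bX_i \rangle$. Thus $\sigma_{\mathrm{sp}}^2 := \mathrm{Var}(S_{\mathrm{sp}}) \geq T_0 \|\bc_T\|^2/4 \geq \|\bc_T\|^2/4$, and every coefficient appearing in $S_{\mathrm{sp}}$ is strictly below $\lambda$. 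Conditioning on the non-spread variables and applying Berry-Esseen to the standardized $(S_{\mathrm{sp}} - \mathbb{E}\, S_{\mathrm{sp}})/\sigma_{\mathrm{sp}}$, the Berry-Esseen error is $O(\max_{(i,q)\text{ spread}} |c_{i,q}|/\sigma_{\mathrm{sp}}) = O(\lambda/\sigma_{\mathrm{sp}})$, and the approximating Gaussian places mass $O(\lambda/\sigma_{\mathrm{sp}})$ in any length-$2\lambda$ window. Both are $O(\lambda/\|\bc_T\|) = O(1/\sqrt{T})$, so the conditional probability is $O(1/\sqrt{T})$, and averaging finishes the case.

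The main obstacle is that a single direct application of Berry-Esseen to the full sum $S$ fails when the coefficients are imbalanced --- for instance if $\bc_1$ is much larger than $\bc_T$, the ratio $\max|c_{i,q}|/\sigma$ can be constant, ruining the error term. The dichotomy sidesteps this precisely because in the imbalanced regime at least $T$ individually large coordinates must exist and Littlewood-Offord succeeds, while in the balanced regime the block-wise minimum $\|\bc_i\| \geq \|\bc_T\|$ forces an entirely spread block whose variance is automatically big enough and whose internal coefficients are automatically small enough for Berry-Esseen to deliver the target bound. The slack $T_0 \geq 1$ in Case B looks tight but is sufficient because $\lambda/\|\bc_T\|$ already equals $1/\sqrt{T}$.
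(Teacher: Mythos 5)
Your proof is correct, and it uses the same two tools and the same scale as the paper --- Littlewood--Offord--Erd\H{o}s in the ``many large coordinates'' regime, Berry--Esseen on a spread block otherwise, with the large/small cutoff at $\|\bc_T\|/\sqrt{T}$ --- but the dichotomy itself is organized differently. The paper splits on whether $\min_{i}\max_{q}|c_{i,q}|$ is below or above $\|\bc_T\|/\sqrt{T}$: if below, some block $i^*$ has \emph{all} coordinates small and Berry--Esseen is applied to that single block (using $\|\bc_{i^*}\|\geq\|\bc_T\|$); if above, every block contributes at least one large coordinate and LO is applied to exactly $T$ terms, one picked from each block. You instead split on the total count $N$ of large coordinates versus $T$: when $N\geq T$ you apply LO to \emph{all} $N$ large coordinates (conditioning on the rest), and when $N<T$ a pigeonhole argument produces spread blocks, to whose union you apply Berry--Esseen. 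The two partitions of coefficient vectors are genuinely different (e.g.\ one block with $Q\geq T$ large entries and the rest spread lands in your Case A but in the paper's Case (i)), yet each branch pairs correctly with its own tool. Your Case-A analysis is slightly cleaner in that it avoids choosing one coordinate per block, while your Case-B analysis pools variance across all spread blocks rather than relying on a single one; neither change affects the final $O(T^{-1/2})$ bound, and both versions rely only on $\min_i\|\bc_i\|\geq\|\bc_T\|$, as you observe.
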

\begin{proof}
	Let $\eta > 0$ be a quantity which is fixed later. For any $i \in [T]$, we write $\bc_{i} = \left(c_{i,q}\right)_{r \in [Q]}$. We consider the following two cases. 
	
	{\bf Case (i)}\[ \min_{i\in [T]}\max_{q \in [Q]} |c_{i,q}| \leq \eta\|\bc_T\|\] Then there exists $i^* \in [T]$ such that for every choice of $q \in [Q]$, we have
	 $|c_{i^*,q}| \leq \eta\|c_K\| \leq \eta \|\bc_{i^*}\|$. For every $q \in [Q]$, define random variable $Z_q = c_{i^*,q}\left(X_{i^*,q}- \frac12 \right)$, and let $\sigma^2_q := \Ex Z^2_{q}$. Furthermore, define $\sigma^2 := \sum_{q \in [Q]} \sigma^2_r$. It is easy to verify that $\sigma^2_r = c^2_{i^*,q}/2$ and $\sigma = \|\bc_{i^*}\|/\sqrt{2}$.
	
	Again for every choice of $q \in [Q]$, define $\tilde{Z}_q = Z_q/\sigma$. Then by construction we have (i) $\Ex[Z_q] = 0$ for every $q \in [Q]$ (ii) $\Ex\sum_{q \in [Q]} \tilde{Z}^2_q= 1$ and (iii) $\Ex \sum_{q \in [Q]} |\tilde{Z}_q|^3 \leq \eta$. Therefore, using the Berry Esseen Theorem (Theorem \ref{thm:berry-ess}), for any interval $I \subset \mathbbm{R}$ we have 
	\[
	\Pr\left[\left|\sum_{{q \in [Q]}}\tilde{Z}_q \right| \in I\right] \leq O(\eta) + \Pr_{g \sim N(0,1)} \Big[g \in I\Big] \leq O(\eta) + |I|
	\]
	Rolling back the sequence of transformations, for any fixing of $\left(\bX_j\right)_{j \neq i^*}$ and any choice of $\theta \in \mathbbm{R}$, we get that 
	\[
	\Pr_{\bX_{i^*}}\left[\left| \sum_{i \neq i^*} \langle \bc_i, \bX_i \rangle + \langle \bc_{i^*}, \bX_{i^*} \rangle + \theta \right| \leq  \eta\|\bc_T\|\right] \leq O(\eta) + \frac{\eta\|\bc_T\|}{\|\bc_{i^*}\|} \leq O(\eta)
	\]
	which gives us the bound for this case.
	
	{\bf Case (ii)}\[ \min_{i\in [T]}\max_{q \in [Q]} |c_{i,r}| > \eta\|\bc_K\|\] Here, for every choice of $i \in [T]$, there exists $q_i \in [Q]$ such that $|c_{i,q_i}| \geq \eta\|\bc_{T}\|$. Then using Littlewood-Offord-Erd\H os Lemma (Lemma \ref{lem:LO}), for any $\theta \in \mathbbm{R}$ we get that 
	\begin{eqnarray*}
	\Pr_{(X_{i,q_i})^{T}_{i = 1}}\left[\left|\sum_{i \in [T]} c_{i,q_i}X_{i,q_i} + \theta \right| \leq \eta\|\bc_{T}\| \right]
	&=& \Pr_{(X_{i,q_i})^{T}_{i = 1}}\left[\left|\sum_{i \in [T]} \left(\frac{c_{i,q_i}}{\eta\|\bc_T\|}\right)X_{i,q_i} + \frac{\theta}{\eta\|\bc_T\|} \right| \leq 1 \right] \\
	&\leq& O(T^{-1/2})
	\end{eqnarray*} 
	Since the above bound holds independent of the realization of $\left(X_{i,\neq r(i)}\right)^{T}_{i = 1}$ we have 
	\begin{eqnarray*}
	\Pr_{(\bX_{i})^{T}_{i = 1}}\left[\left|\sum_{i \in [T]} \langle \bc_{i},\bX_{i}\rangle + \theta \right| \leq \eta\|\bc_{T}\| \right] \leq O(T^{-1/2})
	\end{eqnarray*} 
	for any fixed choice of $\theta \in \mathbbm{R}$. Combining the two cases and setting $\eta = 1/\sqrt{T}$ completes the proof of the lemma.
	\end{proof}

\end{document}